\documentclass[acmtosn]{acmtrans2m}

\usepackage{epsfig,graphicx}
\usepackage{amsmath}
\usepackage{amssymb}
\usepackage{array}
\usepackage{url}
\usepackage{theorem}

\newcommand{\mysum}[3]{{\overset {#3}{\underset{#1=#2}\sum}}}  
  
\newcommand{\prob}[1]{\mathsf{P}\left(#1\right)}
\newcommand{\EXP}[1]{\mathsf{E}\!\left[#1\right]}
\newcommand{\MPA}{\min_{\Pi_{\alpha}}}

\newcommand{\nn}{\nonumber \\}
{\theorembodyfont{\rmfamily}
\newtheorem{remarks}{Remark}[section]} 
\newtheorem{theorem}{Theorem}
\newtheorem{lemma}{Lemma}

\graphicspath{{../figures/}{../plots/}}

\title{Delay Optimal Event Detection on Ad Hoc Wireless Sensor
Networks}

\author{K.~Premkumar, Venkata~K.~Prasanthi~M., and Anurag~Kumar\\ \\
        Dept.\ of Electrical Communication Engineering,\\ 
        Indian Institute of Science, Bangalore, INDIA\\ 
        email: kprem@ece.iisc.ernet.in,
               prasanthi.m@gmail.com, 
               anurag@ece.iisc.ernet.in}  

\begin{abstract}
We consider a small extent sensor network for event detection, in which
nodes take samples periodically and then contend over a {\em random
access network} to transmit their measurement packets to the fusion
center. We consider two procedures at the fusion center to process the
measurements. The Bayesian setting is assumed; i.e., the fusion center
has a prior distribution on the change time. In the first procedure, the
decision algorithm at the fusion center is \emph{network--oblivious} and
makes a decision only when a complete vector of measurements taken at a
sampling instant is available. In the second procedure, the decision
algorithm at the fusion center is \emph{network--aware} and processes
measurements as they arrive, but in a time causal order. In this case,
the decision statistic depends on the network delays as well, whereas in
the network--oblivious case, the decision statistic does not depend on
the network delays. This yields a Bayesian change detection problem with
a tradeoff between the random network delay and the decision delay; a
higher sampling rate reduces the decision delay but increases the random
access delay. Under periodic sampling, in the network--oblivious case,
the structure of the optimal stopping rule is the same as that without
the network, and the optimal change detection delay decouples into the
network delay and the optimal decision delay without the network. In the
network--aware case, the optimal stopping problem is analysed as a
partially observable Markov decision process, in which the states of the
queues and delays in the network need to be maintained. A sufficient
statistic for decision is found to be the network--state and the
posterior probability of change having occurred given the measurements
received and the state of the network. The optimal regimes are studied
using simulation.
   
\end{abstract}

\category{C.2.3}{Computer-Communication Networks}
{Network Operations}[Network monitoring]

\category{I.2.8}{Artificial Intelligence}
{Problem Solving, Control Methods, and Search}[Control theory]

\terms{Algorithms, Design, Performance}

\keywords{Optimal change detection over a network, detection delay,
cross--layer design of change detection}  

\begin{document}

\setcounter{page}{1}

\begin{bottomstuff}
This is an expanded version of a paper that was presented in IEEE
SECON 2006. This work was supported in part by grant number 2900 
IT from the Indo-French Center for the Promotion of Advanced Research 
(IFCPAR), and in part by a project from DRDO, Government of India.
\end{bottomstuff}

\maketitle

\section{Introduction}
\label{sec:introduction}

A wireless sensor network is formed by tiny, untethered devices
(``motes'') that can sense, compute and communicate. Sensor networks 
have a wide range of applications such as environment monitoring, 
detecting events, identifying locations of survivors in building 
and train disasters, and intrusion detection for defense and security 
applications. For factory and building automation applications, there 
is increasing interest in replacing wireline sensor networks with 
wireless sensor networks, due to the potential reduction in costs 
of engineering, installation, operations, and 
maintenance~\cite{honeywell-site}~\cite{isa-site}.

\begin{figure}[t]
   \centering \
   \psfig{figure=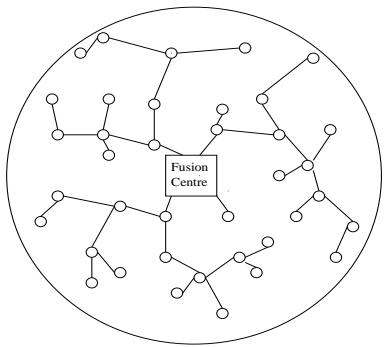,width=5cm,height=4.5cm}
    \caption{An ad hoc wireless sensor network with a fusion center is
     shown. The small circles are the sensor nodes (``motes''), and the
     lines between them indicate wireless links obtained after a
     self-organization procedure.}
   \label{fig:wsn_mesh}
\end{figure}

Event detection is an important task in many sensor network
applications. In general, an event is associated with a change in the
distribution of a related quantity that can be sensed. For example, the
event of a gas leakage at any joints in a pipe causes a change in the
distribution of pressure at the joint and hence can be detected with the
help of pressure sensors. In this paper, we limit our discussion to the
centralized fusion model (see Figure~\ref{fig:wsn_mesh}), in which each
mote, in an event detection network, senses and sends some function of
its observations (e.g., quantized samples) to the fusion center at a
particular rate. The fusion center, by appropriately processing the
sequence of values it receives, makes a decision regarding the state of
nature, i.e., it decides whether a change has occurred or not.

Our problem is that of minimizing the mean detection delay (the delay
between the event occurring and the detection decision at the fusion
center) with a bound on the probability of false alarm. We consider
\emph{a small extent network} in which all the sensors have {\em the
same coverage}, i.e., when the change in distribution occurs it is
observed by all the sensors and the statistics of the observations are
the same at all the sensors. $N$ sensors \emph{synchronously} sample
their environment at a particular rate. Synchronized operation across
sensors is practically possible in networks such as 802.11 WLANs and
Zigbee networks since the access point and the PAN coordinator,
respectively, transmit beacons that provide all nodes with a time
reference. Based on the measurement samples, the nodes send certain
values (e.g., quantized samples) to the fusion center. Each value is
carried by a packet, which is transmitted using a contention--based
multiple access mechanism. Thus, our small extent network problem is a
natural extension of the standard change detection problem
(see~\cite{veeravalli01decentralized-quickest} and the references
therein) to detection over a random access network. The problem of
quickest event detection problem in a large extent network (where the
region of interest is much larger than the sensing coverage of any
sensor) is considered by us in \cite{premkumar-etal09distributed-det}.
Also, a small extent network can be thought of as a cluster in a large
extent network and that the decision maker can be thought of as a
cluster head.

\begin{figure}[t]
   \centering \
   \psfig{figure=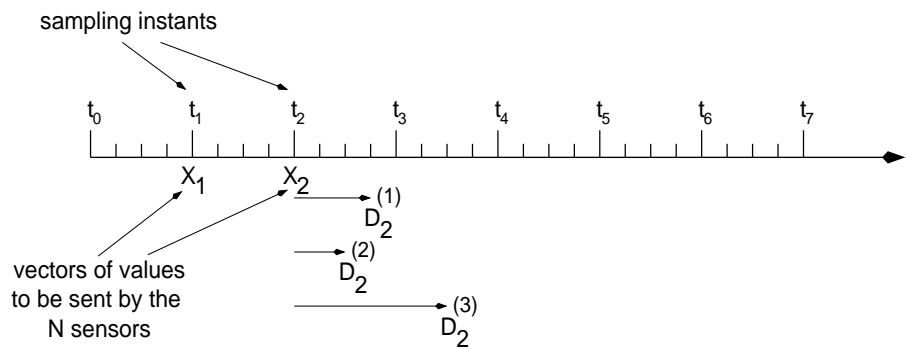,width=12cm,height=4.5cm}
   \caption{The sensors take samples periodically at instants $t_1,
     t_2, \cdots$, and prepare to send to the fusion center a vector
     of values $\mathbf{X}_h = \left[X_h^{(1)}, X_h^{(2)}, \cdots, X_h^{(N)}\right]$ 
     at $t_h$. Each is queued as a packet in
     the queue of the respective node. Due to multiple access delays,
     the packets arrive with random delays at the fusion center; for
     example, for $\mathbf{X}_2$, the delays $D^{(1)}_2, D^{(2)}_2,
     D^{(3)}_2$, for the packets from sensors $1, 2$ and $3$, are
     shown. }
   \label{fig:samples_with_mac_delays}
\end{figure}

In this setting, due to the multiple access network delays between the
sensor nodes and the fusion center, several possibilities arise. In
Figure~\ref{fig:samples_with_mac_delays} we show that although the
sensors take samples synchronously, due to random access delays
the various packets sent by the sensors arrive at the fusion center
asynchronously. As shown in the figure, the packets generated due to
the samples taken at time $t_2$ arrive at the fusion center with a 
delay of $D^{(1)}_2, D^{(2)}_2, D^{(3)}_2$, etc. It can even happen 
that a packet corresponding to the samples taken at time $t_3$ can 
arrive before one of the packets generated due to the samples taken 
at time $t_2$. 

Figure~\ref{fig:system_arch} depicts a general queueing and decision
making architecture in the fusion center. All samples are queued in
per--node queues in a sequencer. The way the sequencer releases the 
packets gives rise to the following three cases, \emph{the first two
of which we study in this paper}.

\begin{figure}[t]
   \centering \
   \psfig{figure=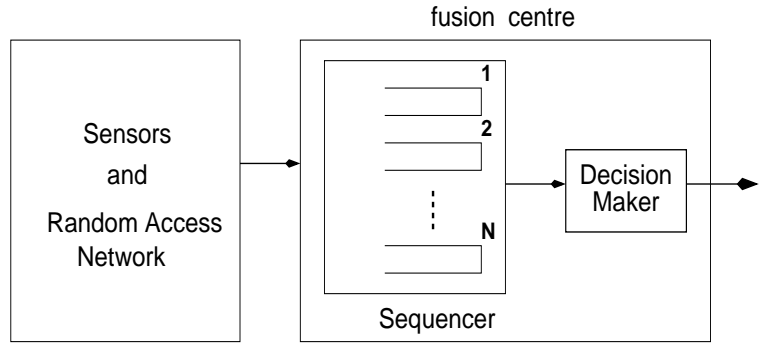,width=10cm,height=4.5cm}
    \caption{A conceptual block diagram of the wireless sensor network
    shown in Figure~\ref{fig:wsn_mesh}. 
    The fusion center has a sequencing 
    buffer which queues out--of--sequence samples and delivers the samples 
    to the decision maker in time--order, as early as possible, batch--wise
    or sample--wise.}
   \label{fig:system_arch}
\end{figure}

\begin{enumerate}
\item The sequencer queues the samples until all the samples of a 
  ``batch'' (a batch is the set of samples generated at a 
  sampling instant) are accumulated; it then releases the entire batch to
  the decision device. The batches arrive to the decision 
  maker in a time sequence order. The decision maker processes
  the batches without knowledge of the state of the network (i.e., 
  reception times at the fusion center, and the numbers of 
  packets in the various queues). We call this, 
  {\em Network Oblivious Decision Making} ({\sf NODM}). In factory 
  and building automation scenarios, there is a major impetus to replace 
  wireline networks between sensor nodes and controllers. In such 
  applications, the first step could be to retain the fusion algorithm 
  in the controller, while replacing the wireline network with a 
  wireless ad hoc network. Indeed, we show that this approach is
  optimal for {\sf NODM}, provided the sampling rate is 
  appropriately optimized. 

\item The sequencer releases samples only in time--sequence order
  but does not wait for an entire batch to accumulate. The decision
  maker processes samples as they arrive. We call this, {\em 
  Network Aware Decision Making} ({\sf NADM}). In {\sf NADM},
  whenever the decision maker receives a sample, it has to roll back 
  its decision statistic to the sampling 
  instant, update the decision statistic with the received sample 
  and then update the decision statistic to the current time slot. 
  The decision maker makes a Bayesian update on the decision statistic 
  even if it does not receive a sample in a slot. Thus, {\sf NADM} 
  requires a modification in the decision making algorithm in the 
  fusion center.

\item The sequencer does not queue any samples. The fusion center 
  acts on the values from the various sampling instants as they arrive, 
  possibly out of order. The formulation of such a problem would be 
  an interesting topic for future research. 
\end{enumerate}

\noindent
\textbf{Our Contributions:} We find that, in the existing literature
on sequential change detection problems (see discussion on related literature 
below), it has been assumed
that, at a sampling instant, the samples from all the sensors reach
the fusion center instantaneously.  As explained above, however, in
our problem the delay in detection is not only due to the detection
procedure requiring a certain amount of samples to make a decision 
(which we call \emph{decision delay}), but also due to the random 
packet delay in the multiple access network (which we call 
\emph{network delay}). We work with a formulation that accounts for 
both these delays, while limiting ourselves to the particular fusion 
center behaviours explained in cases (1) and (2) above.

In Section~\ref{sec:problem_formulation}, we discuss the basic change
detection problem and setup the model. In Section~\ref{sec:no}, we
formulate the change detection problem over a random access network in a
way that naturally includes the network delay. We show that in the case
of {\sf NODM}, the problem objective decouples into a part involving the
network delay and a part involving the optimal decision delay, under the
condition that the sampling instants are periodic. Then, in
Section~\ref{sec:net_delay_model}, we consider the special case of a
network with a star topology, i.e., all nodes are one hop away from the
fusion center and provide a model for contention in the random access
network. In Section~\ref{sec:na}, we formulate the {\sf NADM} problem
where we process the samples as they arrive at the fusion center, but in
a time causal manner. The out--of--sequence packets are queued in a
sequencing buffer and are released to the decision maker as early as
possible. We show in the {\sf NADM} case that the change--detection
problem can be modeled as a Partially Observable Markov Decision Process
(POMDP). We show that a {\em sufficient statistic} for the observations
include the {\em network--state} (which include the queue lengths of the
sequencing buffer, network--delays) and {\em the posterior probability
of change having occurred} given the measurements received and the
network states. As usual, the optimal policy can be characterised via a
Bellman equation, which can then be used to derive insights into the
structure of the policy. We show that the {\em optimal policy is a
threshold on the posterior probability of change and that the threshold,
in general, depends on the network state}. Finally, in
Section~\ref{sec:optimal_parameters} we compare, numerically, the mean
detection delay performance of {\sf NODM} and a simple heuristic
algorithm motivated by {\sf NADM} processing. We show the tradeoff
between the sampling rate $r$ and the mean detection delay. Also, we
show the tradeoff between the number of sensors and the mean detection
delay. 


\noindent
\textbf{Related Literature:} The basic mathematical formulation in
this paper is an extension of the classical problem of sequential
change detection in a Bayesian framework. The centralized version of
this problem was solved by Shiryaev (see \cite{shiryayev}).  The
decentralized version of the problem was introduced by Tenny and
Sandell \cite{tennysandell81detection-distributed}. 
In the decentralized setting, Veeravalli
\cite{veeravalli01decentralized-quickest} provided optimal decision
rules for the sensors and the fusion center, in the context of
conditionally independent sensor observations and a quasi-classical
information structure.  For a large network setting, Niu and Varshney
\cite{stat-sig-proc.niu-varshney05large-wsn} studied a simple
hypothesis testing problem and proposed a \emph{counting rule} based
on the number of alarms. They showed that, for a sufficiently large
number of sensors, the detection performance of the counting rule is
close to that of the optimal rule.  In a recent article on anomaly
detection in wireless sensor networks
\cite{rajasegarar-etal08anamaly-detection}, Rajasegarar et al.\ have
provided a survey of statistical and machine learning based techniques
for detecting various types of anomalies such as sensor faults,
security attacks, and intrusions.
In~\cite{stat-sig-proc.aldosari-moura04decentralized-detection} the
authors consider the problem of decentralized binary hypothesis
testing, where the sensors quantize the observations and the fusion
center makes a binary decision between the two hypotheses.

\vspace{2mm}

{\bf Remark:}
In the existing literature on the topic of optimal sequential 
event detection in wireless sensor networks, to the best of our 
knowledge there has been no prior formulation that incorporates 
multiple access delay between the sensing nodes and the fusion 
center. Interestingly, in this paper we introduce, what can be 
called a \emph{cross layer} formulation involving {\em sequential 
decision theory} and {\em random access network delays.} 
In particular, we encounter the \emph{fork--join queueing model} 
(see, for example, \cite{baccelli-makowski}) that arises in 
distributed computing literature.

\section{The Basic Change Detection Problem}
\label{sec:problem_formulation}
In this section, we introduce the model for the basic change detection 
problem. The notation, we follow, is given here.

 \begin{figure}[t]
   \centering \
   \psfig{figure=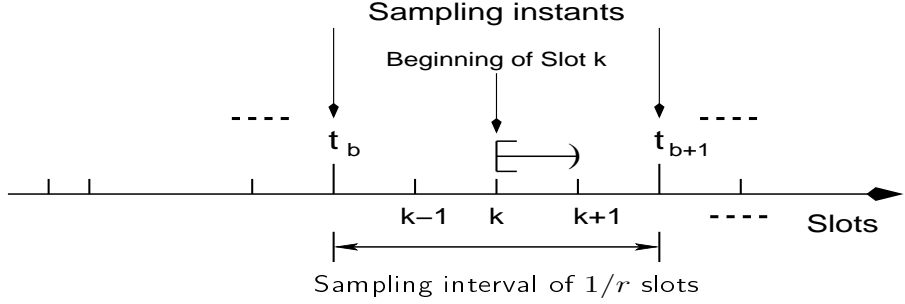,width=12cm,height=4cm}
   \caption{Time evolves over slots. The length of a 
   slot is assumed to be unity. Thus, slot $k$ represents the interval 
   $[k,k+1)$ and the beginning of slot $k$ represents the time instant $k$.
   Samples are taken periodically every $1/r$ slots, starting from $t_1 = 1/r$.
   }
   \label{fig:slot-structure}
 \end{figure}

\begin{itemize}
\item[$\bullet$] Time is slotted and the slots are indexed by $k = 0,1,2\ldots$.
     We assume that the length of a slot is unity and that slot $k$ refers
     to the interval $[k,k+1)$.	Thus, the beginning of slot $k$ indicates the 
     time instant $k$ (see Figure~\ref{fig:slot-structure}).
 \begin{figure}[t]
   \centering \
   \psfig{figure=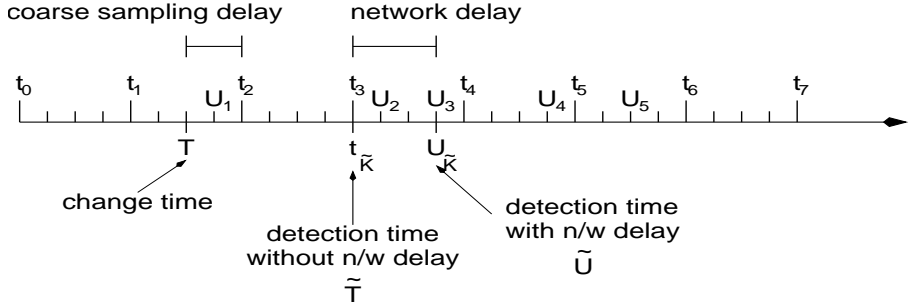,width=12cm,height=4cm}
   \caption{Change time and the detection instants with and without
     network delay are shown. The coarse sampling delay is given by 
     $t_K - T$ where $t_K$ is the first sampling instant after change, 
     and the network delay is given by $U_{\tilde{K}} -
t_{\tilde{K}}$.}
   \label{fig:queueing_decision_instants}
 \end{figure}
\item[$\bullet$] $N$ sensors are \emph{synchronously} sampling at the rate $r$
  samples/slot, i.e., the sensors make an observation every $1/r$
  slots and send their observations to the fusion center. Thus, for
  example, if $r=0.1$, then a sample is taken by a sensor every $10^{th}$ slot. We
  assume that $1/r$ is an integer. The sampling instants are denoted
  $t_1, t_2,\ldots$ (see Figure~\ref{fig:queueing_decision_instants}).
  Define $t_0=0$; note that the first sample is taken at $t_1 = 1/r$.
\item[$\bullet$] The vector of network delays of the batch $b$ is denoted by
  \[\mathbf{D}_b =\left[D^{(1)}_b,D^{(2)}_b, \cdots, D^{(N)}_b\right]\]
  where $D^{(i)}_b \in \{1,2,3,\cdots\}$ is the network delay in slots,
  of the $i$th component of the $b$th batch (sampled at $t_b = b/r$). 
  Also, note that $D_b^{(i)} \geqslant 1$, as it requires one time slot for the 
  transmission of a packet to the fusion center after a successful contention.
\item[$\bullet$] The state of nature $\Theta \in \{0,1\}$. $0$ represents the 
  state {\sf ``before change''} and $1$ represents the state {\sf ``after change''}. 
  It is assumed that 
  the change time $T$ (measured in slots),
  is geometrically distributed i.e.,
  \begin{eqnarray}
  \label{eqn:distribution_of_T}
  \prob{T=0} & = & \rho \nonumber \\
  \text{and, for} \  k\geqslant 1, \ \ \ \prob{T=k\mid T>0} & = & p(1-p)^{(k-1)}.
  \end{eqnarray} 
  The value of $0$ for $T$ accounts for the possibility that the change took 
  place before the observations were made.
\item[$\bullet$] The vector of outputs from the sensor devices at the $b$th
  batch is denoted by
  \[\mathbf{X}_b =\left[X^{(1)}_b,X^{(2)}_b, \cdots, X^{(N)}_b\right]\]
  where $X^{(i)}_b \in \mathcal{X}$ is the $b$th output at the $i$th sensor. 
  Given the state of nature, $X^{(i)}_b$ are assumed to be (conditionally) 
  independent across sensors and i.i.d.\ over sampling instants 
  with probability distributions $F_0(x)$ and $F_1(x)$ before and
  after the change respectively. $\mathbf{X}_1$ corresponds to the
  first sample taken.  In this work, we do not consider the problem of
  optimal processing of the sensor measurements to yield the sensor
  outputs, e.g., optimal quantizers (see \cite{veeravalli01decentralized-quickest}).

\item[$\bullet$] Let ${S}_b$, $b\geqslant 1$, be the state of nature at the $b$th 
  sampling instant and
  $S_0$ the state at time $0$. Then ${S}_b \in \{0,1\}$, with
  \[\prob{S_0=1} = \rho= 1 -\prob{S_0=0}\]
  ${S}_b$ evolves as follows. 
  If ${S}_b=0$ for $b \geqslant 0$, then
  \[S_{b+1} =\left\{\begin{array}{ccc} 1&\mbox{w.p.}&p_r\\0&\mbox{w.p.}&(1-p_r)
  \end{array}\right.\]
  where $p_r = 1-(1-p)^{1/r}$.
  Further, if ${S}_b = 1$, then ${S}_{b+1}=1$. Thus, if ${S}_0 = 0$, then there
  is a change from 0 to 1 at the $K$th sampling instant, where $K$  is geometrically
  distributed. For $b\geqslant 1$,
  \[\prob{K=b} = p_r(1-p_r)^{b-1}\]
\end{itemize}

Each value to be sent to the fusion center by a node is inserted into a packet
which is queued for transmission. It is then transmitted to the fusion center 
by using a contention based multiple access protocol. A node can directly 
transmit its observation to the fusion center or route it through other nodes 
in the system. 
Each packet takes a time slot to transmit.  The MAC protocol and the queues 
evolve over the same time slots. The fusion center makes a decision about
the change depending on whether {\em Network Oblivious} ({\sf NODM}) processing 
or {\em Network Aware} ({\sf NADM}) processing is employed at the fusion center. 
In the case of {\sf NODM} processing, the decision sequence (also called as 
{\em action sequence}), is $A_u, \ u\geqslant 0$, with  
$A_u~\in~\{\mathsf{stop~and~declare~change}(1),$ $\mathsf{take~another~sample}(0)\}$,
where $u$ is a time instant at which a complete batch of $N$ samples corresponding 
to a sampling instant is received by the fusion center. In the case of {\sf NADM} 
processing, the decision sequence is $A_k, \ k\geqslant 0$, with  
$A_k~\in~\{\mathsf{stop~and~declare~change}(1),$ $\mathsf{take~another~sample}(0)\}$,
i.e., a decision about the change is taken at the beginning of every slot.

\section{Network Oblivious Decision Making ({\sf NODM})}
\label{sec:no}
From Figure~\ref{fig:samples_with_mac_delays}, we note that
although all the components of a batch $b$ are generated at 
$t_b = b/r$, they reach the fusion center at times 
$t_b + D_b^{(i)}, i=1,2,\cdots,N$. In {\sf NODM} processing, 
the samples, which are successfully transmitted, are queued 
in a sequencing buffer as they arrive 
(see Figure~\ref{fig:parallel_arch}) and the sequencer releases 
a (complete) batch to the decision maker, as soon as all the 
components of a batch arrive. The decision maker 
makes a decision about the change at the time instants when a 
(complete) batch arrives at the fusion center. In the 
Network Oblivious ({\sf NODM}) processing, the decision maker 
is oblivious to the network and processes the 
batch {\em as though it has just been generated} 
(i.e., as if there is no network, hence the name {\em Network Oblivious Decision Making}). 
We further define (see Figure~\ref{fig:queueing_decision_instants})

 \begin{figure}[t]
   \centering \
   \psfig{figure=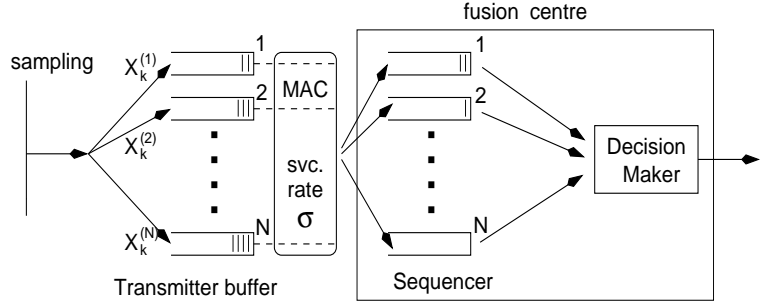,width=10cm,height=4cm}
   \caption{A sensor network model of Figure~\ref{fig:system_arch} with
   one hop communication between the sensor nodes and the fusion center.
   The random access network along with the sequencer is a fork--join
   queueing model.}
   \label{fig:parallel_arch}
 \end{figure}
\vspace{5mm}

\begin{itemize}
\item[$\bullet$]{$U_b, (b \geqslant 1)$:} the random instant at which the fusion
    center receives the complete batch $\mathbf{X}_b$

\item[$\bullet$]{$\widetilde{K} \in \{0,1,\dots\}$:} the batch index at which the
    decision takes place, if there was no network delay. $\widetilde{K} = 0$
    means that the decision 1 ({\sf stop~and~declare~change}) is taken
    before any batch is generated

\item[$\bullet$]{$\widetilde{T}=t_{\widetilde{K}}$:} the random time (a sampling
    instant) at which the fusion center stops and declares change, {\em if
    there was no network delay} 

\item[$\bullet$]{$\widetilde{U}=U_{\widetilde{K}}$:} the random time slot at which the
    fusion center stops and declares change, {\em in the presence of
    network delay}

  \item[$\bullet$]{$D_b=U_b-t_b$:} Sojourn time of the $b$th batch, i.e., the
    time taken for all the samples of the $b$th batch to reach the
    fusion center. Note that $D_b$ is given by $\max\{D_b^{(i)}:i=1,2,\cdots,N\}$.
    Thus, the delay of the batch $\widetilde{K}$ at which the detector
    declares a change is $U_{\widetilde{K}} - t_{\widetilde{K}} = \widetilde{U}-\widetilde{T}$
\end{itemize}
    
We define the following detection metrics. 
\begin{itemize}
\item[{\bf Mean Detection Delay}] defined as the expected number of 
    slots between the change point $T$ and the stopping time instant 
    $\widetilde{U}$ in the presence of {\em coarse sampling} and 
    {\em network} delays,  i.e., {\em Mean Detection Delay} = 
    $\EXP{\left(\widetilde{U}-T\right){\bf 1}_{\{\widetilde{T} \geqslant T\}}}$. 
\item[{\bf Mean Decision Delay}] defined as the expected number of 
    slots between the change point $T$ and the stopping time instant 
    $\widetilde{T}$ in the (presence of {\em coarse sampling} delay 
    and in the) absence of {\em network} delay, i.e., {\em Mean Decision Delay} 
    = $\EXP{\left(\widetilde{T}-T\right){\bf 1}_{\{\widetilde{T} \geqslant T\}}}$. 
\end{itemize}
With the above model and assumptions, we pose the following {\sf NODM} problem:
\emph{Minimize the mean detection delay with a bound on the
  probability of false alarm}, the decision epochs being the time 
instants when a complete batch of $N$ components corresponding to 
a sampling instant is received by the fusion center. In 
Section~\ref{sec:na}, we pose the problem of making a 
decision at every slot based on the samples as they arrive at  
the fusion center.
Motivated by the approach
in~\cite{veeravalli01decentralized-quickest} we use the following
formulation for a given sampling rate $r$
\begin{eqnarray}
  \label{eq:p1}
  & & \min \EXP{(\widetilde{U}-T){\bf 1}_{\{\widetilde{T}\geqslant T\}}} \\ 
\text{such that}  & & \prob{\widetilde{T} < T}\leqslant \alpha \nonumber
\end{eqnarray}
where $\alpha$ is the constraint on the false alarm probability.
 \begin{figure}[h]
   \centering \ 
   \psfig{figure=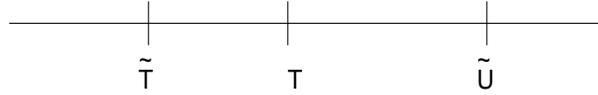,width=80mm,height=12mm}
   \caption{ Illustration of an event of false alarm with
    $\widetilde{T}<T$, but $\widetilde{U}>T$}
   \label{fig:falsealarm}
 \end{figure}
\begin{remarks}
Note that if $\alpha \geqslant 1-\rho$, then the decision making procedure 
can be stopped and an alarm can be raised even before the first observation.
Thus, we assume that $\alpha < 1-\rho$.
\end{remarks}

\begin{remarks}
Note that here we consider $\prob{\widetilde{T}<T}$ as the probability
of false alarm and not $\prob{\widetilde{U}<T}$, i.e., a case as shown
in Figure~\ref{fig:falsealarm} is considered a false alarm. This can be
understood as follows: when the decision unit detects a change at slot
$\widetilde{U}$, the measurements that triggered this inference would be
carrying the ``time stamp'' $\widetilde{T}$, and we infer that the
change actually occurred at or before $\widetilde{T}$. Thus if
$\widetilde{T}<T$, it is an error.   
\end{remarks}

We write the problem defined in Eqn.~\ref{eq:p1} as	
\begin{eqnarray}
  \min_{\Pi_{\alpha}}\EXP{(\widetilde{U}-T){\bf 1}_{\{\widetilde{T}\geqslant T\}}}
\end{eqnarray}
where $\Pi_{\alpha}$ is the set of detection policies for which
$\prob{\widetilde{T}<T} \leqslant \alpha$.

\begin{theorem}
\label{thm:decoupling}
If the sampling is periodic at rate $r$ and the batch sojourn time 
process $D_b$, $b\geqslant 1$, is stationary with mean $d(r)$, then
\begin{eqnarray*}
\min_{\Pi_{\alpha}}\EXP{(\widetilde{U}-T){\bf 1}_{\{\widetilde{T}\geqslant T\}}}
&=& (d(r)+l(r))(1-\alpha) - \rho\cdot l(r) + \frac{1}{r}\min_{\Pi_{\alpha}}\EXP{\widetilde{K}-K}^+
\end{eqnarray*}
where  $l(r)$ is the delay due to (coarse) sampling.
\end{theorem}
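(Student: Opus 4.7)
My approach is to decompose the detection delay $\widetilde{U}-T$ into three non-negative pieces on the no-false-alarm event $\{\widetilde{T}\ge T\}$, and then analyze each piece separately. Because $t_K$ is the first sampling instant at or after $T$ and $\widetilde{T}=t_{\widetilde{K}}$, the events $\{\widetilde{T}\ge T\}$ and $\{\widetilde{K}\ge K\}$ coincide, and on this event
\[
\widetilde{U}-T \;=\; \underbrace{\widetilde{U}-\widetilde{T}}_{=\,D_{\widetilde{K}}} \;+\; \underbrace{\widetilde{T}-t_K}_{=\,(\widetilde{K}-K)/r} \;+\; \underbrace{t_K-T}_{\text{sampling residual}},
\]
each term non-negative.

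The network piece $\EXP{D_{\widetilde{K}}{\bf 1}_{\{\widetilde{T}\ge T\}}}$ is where I exploit NODM: the stopping index $\widetilde{K}$ is determined by the sample vectors $\{\mathbf{X}_b\}$ alone, hence is independent of the delay process $\{D_b\}$ (which is assumed independent of the samples and the change time). Conditioning on $\widetilde{K}=b$ and using the stationarity assumption $\EXP{D_b}=d(r)$ gives $\EXP{D_{\widetilde{K}}{\bf 1}_{\{\widetilde{T}\ge T\}}}=d(r)\,\prob{\widetilde{T}\ge T}$. The decision-delay piece is immediately $(1/r)\,\EXP{(\widetilde{K}-K)^+}$, and is the only term carrying a policy dependence.

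The sampling-residual piece is the delicate one. Since $T=0$ forces $K=0$ and $t_K-T=0$, only $\{T\ge 1\}$ contributes. I then invoke the discrete memoryless property of the geometric prior on $T$: conditional on $T\ge 1$ and $K=b$, the within-period offset $T-t_{b-1}$ has the same distribution for every $b\ge 1$, so $t_K-T$ is independent of $K$ given $T\ge 1$. Moreover, conditional on $K$, the sample distributions $F_0$ and $F_1$ are determined and do not depend on the exact value of $T$, so $\widetilde{K}$ is conditionally independent of $T$ given $K$; chaining these two conditional independences, $t_K-T$ is independent of $\widetilde{K}$ on $\{T\ge 1\}$. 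Accounting for $\prob{T=0}=\rho$, this yields $\EXP{(t_K-T){\bf 1}_{\{\widetilde{T}\ge T\}}}=l(r)(\prob{\widetilde{T}\ge T}-\rho)$.

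Combining,
\[
\EXP{(\widetilde{U}-T){\bf 1}_{\{\widetilde{T}\ge T\}}} \;=\; (d(r)+l(r))\,\prob{\widetilde{T}\ge T} - \rho\,l(r) + \tfrac{1}{r}\,\EXP{(\widetilde{K}-K)^+}.
\]
For every $\pi\in\Pi_{\alpha}$, $\prob{\widetilde{T}\ge T}\ge 1-\alpha$, so the right-hand side is at least $(d(r)+l(r))(1-\alpha)-\rho\,l(r)+(1/r)\min_{\Pi_{\alpha}}\EXP{(\widetilde{K}-K)^+}$; the matching upper bound follows from the standard fact in Bayesian quickest detection that a minimizer of $\EXP{(\widetilde{K}-K)^+}$ in $\Pi_{\alpha}$ satisfies the false-alarm constraint with equality. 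The main obstacle I anticipate is the sampling-residual step: rigorously establishing the independence of $t_K-T$ from the policy-dependent $\widetilde{K}$, which hinges on the within-period distribution of $T$ being invariant across periods (the discrete memoryless property of the geometric change-time) together with the observation that the samples depend on $T$ only through the change index $K$.
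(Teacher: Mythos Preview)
Your proposal is correct and follows essentially the same route as the paper's proof: the same three-term decomposition $\widetilde{U}-T = D_{\widetilde{K}} + (\widetilde{K}-K)/r + (t_K-T)$ on $\{\widetilde{T}\ge T\}=\{\widetilde{K}\ge K\}$, the same independence-plus-stationarity argument for the network piece, the same conditional-independence of $\widetilde{K}$ and $T$ given $K$ (together with the geometric within-period law) for the sampling residual, and the same two-sided bound via the Shiryaev optimizer attaining $\prob{\widetilde{T}<T}=\alpha$. The ``main obstacle'' you anticipate is exactly where the paper also spends its effort.
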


\begin{remarks}
  For example in Figure~\ref{fig:queueing_decision_instants}, the
  delay due to coarse sampling is $t_2-T$, $\widetilde{K}-K=3-2=1$, and
  the network delay is $U_3-t_3$.  The stationarity assumption on
  $D_b$, $b\geqslant 1$, is justifiable in a network in which measurements
  are continuously made, but the detection process is started only at
  certain times, as needed. 
\end{remarks}

\emph{Proof:} The following is a sketch of the
  proof (the details are in the Appendix -- I)
\begin{eqnarray*}
  \min_{\Pi_{\alpha}}\EXP{(\widetilde{U}-T){\bf 1}_{\{\widetilde{T}\geqslant T\}}} 
  &=& \min_{\Pi_{\alpha}}\left\{\EXP{(\widetilde{U}-\widetilde{T}) {\bf 1}_{\{\widetilde{T}\geqslant T\}}}+\EXP{\widetilde{T}-T}^+\right\}\\
  &=&\min_{\Pi_\alpha}\left\{ \EXP{D}\left(1-\prob{\widetilde{T}<T}\right)+~\EXP{\widetilde{T}-T}^+\right\}
\end{eqnarray*}
where we have used the fact that under periodic sampling, the queueing
system evolution and the evolution of the statistical decision problem
are independent, i.e., $\widetilde{K}$ is independent of $\{D_1, D_2,
\ldots\}$ and $\EXP{D}$ is the mean stationary queueing delay (of a batch).  By
writing $\EXP{D}=d(r)$ and using the fact that the problem
$\min_{\Pi_{\alpha}}\EXP{\widetilde{T}-T}^+$ is solved by a policy
$\pi^*_\alpha \in \Pi_\alpha$ with $\prob{\widetilde{T}<T}=\alpha$, the
problem becomes
\begin{eqnarray*}
d(r)(1-\alpha) + \min_{\Pi_{\alpha}}\EXP{\widetilde{T}-T}^+
& = & (d(r)+l(r))(1-\alpha) -\rho\cdot l(r)+\frac{1}{r} \min_{\Pi_\alpha} {\EXP{\widetilde{K}-K}^+}
\end{eqnarray*}
where $l(r)$ is the delay due to sampling. Notice that $\min_{\Pi_\alpha} {\EXP{\widetilde{K}-K}^+}$
is the basic change detection problem at the sampling instants.\qed

\begin{remarks}It is important to note that the independence between
  $\widetilde{K}$ and $\{D_1, D_2, \ldots\}$ arises from periodic
  sampling. Actually this is \emph{conditional} independence given the
  rate of the periodic sampling process. If, in general, one considers
  a model in which the sampling is at random times (e.g., the sampling
  process randomly alternates between periodic sampling at two
  different rates or if adaptive sampling is used) then we can view it
  as a time varying sampling rate and the asserted independence will
  not hold. 
\end{remarks}

We conclude that the problem defined in Eqn.~\ref{eq:p1}
effectively decouples into the sum of the 
optimal delay in the original optimal detection problem, i.e.,
$\frac{1}{r}\min_{\Pi_{\alpha}}\EXP{\widetilde{K}-K}^+$ as in
\cite{veeravalli01decentralized-quickest}, a part that captures
the network delay, i.e., $d(r)(1-\alpha)$, and a part that 
captures the sampling delay, i.e., $l(r)(1-\alpha)-\rho l(r)$.

\section{Network Delay Model}
\label{sec:net_delay_model}
 \begin{figure}[t]
   \centering \
   \psfig{figure=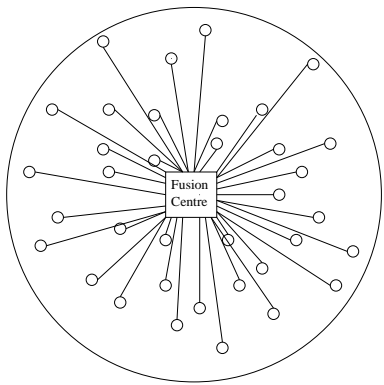,width=5cm,height=5cm}
   \caption{A sensor network with a star topology with the fusion
     center at the hub. The sensor nodes use a random access MAC to
     send their packets to the fusion center.}
   \label{fig:wsn_star}
 \end{figure}
From Theorem~\ref{thm:decoupling}, it is clear that in {\sf NODM} 
processing, the optimal decision device and the queueing system are 
decoupled. Thus, one can employ an optimal sequential change 
detection procedure (see \cite{shiryayev}) for any random
access network (in between the sensor nodes and the fusion center). 
Also, {\sf NODM} is oblivious to the random access network
(in between the sensor nodes and the fusion center) and processes
a received batch as though it has just been generated.  
In the case of {\sf NADM} (which we describe in Section~\ref{sec:na}), 
the decision maker processes samples, keeping network--delays
into account, thus requiring the knowledge of the network dynamics. 
In this section, we provide a simple model
for the random access network, that facilitates the analysis and 
optimisation of {\sf NADM}. 

$N$ sensors form a star topology\footnote{Note that {\em Theorem 1} 
is more general and does not assume a star topology.} (see 
Figure~\ref{fig:wsn_star}) ad hoc wireless sensor network with the
fusion center as the hub. They synchronously sample their environment
at the rate of $r$ samples per slot periodically. At sampling instant
$t_b = b/r$, sensor node $i$ generates a packet containing the sample value
$X^{(i)}_b$ (or some quantized version of it). This packet is then
queued first-in-first-out in the buffer behind the radio link. It is
as if each sample is a \emph{fork} operation that puts a packet into
each sensor queue (see Figure~\ref{fig:parallel_arch}).

The sensor nodes contend for access to the radio channel, and transmit
packets when they succeed. The service is modeled as follows. As long
as there are packets in any of the queues, successes are assumed to
occur at the constant rate of $\sigma~(0<\sigma<1)$ per slot, with the intervals
between the successes being i.i.d., geometrically distributed random
variables, with mean $1/\sigma$. If, at the time a success occurs, there
are $n$ nodes contending (i.e., $n$ queues are nonempty) then the
success is ascribed to any one of the $n$ nodes with equal
probability. 

The $N$ packets corresponding to a sample arrive at random times at
the fusion center. If the fusion center needs to accumulate all the
$N$ components of each sample then it must wait for that component of
every sample that is the last to depart its mote. This is a \emph{join}
operation (see Figure~\ref{fig:parallel_arch}).

\begin{figure}[t]
  \centering
\ \epsfig{file=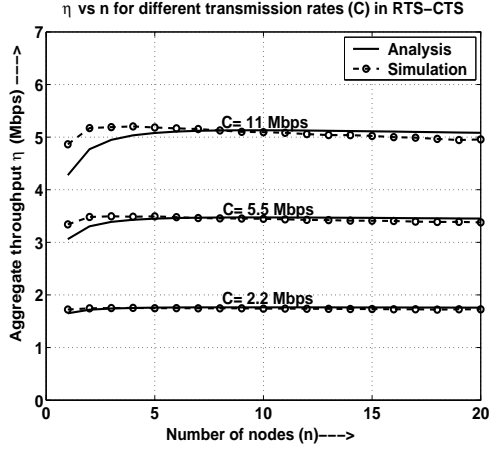,width=65mm,height=60mm}
\caption{The aggregate saturation throughput $\eta$ of an IEEE
  802.11 network plotted against the number of nodes in the network,
  for various physical layer bit rates: 2.2~Mbps, 5.5~Mbps, and
  11~Mbps .  The two curves in each plot correspond to an analysis and
  an NS--2 simulation.}
   \label{fig:sat_thpt_wlan_vs_n}
 \end{figure}

\begin{figure}[t]
\centering
 \begin{minipage}{3.8cm}
   \centering
    \scalebox{.5}[.5]{\includegraphics*[54,201][310,386]{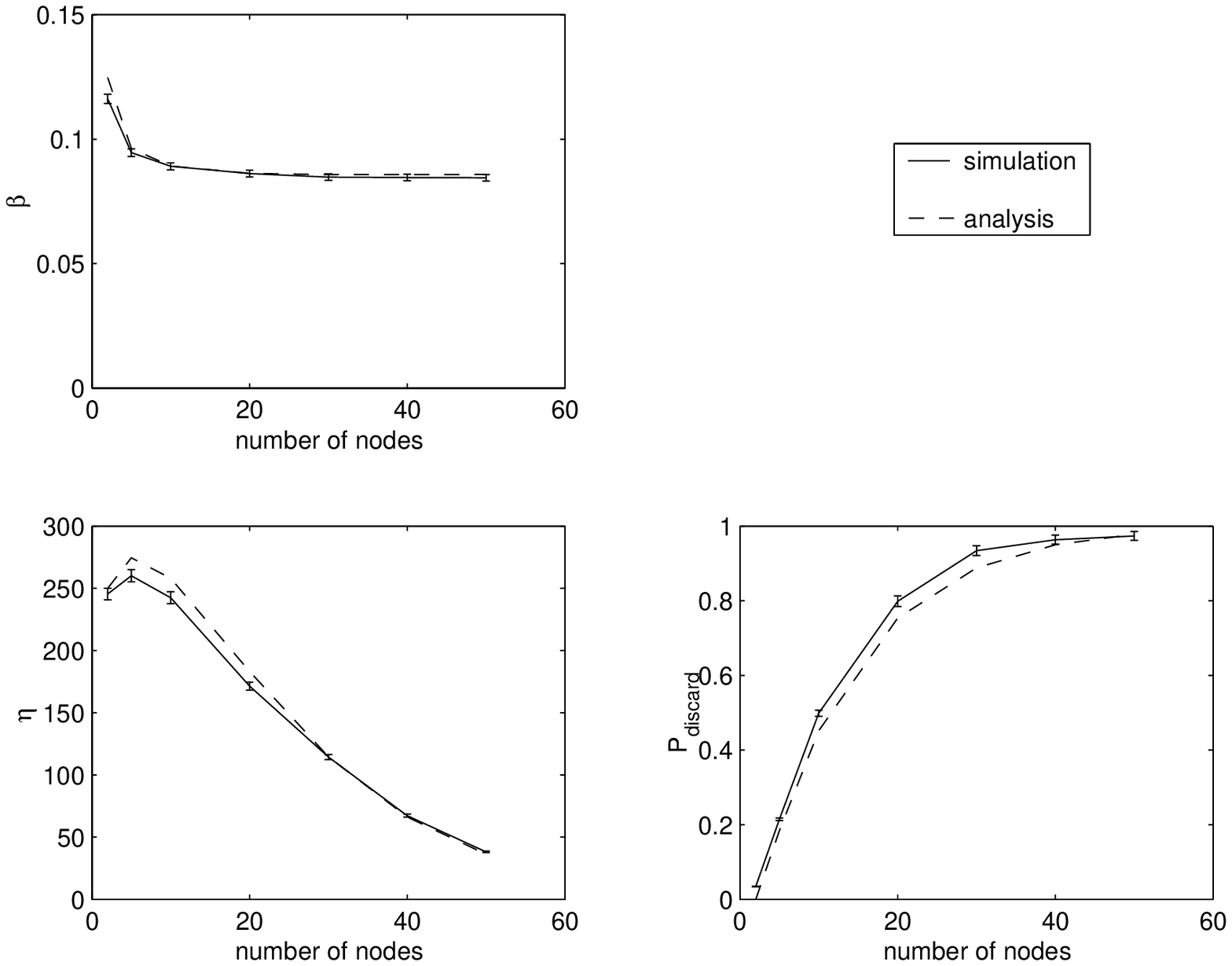}}
   \end{minipage}
\hspace{10mm}
   \begin{minipage}{3.8cm}
     \centering
     \scalebox{.5}[.5]{\includegraphics*[54,201][310,386]{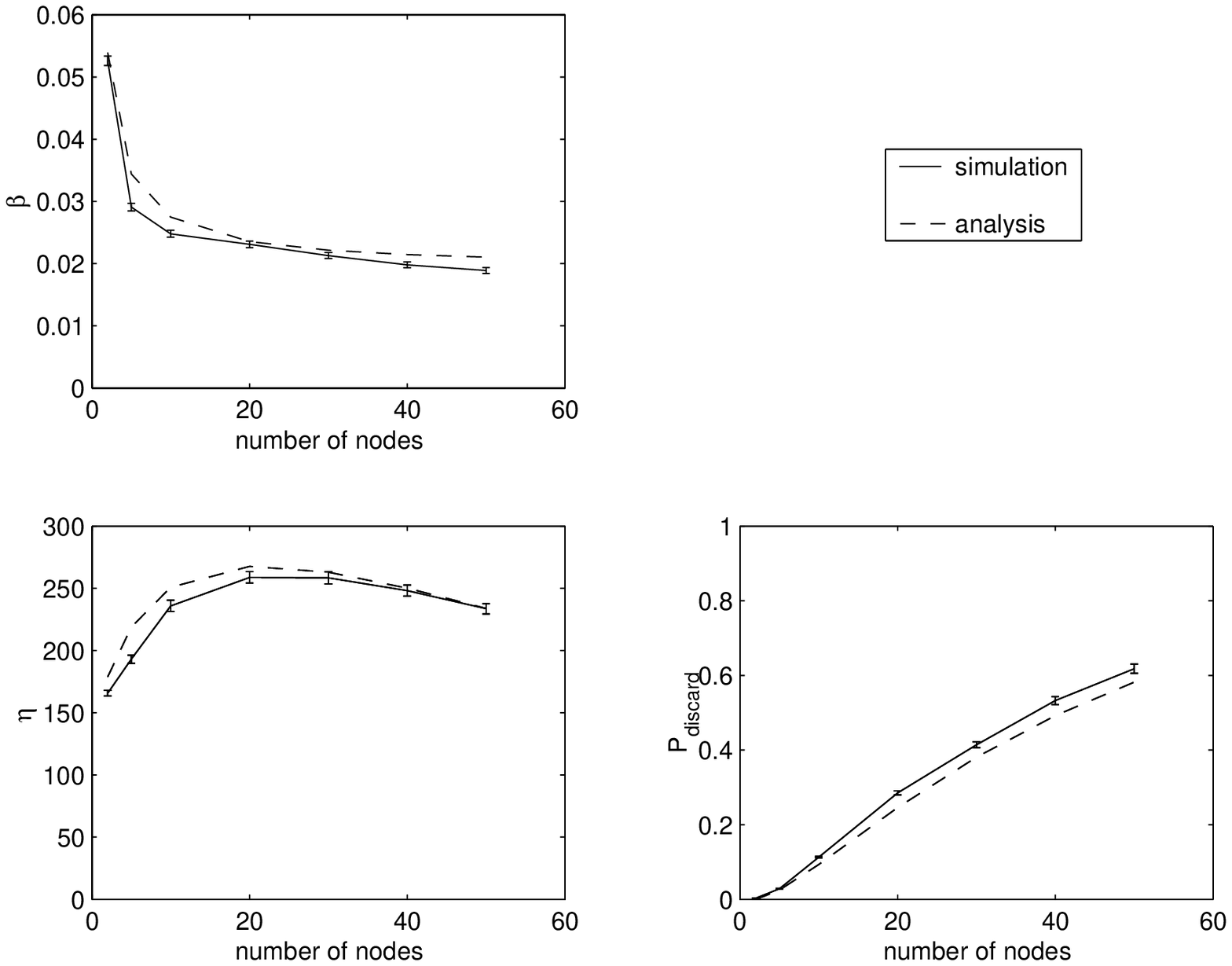}}
   \end{minipage}
   \caption{The aggregate saturation throughput $\eta$ of an IEEE
     802.15.4 star topology network plotted against the number of nodes
     in the network. Throughput obtained with default backoff
     parameters is shown on the left and that obtained with backoff
     multiplier $=3$, is shown on the right. The two curves in each
     plot correspond to an analysis and an NS--2 simulation.}
   \label{fig:thruput_zigbee}
 \end{figure}
It is easily recognized that our service model, in the case of 
{\sf NODM} is the discrete time equivalent of generalized processor 
sharing (GPS -- see, for example, \cite{books.kmk04analytical}),
which can be called the FJQ-GPS (fork-join queue (see \cite{baccelli-makowski}) 
with GPS service). In the case of {\sf NADM}, the service model 
is just the GPS.  

In IEEE~802.11
networks and IEEE~802.15.4 networks, if appropriate parameters are
used, then the adaptive backoff mechanism can achieve a throughput
that is roughly constant over a wide range of $n$, the number of
contending nodes.  This is well known for the CSMA/CA implementation
in IEEE~802.11 wireless LANs; see, for example,
Figure~\ref{fig:sat_thpt_wlan_vs_n} \cite{books.kmk08wireless}. For each physical layer rate, the
network service rate remains fairly constant with increasing number of
nodes.  From Figure~\ref{fig:thruput_zigbee} (taken
from~\cite{chandramani-thesis}) it can be seen that with the default
backoff parameters, the saturation throughput of a star topology
IEEE~802.15.4 network decreases with the number of nodes $N$, but with
the backoff multiplier $=3$, the throughput remains almost constant
from $N=10$ to $N=50$~\cite{chandramani-thesis}; thus, in the latter
case our GPS model can be applicable.

\begin{theorem}
\label{thm:fjq-gps_stationary_delay}
The stationary delay $D$ is a proper random variable with finite
mean if and only if $N r < \sigma$.
 \end{theorem}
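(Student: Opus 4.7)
The plan is to separate the argument into a total-workload bound (for necessity) and a per-queue stochastic dominance bound (for sufficiency), and to assemble the two using the trivial inequality $\max_i D_b^{(i)} \leqslant \sum_{i=1}^{N} D_b^{(i)}$.

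For necessity, I would consider the total number of packets in the system, $Q_k = \sum_{i=1}^{N} Q_k^{(i)}$. This process evolves as a discrete-time single-server queue with batch arrivals of size $N$ at every sampling instant (time-averaged arrival rate $Nr$ packets per slot) and Bernoulli service of rate $\sigma$ per slot whenever $Q_k > 0$. If $Nr \geqslant \sigma$, then $Q_k$ fails to be positive recurrent (transient when $Nr > \sigma$, null recurrent when $Nr = \sigma$), so $Q_k \to \infty$ along a subsequence. Hence some per-node queue length is unbounded and, given FIFO within each queue, the corresponding $D_b^{(i)}$ grows without bound, preventing $D_b = \max_i D_b^{(i)}$ from admitting a proper stationary law with finite mean.

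For sufficiency, suppose $Nr < \sigma$. The key step is to show that each $Q^{(i)}$ is stochastically dominated by an isolated reference queue $\widetilde{Q}^{(i)}$ fed by the same periodic arrivals (one packet every $1/r$ slots) and served, whenever nonempty, with probability $\sigma/N$ independently across slots. The coupling idea: in each slot generate a single uniform $U_k \in [0,1]$; declare a reference service iff $U_k < \sigma/N$ and an actual service to queue $i$ iff $U_k < \sigma/n$, where $n$ is the current number of nonempty queues in the original system. Since $\sigma/N \leqslant \sigma/n$, every reference service is also an actual service of queue $i$, and an easy induction on $k$ yields $Q^{(i)}_k \leqslant \widetilde{Q}^{(i)}_k$ pathwise. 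The reference queue has traffic intensity $r/(\sigma/N) = Nr/\sigma < 1$, and a Foster--Lyapunov argument over epochs of length $1/r$ with $V(q) = q$ shows it is positive recurrent with $\EXP{\widetilde{D}^{(i)}} < \infty$. By symmetry all $N$ queues are statistically identical, so $\EXP{D_b^{(i)}} \leqslant \EXP{\widetilde{D}^{(i)}} < \infty$ for every $i$, and $\EXP{D_b} \leqslant \sum_{i=1}^{N} \EXP{D_b^{(i)}} < \infty$ follows.

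The main obstacle is the stochastic dominance coupling: the instantaneous service rate $\sigma/n$ that queue $i$ receives is a function of the states of the other queues, which are themselves correlated with queue $i$ through the shared service mechanism, so one cannot invoke a standard product-form comparison. The single-uniform slot-by-slot splitting described above is what makes the pathwise comparison go through. A secondary technical nuisance is finiteness of the stationary mean delay in the reference queue when arrivals are periodic rather than i.i.d., but the linear-Lyapunov drift argument over a single sampling period takes care of this.
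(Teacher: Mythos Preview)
Your argument is correct and essentially identical to the paper's: the paper also bounds each node's queue by one receiving constant Bernoulli$(\sigma/N)$ service, achieving this by adding low-priority ``dummy'' packets so that all $N$ queues are always contending (the ``saturated FJQ--GPS'' system), and then observing that each queue in the saturated system is a $GI/M/1$ queue whose delays upper-bound those in the original. Your explicit single-uniform coupling and Foster--Lyapunov step are just a different packaging of the same domination idea; the only cosmetic point is that your description of the coupling should also say how the remaining queues in the original system are served from the same $U_k$ (e.g., by partitioning $[0,\sigma)$ into $n$ equal subintervals with queue $i$ assigned the first one), so that $n$ is well-defined along the coupled path.
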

\emph{Proof:} See Appendix -- II.\qed

\vspace{1em}
\noindent
Thus, for the FJQ--GPS queueing system to be stable, 
the sampling rate $r$ is chosen such that $r < \frac{\sigma}{N}$. 

\section{Network Aware Decision Making ({\sf NADM})}
\label{sec:na}
In Section~\ref{sec:no}, we formulated the problem of {\sf NODM}
quickest change detection over a random access network, and showed that
(when the decision instants are $U_k$, as shown in
Figure~\ref{fig:queueing_decision_instants}) the optimal decision maker
is independent of the random access network, under periodic sampling.
Hence, the Shiryaev procedure, which is shown to be delay optimal in the
classical change--detection problem (see \cite{shiryayev}), can be
employed in the decision device independently of the random access
network. It is to be noted that the decision maker in the {\sf NODM}
case, waits for a complete batch of $N$ samples to arrive, to make a
decision about the change. Thus, the mean detection delay of the {\sf
NODM} has a network--delay component corresponding to a batch of $N$
samples.  In this section, we provide an alternative mechanism of fusion
at the decision device called {\em Network Aware Decision Making} ({\sf
NADM}), in which the fusion algorithm does not wait for an entire batch
to arrive, and processes the samples as soon as they arrive, but in a
time--causal manner.  
 
We now describe the processing in {\sf NADM}. Whenever a node (successfully)
transmits a sample across the random access network, it is delivered to 
the decision maker if the decision maker has received all the samples 
from all the batches generated earlier. Otherwise, the sample is an 
out--of--sequence sample, and is queued in the sequencing buffer. It 
follows that, whenever the (successfully) transmitted sample is the 
last component of the batch that the decision maker is looking for, 
then the {\em head of line} (HOL) components, if any, in the queues 
of the sequencing buffer are also delivered to the decision maker. 
This is because, these HOL samples belong to the \emph{next} batch 
that the decision maker should process. The decision maker makes a 
decision about the change {\em at the beginning of every time slot}, 
irrespective of whether it receives a sample or not. In {\sf NADM}, 
whenever the decision maker receives a sample, it takes into account 
the network--delay of the sample while computing the decision statistic. 
The network--delay is a part of the state of the queueing system 
which is available to the decision maker. Thus, unlike NODM, {\em the state of the 
queueing system also plays a role in decision making}.

In Section~\ref{subsec:notation}, we define the state of the queueing system. 
In Section~\ref{subsec:evol-Q}, we define the dynamical system whose change 
of state (from $0$ to $1$) is the subject of interest to us. We define the 
state of the dynamical system as a tuple that contains 
the {\em queueing state}, the {\em state of nature}, and {\em a delayed state
of nature}. The delayed state of nature is included in the state of the 
system so that the (delayed) sensor--observations that the decision maker 
receives at time instant $k+1$ depend only on the state, the control, and the 
noise of the system at time instant $k$,
a property which is desirable to define a sufficient statistic
(see page 244, \cite{books.bertsekas00a}).
We explain the evolution of the state
of the dynamical system in Section~\ref{subsec:system-state-evolution-model}.
In Section~\ref{subsec-the-NADM-change-detection-problem}, we formulate
the {\sf NADM} change detection problem and we find a sufficient statistic for the 
observations in Section~\ref{subsec:sufficient-statistic}. In 
Section~\ref{subsec:optimal-stopping-time}, we provide the optimal 
decision rule for the {\sf NADM} change detection problem.
\subsection{Notation and State of the Queueing System}
\label{subsec:notation}
Recall the notation introduced in Section~\ref{sec:problem_formulation}.
Time progresses in slots, indexed by $k=0,1,2\cdots$; the beginning 
of slot $k$ is the time instant $k$. Also, the time instant 
just after the beginning of time slot is denoted by $k+$ 
\footnote{Note that the notation $t+$ denotes a time embedded to the
right of $t$ and is different from the notation $(x)^+$.
Recall that $(x)^+ := \max\{x,0\}$.}. 
Recall that the nodes take samples at the instants 
$1/r$, $2/r$, $3/r$, $\cdots$. We define the state of the queueing system here. Note that the queueing system evolves over slots.
\begin{itemize}
\item[$\bullet$] $\lambda_k \in \{1,2,\cdots, 1/r\}$ denotes the number 
      of time slots to go for the next sampling instant, at the beginning
      of time slot $k$ (see Figure~\ref{fig:lambda-delta}). Thus,

      \begin{eqnarray}
      \label{eqn:lambda-defn}
      \lambda_k & := & \frac{1}{r} - \left(k\mod\frac{1}{r}\right). 
      \end{eqnarray}
      Thus, $\lambda_0 = \frac{1}{r}, \lambda_1 = \frac{1}{r}-1,
      \cdots$, and at the sampling instants $t_b$, $\lambda_{t_b} = \frac{1}{r}$.

 \begin{figure}[t]
   \centering \
   \psfig{figure=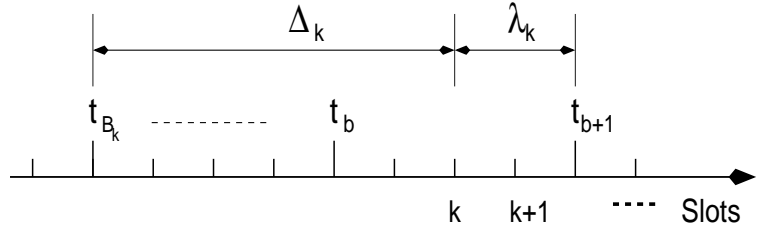,width=10cm,height=3cm}
   \caption{At time $k$, the decision maker expects samples (or processes
   samples) from batch $B_k$. Also, at time $k$, $\lambda_k$ is the 
   number of slots to go for the next sampling instant and
   $\Delta_k$ is the number of slots back at which batch $B_k$ is generated.}
\label{fig:lambda-delta}
 \end{figure}

\item[$\bullet$] $B_k \in \{1,2,3,\cdots\}$ denotes the index of the 
      batch that is expected to be or is being processed by the decision maker 
      at the beginning of time slot $k$. Note $B_0 = B_1 = \cdots = 
      B_{1/r} = 1$. Also, note that the batch $B_k$ is generated at 
      time instant $B_k/r$. 

 \begin{figure}[t]
   \centering \
   \psfig{figure=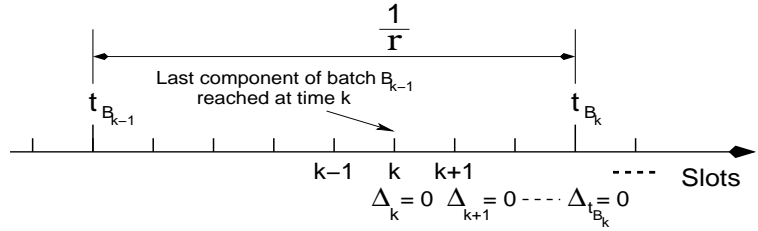,width=10cm,height=3cm}
   \caption{Illustration of a scenario in which $\Delta_k = 0$. If the 
    last component from batch $B_{k-1}$ is received at $k$, and if there 
    is no sampling instant between $t_{B_{k-1}}$ and $k$, then $\Delta_k = 0$.
    Also, note in this case that $\Delta_{k} = \Delta_{k+1} = \cdots = \Delta_{t_{B_k}} = 0$.
    In this scenario, at time instants $k,k+1,\cdots,t_{B_k}$, all the queues 
    at the sensor nodes and at the sequencer are empty, and at time instant
    $t_{B_k}+$, all sensor node queues have one packet which is generated at $t_{B_k}$.  
   }
 \label{fig:delta-future-b}
 \end{figure}
\item[$\bullet$] $\Delta_k \in \{0,1,2,\cdots\}$ denotes the delay 
      in number of time slots between the time instants $k$ and 
      $B_k/r$ (see Figure~\ref{fig:lambda-delta}). 
      \begin{eqnarray}
      \label{eqn:delta-defn}
      \Delta_k & := & \max\left\{k-\frac{B_k}{r}, 0\right\}.
      \end{eqnarray}
      Note that the batches of samples taken after $B_k/r$ and up to 
      (including) $k$ are queued either in the sensor node queues or 
      in the sequencing buffer in the fusion center. If at time $k$, 
      the fusion center receives a sample which 
      is the last sample from batch $B_{k-1}$, then $B_{k} = B_{k-1}+1$.
      If the sampling instant of the $B_k$th batch is later than $k$ 
      (i.e., $B_k/r > k$), then $\Delta_k = 0$ (up to time 
      $B_k/r$ at which instant, a new batch is generated). This corresponds 
      to the case, when all the samples generated up to time slot $k$, 
      have already been processed by the decision maker 
      (see Figure~\ref{fig:delta-future-b}). In particular, 
      $\Delta_0 = \Delta_1 = \cdots = \Delta_{\frac{1}{r}-1} = 0$.

\begin{figure}[t]
   \centering \
   \psfig{figure=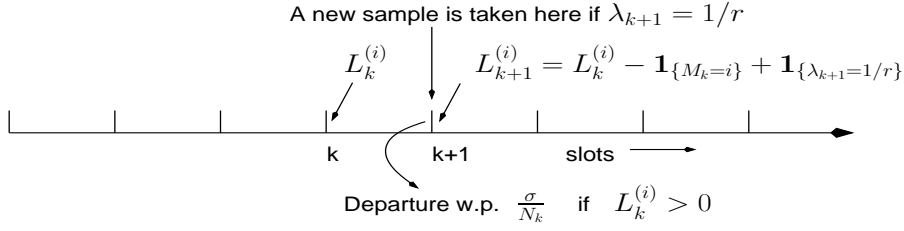,width=12cm,height=3cm}
   \caption{The evolution of $L_k^{(i)}$ from time slot $k$ to time slot $k+1$. 
      If during time slot $k$, node $i$ transmits (successfully) a packet to the 
      fusion center (i.e., $M_k = i$), then that packet is flushed out of its 
      queue at the end of time slot $k$. Also, a new sample is generated 
      (every $1/r$ slots) exactly at the beginning of a time slot. Thus,
      $L^{(i)}_{k+1}$, the queue length of sensor node $i$ just after the 
      beginning of time slot $k+1$ (i.e., at $(k+1)+$) is given by 
      $L_{k+1}^{(i)} = L_k^{(i)} - {\bf 1}_{\{M_k=i\}} + {\bf 1}_{\{\lambda_{k+1} = 1/r\}}$.} 
   \label{fig:na_processing_embedding_sensor_side}
\end{figure}

\item[$\bullet$] $L_k^{(i)} \in \{0,1,2,\cdots\}$ denotes the queue 
     length of the $i$th sensor node just after the beginning of time 
     slot $k$ (i.e., at time instant $k+$). The vector of queue lengths is 
     ${\bf L}_k = [L_k^{(1)}, L_k^{(2)}, \cdots, L_k^{(N)}]$.
     Let $N_k \in \{0,1,2,\cdots,N\}$ be the number of non--empty
     queues at the sensor nodes, just after the beginning of time slot $k$.
     \[N_k := \sum_{i=1}^N {\bf 1}_{\{L_k^{(i)}>0\}}\] 
     i.e., the number of sensor nodes that contend for slot $k$ is $N_k$.
     Hence, using the network model we have provided in Section~\ref{sec:net_delay_model},
     the evolution of $L_k^{(i)}$ (see Figure~\ref{fig:na_processing_embedding_sensor_side})
     is given by the following: 
\setlength{\extrarowheight}{0.3cm}
     \begin{eqnarray*}
     L_0^{(i)} & = & 0 \\ 
     L_{k+1}^{(i)} & = & \left\{
                         \begin{array}{lll} 
                         L_{k}^{(i)} + {\bf 1}_{\{\lambda_{k+1}=1/r\}} & \ \ \mathrm{w.p.} \ 1 & \text{if} \ N_k = 0, \\  
                         L_{k}^{(i)} + {\bf 1}_{\{\lambda_{k+1}=1/r\}} & \ \ \mathrm{w.p.} \ (1-\sigma) & \text{if} \ N_k > 0,\\  
                         \max\{L_{k}^{(i)}-1,0\} + {\bf 1}_{\{\lambda_{k+1}=1/r\}} & \ \ \mathrm{w.p.} \ \frac{\sigma}{N_{k}} & \text{if} \ N_k > 0.  
                         \end{array}
     \right. 
     \end{eqnarray*}

\setlength{\extrarowheight}{-0.3cm}
     Note that when all the samples generated up to time slot $k$ have already been processed by the 
     decision maker and $k$ is not a sampling instant, i.e., $\Delta_k = 0$ and 
     $\lambda_{k} \neq 1/r$, then ${\bf L}_k = {\bf 0}$ (as there are no outstanding 
     samples in the system). For e.g., ${\bf L}_1 = {\bf L}_2 = \cdots = 
     {\bf L}_{1/r-1} = {\bf 0}$. Also, note that just after sampling instant $t_b$, $L_{t_b}^{(i)} \geqslant 1$.

\item[$\bullet$] $M_k \in \{0,1,2,\cdots,N\}$ denotes the index of the node
     that successfully transmits in slot $k$. $M_k=0$ means that there
     is no successful transmission in slot $k$. Thus,
     from the network model we have provided in Section~\ref{sec:net_delay_model},
     we note that
     \begin{eqnarray*}
     M_{k} & = & \left\{
                         \begin{array}{lll} 
                         0 & \mathrm{w.p.} \ 1                  & \text{if} \ N_k = 0\\  
                         0 & \mathrm{w.p.} \ (1-\sigma)         & \text{if} \ N_k > 0\\  
                         j & \mathrm{w.p.} \ \frac{\sigma}{N_k} & \text{if} \ L_k^{(j)} > 0, \ j=1,2,\cdots,N
                         \end{array}
     \right. 
     \end{eqnarray*}
      
\begin{figure}[t]
   \centering \
   \psfig{figure=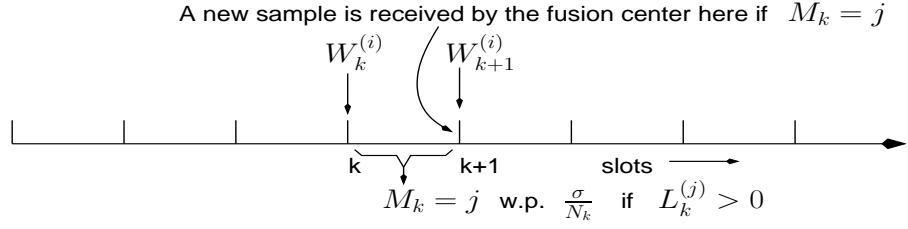,width=12cm,height=3cm}
   \caption{The evolution of $W_k^{(i)}$ from time slot $k$ to time slot $k+1$. 
            If a sample from node $i$ is transmitted (successfully) during time 
            slot $k$ (i.e., $M_k = i$), then it is received by the fusion center 
            at the end of time slot $k$ (i.e. at $(k+1)-$). If this sample is from
            batch $B_k$, it is passed on to the decision maker. Otherwise, it is 
            queued in the sequencing buffer, in which case $W_{k+1}^{(i)} = W_k^{(i)} + 1$. 
            On the other hand, if a sample from some other node $j$ is transmitted 
            (successfully) during time slot $k$ (i.e., $M_k = j\neq i$), and if this 
            sample is the last component to be received from batch $B_k$ by the fusion 
            center, then the HOL packet of the $i$th sequencing queue, if any, is also 
            delivered to the decision maker. Thus, in this case, 
            $W_{k+1}^{(i)} = \max\{W_k^{(i)} - 1,0\}$. Note that $W_{k+1}^{(i)}$ refers 
            to the queue length corresponding to node $i$ at the sequencer, at the 
            beginning of time slot $k+1$.}
   \label{fig:na_processing_embedding_fc_side}
\end{figure}

\item[$\bullet$] $W_k^{(i)} \in \{0,1,2,\cdots\}$ denotes the queue 
     length of the $i$th sequencing buffer at time $k$. The vector of 
     queue lengths is given by ${\bf W}_k = [W_k^{(1)}, W_k^{(2)}, 
     \cdots, W_k^{(N)}]$. Note that ${\bf W}_k = {\bf 0}$ if $\Delta_k 
     = 0$, i.e., the sequencing buffer is empty if there are no 
     outstanding samples in the system. In particular, ${\bf W}_0 = 
     {\bf W}_1 = \cdots = {\bf W}_{\frac{1}{r}} = {\bf 0}$. The 
     evolution of $W_k^{(i)}$ is explained in 
     Figure~\ref{fig:na_processing_embedding_fc_side}. If a sample from
     node $i$ of a batch later than $B_k$ is successfully transmitted
     during slot $k$, then $W_{k+1}^{(i)} = W_k^{(i)}+1$. If a sample from
     node $j$ of batch $B_k$ is successfully transmitted and if it is 
     the last sample to be received from batch $B_k$, then the queue
     lengths of sequencing buffer are decremented by 1, i.e.,
     $W_{k+1}^{(i)} = \max\{W_k^{(i)}-1,0\}$.   
     
\item[$\bullet$] $R_k^{(i)} \in \{0,1\}$ denotes whether the sample 
     $X_{B_k}^{(i)}$ has been received and processed by the decision 
     maker at time $k$. $R_k^{(i)} = 0$ means that the sample 
     $X_{B_k}^{(i)}$ has not yet been received by the decision maker 
     and $R_k^{(i)} = 1$ means that the sample $X_{B_k}^{(i)}$ has been 
     received and processed by the decision maker. The vector of 
     $R_k^{(i)}$s is given by ${\bf R}_k = [R_k^{(1)}, R_k^{(2)}, 
     \cdots, R_k^{(N)}]$. Note that, if $R_k^{(i)} = 0$, ${W}_k^{(i)} 
     = 0$, i.e., the $i$th sequencing buffer is empty if the sample 
     expected by the decision maker has not yet been transmitted. Also 
     note that when $\Delta_k = 0$, ${\bf R}_k = {\bf 0}$, as the 
     samples from the current batch $B_k$ have yet to be generated or 
     have just been generated.        
\end{itemize}

We now relate the queue lengths $L_k^{(i)}$ and $W_k^{(i)}$. Note that 
at the beginning of time slot $k$, $\left\lfloor\frac{k}{1/r}\right\rfloor$ 
batches have been generated so far, of which $B_k - 1$ batches are 
completely received by the decision maker. In batch $B_k$, $i$th sample
is received by the decision maker if $R_k^{(i)} = 1$. Hence, at time
$k$, $B_k - 1 + R_k^{(i)}$ samples generated by node $i$ 
have been processed by the decision maker and the remaining samples are in the 
sensor and sequencing queues. Thus, we have 
\begin{eqnarray}
     \label{eqn:L-W}
     L_k^{(i)} + W_k^{(i)} & = & \left\lfloor\frac{k}{1/r}\right\rfloor - \left(B_k - 1\right) - R_k^{(i)} \\
      & = & \left\lfloor\frac{k - B_k/r + 1/r}{1/r}\right\rfloor   - R_k^{(i)} \nonumber \\ 
      & = & \left\{
            \begin{array}{ll}
\left\lfloor\frac{\Delta_k}{1/r}\right\rfloor  + 1 - R_k^{(i)} & \hspace{11mm}\text{if} \ k > B_k/r \\
1 - R_k^{(i)} & \hspace{11mm}\text{if} \ k = B_k/r \\
- R_k^{(i)} & \hspace{11mm}\text{if} \ k < B_k/r 
            \end{array}
\right. \nonumber 
     \end{eqnarray}
Recalling the definition of $\Delta_k$, we write the above Eqn. as, 
     \begin{eqnarray}
     \label{eqn:L-W-D}
     L_k^{(i)} + W_k^{(i)} 
      & = & \left\{
            \begin{array}{ll}
\left\lfloor\frac{\Delta_k}{1/r}\right\rfloor  + 1 - R_k^{(i)} & \hspace{11mm}\text{if} \ \Delta_k > 0 \\
1  & \hspace{11mm}\text{if} \ \Delta_k = 0, \lambda_k = 1/r \\
0 & \hspace{11mm}\text{if} \ \Delta_k = 0, \lambda_k \neq 1/r. 
            \end{array}
\right.
     \end{eqnarray}
Note that in the above Eqn. $\Delta_k=0,\lambda_k=1/r$ (or equivalently $k=B_k/r$), 
corresponds to the case when the samples of batch $B_k$ have just been taken 
and all the samples from all previous batches have been processed. Thus,
in this case $L_k^{(i)} = 1$ (as $W_k^{(i)} = 0$). In the case of 
$\Delta_k=0,\lambda_k\neq1/r$ (or equivalently $k<B_k/r$), all the samples 
from all previous batches have been processed and a new sample from batch 
$B_k$ is not taken yet. Thus, in this case $L_k^{(i)} = 0$ (and $W_k^{(i)}=0$).   
Hence, given ${\bf Q}_k = \left[\lambda_k, B_k, \Delta_k, {\bf W}_k, {\bf R}_k\right]$,
the queue lengths $L_k^{(i)}$s can be computed as 
     \begin{eqnarray}
     \label{eqn:phi-L-m}
     L_k^{(i)} \ = \  \phi_{L^{(i)}}({\bf Q}_k) & := &  
       \left\{
            \begin{array}{ll}
\left\lfloor\frac{\Delta_k}{1/r}\right\rfloor  + 1 - R_k^{(i)} - W_k^{(i)} 
   & \hspace{2mm}\text{if} \ \Delta_k > 0 \\
1  & \hspace{2mm}\text{if} \ \Delta_k = 0, \lambda_k = 1/r \\
0  & \hspace{2mm}\text{if} \ \Delta_k = 0, \lambda_k \neq 1/r 
            \end{array}
\right. .\\
     \label{eqn:phi-n} 
   \text{Also}, \ N_k \ = \ \phi_N({\bf Q}_k) & := & \sum_{i=1}^N {\bf 1}_{\{ \phi_{L^{(i)}}({\bf Q}_k)  > 0\}}. 
     \end{eqnarray}
Thus, the state of the queueing system at time $k$, can be expressed as 
${\bf Q}_k = \left[\lambda_k, B_k, \Delta_k, {\bf W}_k, {\bf R}_k\right]$.
Note that the {\em decision maker can observe the state ${\bf Q}_k$ 
perfectly}. The evolution of the queueing system is explained in the next subsection.

\subsection{Evolution of the Queueing System}
\label{subsec:evol-Q}
The evolution of the queueing system from time $k$ to time $k+1$ depends
only on $M_k$, the success/no--success of contention on the random
access channel. Note that the evolution of $\lambda_k$ is deterministic
and that of $\Delta_k$ depends on $B_k$. Hence, to describe the
evolution of ${\bf Q}_k$, it is enough to explain the evolution of $B_k,
{\bf W}_k$, and ${\bf R}_k$ for various cases of $M_k$. Let ${\bf
Y}_{k+1} \in \{\emptyset\}\cup\left(\cup_{n=1}^N\mathcal{X}^n\right)$
denote the vector of samples received, if any, by the decision maker at
the beginning of slot $k+1$ (i.e., the decision maker can receive a
vector of $n$ samples where $n$ ranges from 0 to $N$).
 
\vspace{4mm}

At the beginning
of time slot $k+1$, the following possibilities arise:
\begin{itemize}
\item[$\bullet$] {\bf No successful transmission:} This corresponds to 
      the case i) when all the queues are empty at the sensor nodes 
      ($N_k = 0$), or ii) when some queues are non--empty at the sensor 
      nodes ($N_k > 0$), either no queue attempts, or there is more than 
      one attempt (resulting in a collision). In either case, $M_k = 0$ 
      and the decision maker does not receive any sample, i.e., 
      ${\bf Y}_{k+1} = \emptyset$. In this case, it is clear that 
      $B_{k+1} = B_k$, ${\bf W}_{k+1} = {\bf W}_k$, and ${\bf R}_{k+1} = 
      {\bf R}_k$.

\item[$\bullet$] {\bf Successful transmission of node $j$'s sample from a later batch:} 
      This corresponds to the case, when the decision maker has already received the 
      $j$th component of the current batch $B_k$ (i.e., $R_k^{(j)} = 1$) and that it has 
      not received some sample, say $i \neq j$, from the batch $B_k$ 
      (i.e., $R_k^{(i)} = 0$, for some $i$). The received sample (is an out--of--sequence
      sample and) is queued in the 
      sequencing buffer ($W_{k+1}^{(j)} = W_k^{(j)} + 1$). Thus, in this case, 
      $M_k = j$ and the decision maker does not receive 
      any sample, i.e., ${\bf Y}_{k+1} = \emptyset$. 
      In this case, it is clear that $B_{k+1} = B_k$, ${\bf W}_{k+1} =
      {\bf W}_k + {\bf e}_j$, and ${\bf R}_{k+1} = {\bf R}_k$.
 
\item[$\bullet$] {\bf Successful transmission of node $j$'s current sample which 
      is not the last component of the batch $B_k$:} This corresponds to 
      the case when the decision maker has not received the $j$th component of the 
      batch $B_k$ before time slot $k$ ($R_k^{(j)} = 0$), and that it has received all the 
      samples that are generated earlier than that of the successful sample. Also,
      the fusion center is yet to receive some other component of batch $B_k$ (i.e.,
      $\sum_{i=1}^N R_k^{(i)} < N-1$). 
      Thus, in this case, $M_k = j$ and the decision maker receives the 
      sample ${\bf Y}_{k+1} = X_{B_k}^{(j)}$. 
      In this case, it is clear that $B_{k+1} = B_k$, ${\bf W}_{k+1} =
        {\bf W}_k$, and ${\bf R}_{k+1} = {\bf R}_k+{\bf e}_j$.  

\item[$\bullet$] {\bf Successful transmission of node $j$'s current sample which 
      is the last component of the batch $B_k$:} This corresponds to the case when 
      the decision maker has not received the $j$th component of the batch $B_k$ 
      before time slot $k$ ($R_k^{(j)} = 0$), and that it has received all the samples that are 
      generated earlier than that of the successful sample. Also, this sample is 
      the last component of batch $B_k$, that is received by the fusion center.  
      (i.e., $\sum_{i=1}^N R_k^{(i)} = N-1$). In this case (along with the 
      received sample), the queues of the sequencing buffer deliver the 
      {\em head of line} (HOL) components (which correspond to the batch index $B_k+1$), 
      if any, to the decision maker and the queues are decremented by one 
      ($W_{k+1}^{(i)} = \max\{W_k^{(i)} - 1,0\}$).   
      Thus, $M_k = j$ and the decision maker receives the 
      vector of samples ${\bf Y}_{k+1} = 
      \left[X_{B_k}^{(j)},X_{B_k+1}^{(i'_1)}, 
      X_{B_k+1}^{(i'_2)}, \cdots, X_{B_k+1}^{(i'_{n-1})}\right]$ where 
      $W_k^{(i)} > 0$ for $i \in \{i'_1,i'_2,\cdots i'_{n-1}\}$, 
      and $W_k^{(i)} = 0$ for $i \notin \{i'_1,i'_2,\cdots i'_{n-1}\}$. 
      In this case, $B_{k+1} = B_k+1$, ${\bf W}_{k+1} =
        {\bf W}_k-{\bf e}_{i_1'}-{\bf e}_{i_2'}-\cdots-{\bf e}_{i_{n-1}'}$,
          and ${\bf R}_{k+1} = {\bf e}_{i_1'}+{\bf
e}_{i_2'}+\cdots+{\bf e}_{i_{n-1}'}$.  
\end{itemize}
Thus, the state of the queueing system at time $k+1$ can be described by 
\begin{eqnarray*}
{\bf Q}_{k+1} & = & \phi_{\bf Q}({\bf Q}_k,M_k)\nn 
             & := & 
 \left[\phi_{\lambda}({\bf Q}_k,M_k),
 \phi_{B}({\bf Q}_k,M_k),
 \phi_{\Delta}({\bf Q}_k,M_k),
 \phi_{\bf W}({\bf Q}_k,M_k),
 \phi_{\bf R}({\bf Q}_k,M_k)\right].
\end{eqnarray*}

In the next subsection, we provide a model of the dynamical system whose
state has the state of nature $\Theta_k$ as one of its constituents. The
quickest detection of change of $\Theta_k$ from 0 to 1 (at a random
time $T$) is the focus of this paper.
 
\subsection{System State Evolution Model} 
\label{subsec:system-state-evolution-model} 
Let $\Theta_k \in \{0,1\}, k \geqslant 0$, be the state of nature at the
beginning of time slot $k$. Recall that $T$ is the change point, i.e.,
for $k<T$, $\Theta_k=0$ and for $k\geqslant T$, $\Theta_k = 1$, and that
the distribution of $T$ is given in Eqn.~\ref{eqn:distribution_of_T}.
The state $\Theta_k$ is observed only through the sensor measurements,
but these are delayed. We will formulate the optimal {\sf NADM} change  
detection problem as a partially observable Markov decision process
(POMDP) with the delayed observations.
The approach and
the terminology used here is in accordance with the stochastic control
framework in \cite{books.bertsekas00a}. 
At time $k$, a sample, if any,
that the decision maker receives is generated at time $B_k/r < k$ (i.e., samples arrive at
the decision maker with a network--delay of $\Delta_k = k-\frac{B_k}{r}$
slots). 
To make an inference about $\Theta_k$ from the
sensor measurements, we must consider the vector of states of nature 
that corresponds to the time instants $k-\Delta_k, k-\Delta_k+1, \cdots,
k$. We define the vector of states at time $k$, ${\bf\Theta}_k := [\Theta_{k-\Delta_k},
\Theta_{k-\Delta_k+1}, \cdots, \Theta_k]$. Note that the length of the
vector depends on the network--delay $\Delta_k$. When $\Delta_k > 0$,
${\bf\Theta}_k =  [\Theta_{\frac{B_k}{r}},\Theta_{\frac{B_k}{r}+1},
\cdots, \Theta_k]$,
and when $\Delta_k = 0$, ${\bf\Theta}_k$ is just [$\Theta_k]$.

Consider the discrete--time
system, which at the beginning of time slot $k$ is described by the
state  
\begin{eqnarray*} 
\Gamma_k & = & [{\bf Q}_k, {\bf \Theta}_k],
\end{eqnarray*} 
where we recall that
\begin{eqnarray*} 
{\bf Q}_{k} & = & \bigg[\lambda_k, B_k, \Delta_k, {\bf W}_k, {\bf R}_k\bigg],\\
{\bf\Theta}_k & = & [\Theta_{k-\Delta_k}, \Theta_{k-\Delta_k+1},
\cdots, \Theta_k].
\end{eqnarray*}
Note that $\Gamma_0 = \left[\left[\frac{1}{r}, 1, 0, {\bf 0}\right], 
\Theta_0\right]$.
At each time slot $k$, we have the following set of controls $\{0,1\}$
where 0 represents ``{\sf take another sample}'', and 1 represents
``{\sf stop and declare change}''.  Thus, at time slot $k$, when the
control chosen is 1, the state $\Gamma_{k+1}$ is given by a terminal
absorbing state ${\sf t}$;  when the control chosen is 0, the state
evolution is given by $\Gamma_{k+1}  =  [{\bf Q}_{k+1}, 
{\bf \Theta}_{k+1}]$, where 
\begin{eqnarray}
\label{eqn:gamma_evolution}
{\bf Q}_{k+1}          & = & {\bf \phi}_{{\bf Q}}({\bf Q}_k,M_k),\nn
{\bf\Theta}_{k+1}            
  & = & \left\{
	  \begin{array}{ll}
	  \left[\Theta_k + {\bf 1}_{\{T = k+1\}}\right], & \text{if} \ \Delta_{k+1}=0 \\
	   \left[\Theta_{k-\Delta_k}, \Theta_{k-\Delta_k+1},
\cdots, \Theta_k, \Theta_k + {\bf 1}_{\{T = k+1\}}\right], & \text{if} \
	  \Delta_{k+1}=\Delta_k+1 \\
	   \left[\Theta_{k-\Delta_k+\frac{1}{r}},
	   \Theta_{k-\Delta_k+\frac{1}{r}+1},
\cdots, \Theta_k, \Theta_k + {\bf 1}_{\{T = k+1\}}\right], & \text{if} \
	  \Delta_{k+1}=\Delta_k+1-\frac{1}{r}. 
	  \end{array}
	  \right.\nonumber\\
	  & =: & {\bf \phi}_{\bf \Theta}\left({\bf \Theta}_k,{\bf Q}_k,M_k,{\bf
	  1}_{\{T = k+1\}}\right)
\end{eqnarray}
where it is easy to observe that $\Theta_k + {\bf 1}_{\{T
= k+1\}} = \Theta_{k+1}$. When $\Delta_{k+1} = \Delta_k+1$, the 
batch $B_k$ is still in service, and hence, 
in addition to the current state $\Theta_{k+1} = 
\Theta_k + {\bf 1}_{\{T = k+1\}}$, 
we need to keep the states   
$\Theta_{k-\Delta_k}, \Theta_{k-\Delta_k+1},
\cdots, \Theta_k$. Also, when $\Delta_{k+1} =
\Delta_k+1-\frac{1}{r}$, then the batch index is incremented, and hence, 
the vector of states that determines the distribution of the observations 
sampled at or after $B_{k+1}/r$ and before $k+1$ is given by
$\left[\Theta_{k-\Delta_k+\frac{1}{r}}, \Theta_{k-\Delta_k+\frac{1}{r}+1},
\cdots, \Theta_k\right.$, $\left.\Theta_k + {\bf 1}_{\{T = k+1\}}\right]$. 

Define $O_k := {\bf 1}_{\{T = k+1\}}$, and define ${\bf N}_k := [M_k, O_k]$ 
be the state--noise during time slot $k$. 
The distribution of state--noise ${\bf N}_k$
given the state of the discrete--time system $\Gamma_k$ is given by $\prob{M_{k} = m, O_k = o
\big\arrowvert\Gamma_k = \left[{\bf q},{\mbox{\boldmath$\theta$}}\right]}$ and
is  the product of the distribution functions, 
$\prob{M_k = m   \big\arrowvert\Gamma_k = \left[{\bf
q},{\mbox{\boldmath$\theta$}}\right]}$ and 
$\prob{O_{k} = o \big\arrowvert\Gamma_k = \left[{\bf q},{\mbox{\boldmath$\theta$}}\right] 
}$. 
These distribution functions are provided in Appendix -- III. 

The problem is to detect the change in the state $\Theta_k$ as early as possible 
by sequentially observing the samples at the decision maker. 
\subsection{The NADM Change Detection Problem}
\label{subsec-the-NADM-change-detection-problem}
We now formulate the {\sf NADM} change--detection problem in which the observations 
from the sensor nodes are sent over a random access network to the fusion center 
and the fusion center processes the samples in the {\sf NADM} mode.

In Section~\ref{subsec:system-state-evolution-model}, we defined the
state $\Gamma_k = [{\bf Q}_k, {\bf \Theta}_k]$ on which we formulate
the {\sf NADM} change detection problem as a POMDP.
Recall that at the beginning of slot $k$, the decision maker receives a vector of sensor 
measurements ${\bf Y}_k$ and observes the state ${\bf Q}_k$ of the queueing 
system. Thus, at time $k$, ${\bf Z}_{k} = [{\bf Q}_k,{\bf Y}_k]$ is the 
observation of the decision maker about the state of the dynamical system $\Gamma_k$.

Let $A_k \in \{0,1\}$ be the control 
(or action) chosen by the decision maker after having observed ${\bf Z}_k$ 
at $k$. Recall that $0$ represents ``$\mathsf{take~another~sample}$'' and $1$ 
represents the action ``$\mathsf{stop~and~declare~change}$''. 
Let ${\bf I}_k = 
\left[{\bf Z}_{[0:k]},A_{[0:k-1]}\right]$ be the {\em information} 
vector\footnote{The notation ${\bf Z}_{[k_1:k_2]} := 
{\bf Z}_{k_1},{\bf Z}_{k_1+1},\cdots,{\bf Z}_{k_2}$} 
that is available to the decision maker, at the beginning of time slot $k$. 
Let $\tau$ be a stopping time with respect to the sequence of random variables 
${\bf I}_1, {\bf I}_2, \cdots$. 
Note that $A_k = 0$ for $k < \tau$ and $A_k = 1$ for $k \geqslant \tau$.
We are interested in obtaining a stopping time $\tau$ (with respect to
the sequence ${\bf I}_1,{\bf I}_2,\cdots$) that 
minimizes the mean detection delay subject to a constraint on the probability 
of false alarm. 
\begin{eqnarray}
 \min & & \EXP{(\tau -T)^+} \\
\text{such that}& & \prob{\tau < T} \leqslant \alpha \nonumber
\label{eqn:na-cdp}
\end{eqnarray}
Note that in the case of NADM, at any time $k$, a decision about the 
change is made based on the information ${\bf I}_k$ (which includes the
batch index we are processing and the delays). Thus, in the case of {\sf
NADM}, false alarm is defined as the event $\{\tau < T\}$ and, hence,
$\tau \geqslant T$ is not classified as a false alarm even if it is due to
pre--change measurements only. However, in the case of {\sf NODM},
this is classified as a false alarm as the
decision about the change is based on the batches received
until time $k$. 

Let $c$ be the cost per unit delay in detection. We are interested in 
obtaining a stopping time $\tau^*$ that minimizes the expected cost
(Bayesian risk) given by
\begin{eqnarray}
\label{eqn:na-opt-prob-middle} 
 C(c,\tau^*) 
& = & \min_\tau \EXP{{\bf 1}_{\{\Theta_\tau = 0\}} + c \cdot (\tau-T)^+ } \nn
& = & \min_\tau \EXP{{\bf 1}_{\{\Theta_\tau = 0\}} 
       + c \cdot \sum_{k=0}^{\tau-1} {\bf 1}_{\{\Theta_k = 1\}}  } \nn
& = & \min_\tau \EXP{g_\tau(\Gamma_\tau,A_\tau) +  \sum_{k=0}^{\tau-1} g_k(\Gamma_k,A_k)  }\nn 
& = & \min_\tau \EXP{ \sum_{k=0}^{\infty} g_k(\Gamma_k,A_k)}
\end{eqnarray}
where, as defined earlier, $\Gamma_k = [{\bf Q}_k, {\bf \Theta}_k]$.
Let ${\mbox{\boldmath$\theta$}} = [\theta_\delta,
\theta_{\delta-1},\cdots,\theta_1,\theta_0]$. We define for $k \leqslant
\tau$
\begin{eqnarray}
\label{eqn:g-defn}
g_k([{\bf q}, {\mbox{\boldmath$\theta$}}] ,a) 
& = & 
\left\{
\begin{array}{ll}
    0, & \text{if} \ \theta_0 = 0,  a = 0\\
    1, & \text{if} \ \theta_0 = 0,  a = 1\\
    c, & \text{if} \ \theta_0 = 1,  a = 0\\
    0, & \text{if} \ \theta_0 = 1,  a = 1
\end{array}
\right.
\end{eqnarray}
and for $k > \tau$, $g_k(\cdot, \cdot) :=0$. 
Recall that $A_k = 0$ for $k < \tau$ and $A_k = 1$ for $k \geqslant \tau$.
Note that $A_k$, the control at time slot $k$, depends only on ${\bf I}_k$.
Thus, every stopping time $\tau$, corresponds to a policy 
$\mu = (\mu_0,\mu_1,\cdots)$ such that 
$A_k = \mu_k({\bf I}_k)$, with $A_k = 0$ for $k < \tau$ and 
$A_k = 1$ for $k \geqslant \tau$. Thus,  
Eqn.~\ref{eqn:na-opt-prob-middle} can be written as  
\begin{eqnarray}
\label{eqn:na-opt-prob} 
 C(c,\tau^*) 
& = & \min_\mu \EXP{\sum_{k=0}^\infty g_k(\Gamma_k,A_k) }\nn
& = & \min_\mu \sum_{k=0}^\infty \EXP{ g_k(\Gamma_k,A_k) } \ \text{(by
monotone convergence theorem)} \nn
& = & \min_\mu \sum_{k=0}^\infty \EXP{ g_k(\Gamma_k,\mu_k({\bf I}_k)) }
\end{eqnarray}
Since ${\bf \Theta}_k$ is observed
only through ${\bf I}_k$, we look at a sufficient statistic for ${\bf I}_k$ in the next
subsection.


\subsection{Sufficient Statistic} 
\label{subsec:sufficient-statistic} 
In Section~\ref{subsec:evol-Q}, we have illustrated the evolution of the queueing 
system ${\bf Q}_k$ and we have shown in different scenarios, the vector ${\bf Y}_k$ 
received by the decision maker. Recall from Section~\ref{subsec:evol-Q} that
\begin{align*}
{\bf Y}_{k+1} = & \left\{
\begin{array}{ll}
 \emptyset, & \ \text{if} \ M_k = 0,\\
 \emptyset, & \ \text{if} \ M_k = j > 0, R_k^{(j)} = 1,\\ 
 Y_{k+1,0}, & \ \text{if} \ M_k = j > 0, R_k^{(j)} = 0, {\mysum i 1 N} R_k^{(i)} < N-1\\ 
 \left[Y_{k+1,0},Y_{k+1,1},\cdots,Y_{k+1,n}\right], & \ \text{if} \ M_k = j > 0, R_k^{(j)} = 0,
{\mysum i 1 N} R_k^{(i)} = N-1,\\
& \ \ \ {\mysum i 1 N} {\bf 1}_{\{W_k^{(i)} > 0\}} = n. 
\end{array}
\right.
\end{align*}
Note that $Y_{k+1,0}$ corresponds to $X_{B_k}^{(M_k)}$. The last part of
the above equation corresponds to the last pending sample of the batch
$B_k$ arriving at the decision maker at time $k+1$, with some samples
from batch $B_k+1$ $(=B_{k+1})$ also being released by the sequencer.
In this case, the state of nature at the sampling instant of the batch 
$B_{k+1}=B_k+1$ is $\Theta_{k-\Delta_k+1/r}$. Note that 
$\Theta_{k-\Delta_k+1/r}$ is a component of the vector ${\bf \Theta}_k$
as $k-\Delta_k+1/r = (B_k+1)/r < k$.
Thus, the distribution of 
$Y_{k+1,0},Y_{k+1,1},\cdots,Y_{k+1,n}$ is given by
\begin{align*}
f_{Y_{k+1,0}}(\cdot) = & \left\{
\begin{array}{ll}
f_0(\cdot), & \text{if} \  \Theta_{k-\Delta_k} = 0 \\
f_1(\cdot), & \text{if} \  \Theta_{k-\Delta_k} = 1 \ \text{and}
\end{array}
\right. \\
f_{Y_{k+1,i}}(\cdot) = & \left\{
\begin{array}{ll}
f_0(\cdot), & \text{if} \  \Theta_{k-\Delta_k+1/r}  = 0 \\
f_1(\cdot), & \text{if} \  \Theta_{k-\Delta_k+1/r}  = 1
\end{array}
\right., \ i=1,2,\cdots,n. 
\end{align*}
Thus, at time $k+1$, the current observation ${\bf Y}_{k+1}$ depends only on the 
previous state $\Gamma_k$, previous action $A_k$, and the previous noise 
of the system ${\bf N}_k$. Thus, a sufficient statistic is  
$\left[\prob{\Gamma_k=\left[{\bf q}, {\mbox{\boldmath$\theta$}}\right] 
\big\arrowvert {\bf I}_k}\right]_{[{\bf q}, {\mbox{\boldmath$\theta$}} ]\in\mathcal{S}}$ 
(see page 244, \cite{books.bertsekas00a}) where $\mathcal{S}$ is the 
set of all states of the dynamical system defined in 
Sec.~\ref{subsec:system-state-evolution-model}. 
Let ${\bf q} = [\lambda,b,\delta,{\bf w},{\bf r}]$.
Note that 
{
\begin{eqnarray}
\label{eqn:sufficient-stat}
& &\prob{\Gamma_k = \left[{\bf q},
{\mbox{\boldmath$\theta$}}\right] \big\arrowvert{\bf I}_k}\nn   
& = &\prob{\Gamma_k = \left[{\bf q},
{\mbox{\boldmath$\theta$}}\right] \big\arrowvert{\bf I}_{k-1}, {\bf
Q}_k, {\bf Y}_k}\nn   
& = & {\bf 1}_{\{{\bf Q}_k = {\bf q} \}} \cdot 
\prob{{\bf \Theta}_k = {\mbox{\boldmath$\theta$}}\big\arrowvert {\bf
I}_{k-1}, {\bf Q}_k = {\bf q}, {\bf Y}_k } \nn 
& = & {\bf 1}_{\{{\bf Q}_k = {\bf q} \}}\nn
& & \cdot 
 \prob{\left[\Theta_{k-\delta},
\Theta_{k-\delta+1},\cdots,\Theta_{k-1},\Theta_k\right] = 
\left[\theta_{\delta}, \theta_{\delta-1}, \cdots, \theta_1,
\theta_0\right] \big\arrowvert {\bf
I}_{k-1}, {\bf Q}_k = {\bf q},{\bf Y}_k } \nn 
& = & {\bf 1}_{\{{\bf Q}_k = {\bf q} \}} \cdot 
\prob{\Theta_{k-\delta} = \theta_{\delta}
\big\arrowvert {\bf I}_{k-1}, {\bf Q}_k = {\bf q}, {\bf Y}_k
} \nn 
& & \cdot \prod_{j=1}^\delta \prob{\Theta_{k-\delta+j} = \theta_{\delta-j}
\big\arrowvert \Theta_{k-\delta+j'}=\theta_{\delta-j'},
j'=0,1,\cdots,j-1, {\bf I}_{k-1}, {\bf Q}_k = {\bf q}, {\bf Y}_k }\nn 
\end{eqnarray}
}
Observe that
\begin{eqnarray*}
 & & \prob{\Theta_{k-\delta+j} = \theta_{\delta-j} 
\big\arrowvert \Theta_{[k-\delta:k-\delta+j-2]},  \Theta_{k-\delta+j-1}=
0, {\bf I}_{k-1}, {\bf Q}_k = {\bf q},
{\bf Y}_k }\nn 
& = & \left\{
\begin{array}{ll}
1-p, & \text{if} \ \theta_{\delta-j} = 0\\
p,   & \text{if} \ \theta_{\delta-j} = 1
\end{array}
\right.
\end{eqnarray*}
and
\begin{eqnarray*}
&&  \prob{\Theta_{k-\delta+j} = \theta_{\delta-j} 
\big\arrowvert \Theta_{[k-\delta:k-\delta+j-2]},
\Theta_{k-\delta+j-1}=1,  {\bf I}_{k-1}, {\bf Q}_k =
{\bf q},
{\bf Y}_k }\nn 
& = & \left\{
\begin{array}{ll}
0, & \text{if} \ \theta_{\delta-j} = 0\\
1,   & \text{if} \ \theta_{\delta-j} = 1.
\end{array}
\right.
\end{eqnarray*}
This is because given $\Theta_{k-\delta} $, 
the events 
$\{\Theta_{k-\delta+j} = \theta_{\delta-j} \}$, 
$\left\{{\bf I}_{k-1}, {\bf Q}_k = {\bf q}, 
{\bf Y}_k
\right\}$ are conditionally independent. 
Thus, Eqn.~\ref{eqn:sufficient-stat} can be written as
{
\begin{eqnarray}
\label{eqn:psi-suff-stat}
& &\prob{\Gamma_k = \left[{\bf q}, 
{\mbox{\boldmath$\theta$}}\right] \big\arrowvert{\bf I}_k}\nn   
& = & \left\{
\begin{array}{ll}
{\bf 1}_{\{{\bf Q}_k = {\bf q}\}} \cdot 
\prob{\Theta_{k-\delta} = 1 
\big\arrowvert {\bf I}_{k-1}, {\bf Q}_k = {\bf q}, {\bf Y}_k
}, & \text{if} \  {\mbox{\boldmath$\theta$}} = {\bf 1}\\
{\bf 1}_{\{{\bf Q}_k = {\bf q} \}} \cdot 
\prob{\Theta_{k-\delta} = 0 
\big\arrowvert {\bf I}_{k-1}, {\bf Q}_k = {\bf q}, {\bf Y}_k
} \cdot (1-p)^{\delta-j-1} p, & \text{if} \  {\mbox{\boldmath$\theta$}} =
[0,\cdots,0,\underbrace{1}_{\theta_{j}}, \cdots,
1]\\
{\bf 1}_{\{{\bf Q}_k = {\bf q} \}} \cdot 
\prob{\Theta_{k-\delta} = 0 
\big\arrowvert {\bf I}_{k-1}, {\bf Q}_k = {\bf q}, {\bf Y}_k
} \cdot (1-p)^{\delta}, & \text{if} \  {\mbox{\boldmath$\theta$}}
= {\bf 0}
\end{array}
\right.\nn
\end{eqnarray}
}
Define
$\widetilde{\Theta}_k := \Theta_{k-\Delta_k}$,
and define
\begin{eqnarray}
\label{eqn:pi}
\Psi_k & := & \prob{ {\widetilde{\Theta}_k} = 1 \big\arrowvert {\bf
I}_{k-1}, {\bf Q}_k = [\lambda,b,\delta,{\bf w},{\bf r}],{\bf Y}_k 
}\nn 
&= & \prob{ \Theta_{k - \delta}=1 \big\arrowvert {\bf I}_{k-1}, {\bf
Q}_k = [\lambda,b,\delta,{\bf w},{\bf r}],{\bf Y}_k  }\nn 
\Pi_k  & := & \prob{ {\Theta_k} = 1 \big\arrowvert {\bf I}_{k-1}, {\bf
Q}_k = [\lambda,b,\delta,{\bf w},{\bf r}],{\bf Y}_k   } \nn
& = &
\prob{ T \leqslant k \big\arrowvert {\bf I}_{k-1}, {\bf Q}_k =
[\lambda,b,\delta,{\bf w},{\bf r}], {\bf Y}_k  }. \nn 
\end{eqnarray}
Thus, Eqn.~\ref{eqn:psi-suff-stat} can be written as 
\begin{eqnarray}
\label{eqn:1psi-suff-stat}
& &\prob{\Gamma_k = \left[[\lambda,b,\delta,{\bf w},{\bf r}] ,
{\mbox{\boldmath$\theta$}}\right] \big\arrowvert{\bf I}_k}\nn   
& = & \left\{
\begin{array}{ll}
{\bf 1}_{\{{\bf Q}_k = [\lambda,b,\delta,{\bf w},{\bf r}] \}} \cdot
\Psi_k, & \text{if} \  {\mbox{\boldmath$\theta$}} = {\bf 1}\\
{\bf 1}_{\{{\bf Q}_k = [\lambda,b,\delta,{\bf w},{\bf r}] \}} \cdot 
(1-\Psi_k) \cdot (1-p)^{\delta-j-1} p, & \text{if} \  {\mbox{\boldmath$\theta$}} =
[0,\cdots,0,\underbrace{1}_{\theta_{j}}, \cdots,
1]\\
{\bf 1}_{\{{\bf Q}_k = [\lambda,b,\delta,{\bf w},{\bf r}] \}} \cdot 
(1-\Psi_k) \cdot (1-p)^{\delta}, & \text{if} \  {\mbox{\boldmath$\theta$}}
= {\bf 0}
\end{array}
\right.
\end{eqnarray}

We now find a relation between $\Pi_k$ and $\Psi_k$ in the following
Lemma.
\begin{lemma}
\label{lemma}
The relation between the conditional probabilities 
$\Pi_k$ and $\Psi_k$ is given by
\begin{eqnarray}
\label{eqn:psi-theta}
\Pi_k & = & \hspace{-0mm} \Psi_k + (1-\Psi_k)\left(1 - (1-p)^{\delta}\right)
\end{eqnarray}
\end{lemma}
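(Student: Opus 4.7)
The plan is to decompose $\Pi_k$ by conditioning on $\widetilde\Theta_k = \Theta_{k-\delta}$ and then invoking (i) the monotonicity of the state process $\{\Theta_k\}$, (ii) a conditional independence between the information vector $\mathbf{I}_k$ and the future of the change process given $\Theta_{k-\delta}$, and finally (iii) the memoryless property of the geometric change time.

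\medskip

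First I would write
\begin{eqnarray*}
\Pi_k & = & \prob{\Theta_k=1\big\arrowvert \mathbf{I}_{k-1},\mathbf{Q}_k,\mathbf{Y}_k}\\
      & = & \prob{\Theta_k=1,\widetilde\Theta_k=1\big\arrowvert \mathbf{I}_{k-1},\mathbf{Q}_k,\mathbf{Y}_k}
            +\prob{\Theta_k=1,\widetilde\Theta_k=0\big\arrowvert \mathbf{I}_{k-1},\mathbf{Q}_k,\mathbf{Y}_k}.
\end{eqnarray*}
Since $\Theta_j$ is non-decreasing in $j$, the event $\{\widetilde\Theta_k=1\}=\{\Theta_{k-\delta}=1\}$ implies $\{\Theta_k=1\}$, so the first summand equals $\Psi_k$. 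For the second summand I would factor it as
\[
(1-\Psi_k)\cdot\prob{\Theta_k=1\big\arrowvert \Theta_{k-\delta}=0,\mathbf{I}_{k-1},\mathbf{Q}_k,\mathbf{Y}_k}.
\]

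\medskip

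The main step, and the one requiring care, is justifying
\[
\prob{\Theta_k=1\big\arrowvert \Theta_{k-\delta}=0,\mathbf{I}_{k-1},\mathbf{Q}_k,\mathbf{Y}_k}
\;=\;\prob{\Theta_k=1\big\arrowvert \Theta_{k-\delta}=0}.
\]
This is precisely the conditional independence property invoked (without formal proof) in the sufficient-statistic derivation preceding the lemma. The key observation is that every sample appearing in $\mathbf{I}_k$ belongs to batch $B_k$ or earlier and is therefore taken at a sampling instant $\leq B_k/r=k-\delta$; the queueing state $\mathbf{Q}_k$ is driven only by the MAC service process, which is independent of the change time $T$. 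Hence $\mathbf{I}_k$ is a function of $\{\Theta_j:j\leq k-\delta\}$ together with quantities independent of $T$. Given $\Theta_{k-\delta}=0$ (equivalently $T>k-\delta$), the Bernoulli$(p)$ increments driving the transitions $\Theta_{k-\delta}\to\Theta_{k-\delta+1}\to\cdots\to\Theta_k$ are fresh and independent of everything that happened up to slot $k-\delta$, hence independent of $\mathbf{I}_k$.

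\medskip

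Once this conditional independence is in hand, the remainder is a direct application of the memoryless property of the geometric change time: using $\prob{T=k\mid T>0}=p(1-p)^{k-1}$, we get
\[
\prob{\Theta_k=1\big\arrowvert \Theta_{k-\delta}=0}
\;=\;\prob{T\leq k\big\arrowvert T>k-\delta}
\;=\;1-(1-p)^{\delta}.
\]
Combining the three pieces yields $\Pi_k=\Psi_k+(1-\Psi_k)\bigl(1-(1-p)^{\delta}\bigr)$, as claimed. I expect the only real obstacle to be articulating (ii) cleanly, in particular making explicit that HOL samples released from the sequencer, although they arrive at slot $k$, were generated at a sampling instant no later than $k-\delta$ because $\Delta_k=\delta$ already reflects the batch that is currently being processed; once this is spelled out, the rest is routine.
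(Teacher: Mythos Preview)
Your proposal is correct and follows essentially the same route as the paper. The paper decomposes $\{T\le k\}$ into $\{T\le k-\delta\}$ and $\{k-\delta<T\le k\}$, which is identical to your split on $\widetilde\Theta_k\in\{0,1\}$; the only difference is that the paper justifies the conditional-independence step more explicitly by writing out the joint density of $(\mathbf{I}_{k-1},\mathbf{Q}_k,\mathbf{Y}_k)$ and checking that it is the same under $\{k-\delta<T\le k\}$ as under $\{T>k-\delta\}$, whereas you argue it structurally from the fact that every sample in $\mathbf{I}_k$ was generated at an instant $\le k-\delta$ and the queueing state is independent of $T$.
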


\begin{proof}
See Appendix -- IV.
\end{proof}

From 
Eqn.~\ref{eqn:1psi-suff-stat} and
Lemma~\ref{lemma}, it is clear that a sufficient statistic for 
${\bf I}_k$ is $\nu_k = [{\bf Q}_k, \Pi_k]$. Also, we show in 
Appendix -- V that $\nu_k$ can be computed recursively,
i.e., when $A_k = 0$, $\nu_{k+1} = \left[{\bf Q}_{k+1}, \Pi_{k+1}\right]
= \left[{\bf Q}_{k+1}, \phi_{\Pi}(\nu_k,{\bf Z}_{k+1})\right]$, 
and when $A_k = 1$, $\nu_{k+1} = {\sf t}$, a terminal state. 
Thus, $\nu_k$ can be thought of as entering into a terminating (absorbing) state 
${\sf t}$ at $\tau$ (i.e., $\nu_k = [{\bf Q}_k, \Pi_k]$ for $k < \tau$ and 
$\nu_k = {\sf t}$ for $k \geqslant \tau$). Since $\nu_k$ is sufficient, 
for every policy $\mu_k$ there corresponds a policy 
$\widetilde{\mu}_k$ such that  
$\mu_k({\bf I}_k) = \widetilde{\mu}_k(\nu_k)$ 
(see page 244, \cite{books.bertsekas00a}).

\subsection{Optimal Stopping Time $\tau$}
\label{subsec:optimal-stopping-time}
Let $\mathcal{Q}$ be the set of all possible states of the queueing system, ${\bf Q}_k$. 
Thus the state space of the sufficient statistic is $\mathcal{N} = 
\left(\mathcal{Q}\times[0,\ 1]\right)\cup\{{\sf t}\}$. Recall that the action
space is $\mathcal{A} = \{0,1\}$. Define the one--stage cost function 
$\widetilde{g}:\mathcal{N} \times \mathcal{A} \to \mathbb{R}_+$ as follows. Let 
$\nu \in \mathcal{N}$ be a state of the system and let $a \in \mathcal{A}$ 
be a control. Then,
\begin{eqnarray*} 
\widetilde{g}(\nu,a) & = & 
\left\{
\begin{array}{ll}
    0 & \text{if} \ \nu =    {\sf t}\\
  c\cdot \pi & \text{if} \ \nu = [{\bf q}, \pi], a = 0\\
1-\pi & \text{if} \ \nu = [{\bf q}, \pi], a = 1.
\end{array}
\right.
\end{eqnarray*}
Note from Eqn.~\ref{eqn:g-defn} for $k \leqslant \tau$ that 
\begin{eqnarray*}
\EXP{g_k(\Theta_k,A_k)}
 & = & \EXP{g_k(\Theta_k,\mu_k({\bf I}_k))}\\ 
 & = & \EXP{\EXP{g_k(\Theta_k,\mu_k({\bf I}_k)) \bigg\arrowvert {\bf I}_k}}\\ 
 & = & \EXP{\widetilde{g}(\nu_k,\widetilde{\mu}_k(\nu_k)) } 
\end{eqnarray*}
and for $k > \tau$,
\begin{eqnarray*}
\EXP{g_k(\Theta_k,A_k)}
& = & 0 \nn 
& = & \EXP{\widetilde{g}({\sf t},\cdot)} 
\end{eqnarray*}
Since, $\{\nu_k\}$ is a controlled Markov process, and the one--stage cost
function $\widetilde{g}(\cdot,\cdot)$, the transition probability kernel for $A_k=1$
and for $A_k = 0$ (i.e., $\prob{{\bf Z}_{k+1}\big\arrowvert \nu_k}$), do not 
depend on time $k$, and the optimization problem defined in Eqn.~\ref{eqn:na-opt-prob} 
is over infinite horizon, it is sufficient 
to look for an optimal policy in the 
space of stationary Markov policies 
(see page 83, \cite{books.bertsekas00b}).
Thus, the optimization problem defined in Eqn.~\ref{eqn:na-opt-prob} can be written as 
\begin{eqnarray} 
 C(c,\tau^*) & = &  \min_{\widetilde{\mu}}\sum_{k=0}^\infty {\mathsf E}\left[\widetilde{g}\big(\nu_k,\widetilde{\mu}_k(\nu_k)\big)\right]\nn
  & = &  \sum_{k=0}^\infty {\mathsf E}\left[\widetilde{g}\big(\nu_k,\widetilde{\mu}^*(\nu_k)\big)\right]. 
\end{eqnarray}
Thus, the optimal total cost is given by 
\begin{eqnarray} 
\label{eqn:need-to-solve} 
J^*([{\bf q}_0,\pi_0]) & = & 
    \sum_{k=0}^\infty {\mathsf E}\left[\widetilde{g}\big(\nu_k,\widetilde{\mu}^*(\nu_k)\big) \bigg\arrowvert \nu_0 = [{\bf q}_0,\pi_0] \right]. 
\end{eqnarray}
The solution to the above problem is obtained following the Bellman's equation, 
\begin{eqnarray} 
\label{eqn:bellman-sits} 
J^*([{\bf q}, \pi]) & := & \min\left\{1-\pi, c\pi + {\mathsf E}\left[
J^*\left({\bf Q}_{k+1},\phi_{\Pi}(\nu_k, {\bf Z}_{k+1})\right) \bigg\arrowvert \nu_k = [{\bf q},\pi] \right] \right\}.\nn
\end{eqnarray} 
where the function $\phi_{\Pi}(\nu_k, {\bf Z}_{k+1})$ is
provided in Appendix -- V.

\begin{remarks}
\label{thm:decision_queueing_coupling}
The optimal stationary Markov policy (i.e., the optimum stopping 
rule $\tau$) in general depends on ${\bf Q}$. Hence, the decision 
delay and the queueing delay are coupled, unlike in the {\sf NODM} case. 
\end{remarks}

\noindent
We characterize the optimal policy in the following theorem.
\begin{theorem}
The optimal stopping rule $\tau^*$ is a network--state dependent 
threshold rule on the a posteriori probability $\Pi_k$, i.e.,
there exist thresholds $\gamma({\bf q})$ such that
\begin{eqnarray} 
\tau & = & \inf\{k \geqslant 0 : \Pi_k \geqslant \gamma({\bf Q}_k)\}
\end{eqnarray} 
\end{theorem}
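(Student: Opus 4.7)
The plan is to establish the threshold structure as a consequence of concavity of the optimal value function $J^*([{\bf q},\pi])$ in the belief variable $\pi$, for each fixed queueing state ${\bf q}$. Once this concavity is in hand, the characterization follows routinely from the fact that the stopping cost $1-\pi$ is affine (linear decreasing) in $\pi$.

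First, I would use value iteration on the Bellman equation. Define $J_0([{\bf q},\pi]) = \min\{1-\pi, c\pi\}$ and
\begin{eqnarray*}
J_{n+1}([{\bf q},\pi]) & = & \min\bigg\{1-\pi, \ c\pi + \EXP{J_n\bigl({\bf Q}_{k+1}, \phi_{\Pi}(\nu_k,{\bf Z}_{k+1})\bigr) \,\bigg\arrowvert\, \nu_k = [{\bf q},\pi]}\bigg\}.
\end{eqnarray*}
Standard stochastic shortest-path arguments (see \cite{books.bertsekas00b}) give $J_n \uparrow J^*$ pointwise. I would then prove by induction on $n$ that $J_n([{\bf q},\pi])$ is concave in $\pi$ for each ${\bf q}$. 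The base case is immediate, since $\min\{1-\pi,c\pi\}$ is the minimum of two affine functions. For the inductive step, since the pointwise minimum of two concave functions is concave and $1-\pi$ is affine, it suffices to show that the continuation term $c\pi + \EXP{J_n(\cdot)\,\arrowvert\,[{\bf q},\pi]}$ is concave in $\pi$.

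This concavity step is the main obstacle. The standard POMDP route (see \cite{books.bertsekas00a}) is to represent concave $J_n$ as the infimum of an affine family in the belief, and then verify that one Bellman backup preserves this representation. Concretely, I would unpack $\phi_{\Pi}$ (as developed in Appendix~V) together with the joint law of $({\bf Q}_{k+1},{\bf Y}_{k+1})$ given $\nu_k=[{\bf q},\pi]$; using Lemma~\ref{lemma} to translate between $\Pi_k$ and $\Psi_k$, the marginal density of ${\bf Y}_{k+1}$ is an affine mixture $\Psi(\pi)\,f_1(\cdot) + (1-\Psi(\pi))\,f_0(\cdot)$ (with $\Psi$ affine in $\pi$ via Eqn.~\eqref{eqn:psi-theta}), and the posterior update is Bayes' rule. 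The expectation then splits as $\Psi(\pi)\cdot(\text{term evaluated at the post-observation belief under } \widetilde{\Theta}=1) + (1-\Psi(\pi))\cdot(\text{corresponding term under } \widetilde{\Theta}=0)$, and the Bayesian denominators cancel, leaving a sum whose coefficients are affine in $\pi$ and whose arguments of $J_n$ are convex combinations that vary affinely in $\pi$. Concavity of $J_n$ then passes through the expectation, completing the induction. Passing to the limit, $J^*([{\bf q},\pi])$ is concave in $\pi$ for every fixed ${\bf q}$.

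Finally, I would derive the threshold. Let $C({\bf q},\pi) := c\pi + \EXP{J^*\bigl({\bf Q}_{k+1}, \phi_{\Pi}(\nu_k,{\bf Z}_{k+1})\bigr)\,\arrowvert\,\nu_k=[{\bf q},\pi]}$ denote the continuation cost; then $J^*([{\bf q},\pi]) = \min\{1-\pi,\ C({\bf q},\pi)\}$. Concavity of $C({\bf q},\cdot)$ together with the boundary facts $C({\bf q},1) \geqslant 0 = 1-\pi\big\arrowvert_{\pi=1}$ (so stopping is weakly preferred at $\pi=1$) and $C({\bf q},0) \leqslant 1 = 1-\pi\big\arrowvert_{\pi=0}$ (since continuing incurs no instantaneous stopping penalty) imply that the continuation region $\{\pi : C({\bf q},\pi) \leqslant 1-\pi\}$ is an interval of the form $[0,\gamma({\bf q}))$: the concave curve $C({\bf q},\cdot)$ can cross the affine line $1-\pi$ at most at one point from below. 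Hence
\begin{eqnarray*}
\tau^* & = & \inf\{k\geqslant 0 : \Pi_k \geqslant \gamma({\bf Q}_k)\},
\end{eqnarray*}
where $\gamma({\bf q})$ is the crossover point, which in general depends on ${\bf q}$ through $C({\bf q},\cdot)$. This dependence on ${\bf q}$ reflects Remark~\ref{thm:decision_queueing_coupling} and distinguishes {\sf NADM} from the decoupled {\sf NODM} case of Theorem~\ref{thm:decoupling}. \qed
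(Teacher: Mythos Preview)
Your approach is essentially the same as the paper's: establish concavity of $J^*([{\bf q},\pi])$ in $\pi$ for each fixed ${\bf q}$ via value iteration (the paper packages the inductive concavity step as a separate lemma, Lemma~\ref{Lemma01}, rather than unpacking the POMDP affine-family argument), and then use the boundary behavior of the continuation cost versus $1-\pi$ together with concavity to get a unique crossing. The only place where the paper is more explicit is the boundary at $\pi=0$: your justification ``since continuing incurs no instantaneous stopping penalty'' is a bit loose, whereas the paper applies Jensen's inequality (using concavity of $J^*$) and the bound $J^*\leqslant 1-\pi$ to obtain $C({\bf q},0)\leqslant 1-p<1$ strictly, which guarantees the threshold $\gamma({\bf q})$ lies strictly in $(0,1)$.
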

\begin{proof}
See Appendix--VI.
\end{proof}

In general, the thresholds $\gamma({\bf Q}_k)$s (i.e., optimum policy) 
can be numerically obtained by solving 
Eqn.~\ref{eqn:bellman-sits} using value iteration method 
(see pp. 88--90, \cite{books.bertsekas00b}). However, computing the optimal policy for the
{\sf NADM} procedure is hard as the state space is huge even for moderate 
values of $N$. Hence, we resort to a suboptimal policy based on the 
following threshold rule, which is motivated by the structure of the 
optimal policy.    
\begin{eqnarray}
\label{eqn:nadm-subopt}
\tau & = & \inf\{k \geqslant 0 : \Pi_k \geqslant \gamma\}
\end{eqnarray}
where $\gamma$ is chosen such that $\prob{\tau < T} = \alpha$ is met.

Thus, we have formulated a sequential change detection problem when the sensor 
observations are sent to the decision maker over a random access network, and 
the fusion center processes the samples in the {\sf NADM} mode. The information 
for decision making now needs to include the network state ${\bf Q}_k$ (in 
addition to the samples received by the decision maker); we have shown that 
$[{\bf Q}_k,\Pi_k]$ is sufficient for the {\em information} history ${\bf I}_k$. 
Also, we have provided the structure for the optimal policy. Since, obtaining 
the optimal policy is computationally hard, we gave a simple threshold based 
policy, which is motivated by the structure of the optimal policy. 

\section{Numerical Results}
\label{sec:optimal_parameters}
Minimizing the mean detection delay not only requires an optimal decision rule 
at the fusion center but also involves choosing the optimal values of the 
sampling rate $r$, and the number of sensors $N$. To explore this, we obtain the 
minimum decision delay for each value of the sampling rate $r$ numerically, and 
the network delay via simulation.

\subsection{Optimal Sampling Rate}
\label{sec:optimal_sampling_rate}
Consider a sensor network with $N$ nodes. For a given probability of false alarm, 
the decision delay (detection delay without the network--delay component) 
decreases with increase in sampling rate. This is due to the increase in the 
number of samples that the fusion center receives within a given time. But, as 
the sampling rate increases, the network delay increases due to the increased 
packet communication load in the network. Therefore it is natural to expect the 
existence of a sampling rate $r^*$, with $r^*<\sigma/N$, (the sampling rate should 
be less than $\sigma/N$, for the queues to be stable; see 
Theorem~\ref{thm:fjq-gps_stationary_delay}) that optimizes the tradeoff between 
these two components of detection delay. Such an $r^*$, in the case of {\sf NODM} 
can be obtained by minimizing the following expression over $r$ (recall 
Theorem~\ref{thm:decoupling}).
\[
\left(d(r)+l(r)\right)(1-\alpha) -\rho \cdot l(r) + \frac{1}{r}\min_{\Pi_\alpha} \EXP{\widetilde{K}-K}^+
\]
Note that in the above expression, the delay term $\min_{\Pi_\alpha} 
\EXP{\widetilde{K}-K}^+$ also depends on the sampling rate $r$ via the 
probability of change $p_r = 1 - (1-p)^{(1/r)}$. 
The delay due to coarse sampling $l(r)(1-\alpha) - \rho \cdot l(r)$
can be found analytically (see Appendix -- I). We can approximate the 
delay $\min_{\Pi_\alpha} \EXP{\widetilde{K}-K}^+$ by the asymptotic
(as $\alpha \to 0$) delay as
$\frac{|\ln(\alpha)|}{NI(f_1,f_0)+|\ln(1-p_r)|}$ where $I(f_1,f_0)$
is the Kullback--Leibler (KL) divergence between the pdfs $f_1$ and $f_0$ 
(see \cite{tartakovsky-veeravalli05general-asymptotic-quickest-change}).
But, obtaining the network--delay (i.e., $d(r)(1-\alpha)$) analytically 
is hard, and hence an analytical characterisation of $r^*$ appears 
intractable.  Hence, we have resorted to numerical evaluation. 
\begin{figure}[t]
   \centering \ 
   \psfig{figure=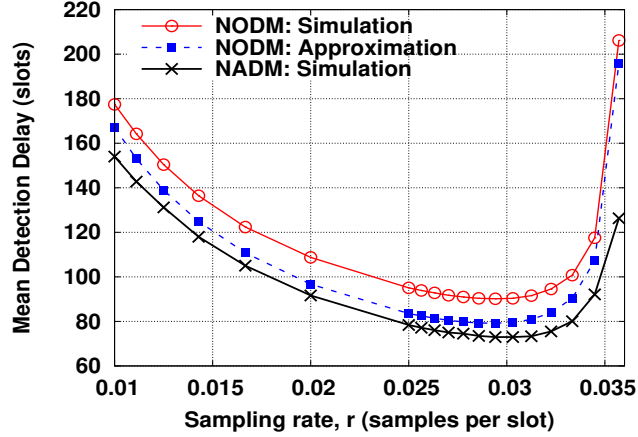,height=6cm,width=9cm}
   \caption{Mean detection delay for $N=10$ nodes is plotted against the 
            sampling rate $r$ for both {\sf NODM} and {\sf NADM} (defined 
            in Eqn.~\ref{eqn:nadm-subopt}). For {\sf NODM},
            an approximate analysis is also plotted. This was obtained with the 
            prior probability $\rho=0,~ p=0.0005$, probability of false alarm 
            target $\alpha=0.01,~\sigma=0.3636$ and with the sensor observations
            being ${\cal{N}}(0,1)$ and ${\cal{N}}(1,1)$, before and after the
            change respectively.}
   \label{fig:optimal-sampling-rate}
\end{figure} 

The distribution of sensor observations are taken to be ${\cal{N}}(0,1)$ and 
${\cal{N}}(1,1)$\footnote{As usual, ${\cal{N}}(a,v)$ denotes a normal 
distribution with mean $a$ and variance $v$}, before and after the change 
respectively for all the $10$ nodes. We choose the probability of occurrence 
of change in a slot to be $p=0.0005$, i.e., the mean time until change is 
$2000$ slots. $\min_{\Pi_\alpha}\EXP{\widetilde{K}-K}^+$ and $d(r)$ are obtained 
from simulation for $\alpha=0.01$ and $\sigma=0.3636$ and the expression for mean 
detection delay (displayed above) is plotted against $r$ 
in Figure~\ref{fig:optimal-sampling-rate}. Note that both NODM and NADM
are threshold based, and we obtain the corresponding thresholds for 
a target ${\sf P_{FA}} = 0.01$ by simulation. These thresholds are then used
to obtain the mean detection delay by simulation. 
In Figure~\ref{fig:optimal-sampling-rate}, 
we also plot the approximate mean detection delay which is obtained through the 
expression for $l(r)$ and the approximation,
$\min_{\Pi_\alpha} \EXP{\widetilde{K}-K}^+ \approx \frac{|\ln(\alpha)|}{NI(f_1,f_0)+|\ln(1-p_r)|}$. 
We study this approximation as this provides an (approximate) explicit expression 
for the mean decision delay. The delay in the FJQ--GPS does not have a closed form 
expression. Hence, we still need simulation for the delay due to queueing network.
It is to be noted that
at $k=0$, the size of all the queues is set to 0. 
The mean detection delay due to the procedure defined in Eqn.~\ref{eqn:nadm-subopt}
is also plotted in Figure~\ref{fig:optimal-sampling-rate}. 

As would have been expected, we see from Figure~\ref{fig:optimal-sampling-rate} 
that the {\sf NADM} procedure has a better mean detection delay performance than 
the {\sf NODM} procedure. Note that $\sigma/N = 0.03636$ and hence for the queues 
to be stable (see Theorem~\ref{thm:fjq-gps_stationary_delay}), the sampling rate
has to be less that $\sigma/N = 0.03636$ ($1/28 < 0.03636 < 1/27$). As the sampling 
rate $r$ increases to 1/28 (the maximum allowed sampling rate), the queueing delay 
increases rapidly. This is evident from Figure~\ref{fig:optimal-sampling-rate}. Also, 
we see from Figure~\ref{fig:optimal-sampling-rate} that operating at a sampling rate 
around $1/34 (\approx 0.0294)$ samples/slot would be optimum. The optimal sampling 
rate is found to be approximately the same for {\sf NODM} and {\sf NADM}. At the 
optimal sampling rate the mean detection delay of {\sf NODM} is 90 slots and that 
of {\sf NADM} is 73 slots.

\subsection{Optimal Number of Sensor Nodes (Fixed Observation Rate)}

Now let us consider fixing $N\times r$. This is the number of observations the 
fusion center receives per slot in a network with $N$ nodes sampling at a rate 
$r$ (samples per slot). It is also a measure of the energy spent by the network 
per slot. Since it has been assumed that the observations are conditionally 
independent and identically distributed across the sensors and over time, it is 
natural to ask how beneficial it is to have more nodes sampling at a lower rate, 
when compared to fewer nodes sampling at a higher rate with the number of 
observations per slot being the same. With $p=0.0005$, $\alpha=0.01$, and
$\sigma=0.3636$, and $f_0\sim\mathcal{N}(0,1)$ and $f_1\sim\mathcal{N}(1,1)$, we 
present simulation results for two examples, the first one being $Nr = 1/3$ (the 
case of heavily loaded network) and the second one being $Nr = 1/100$ (the case 
of lightly loaded network, $Nr \ll \sigma$).

\begin{figure}[t]
   \begin{center}
   \begin{minipage}{3.8cm}
   \begin{center}
    \psfig{figure=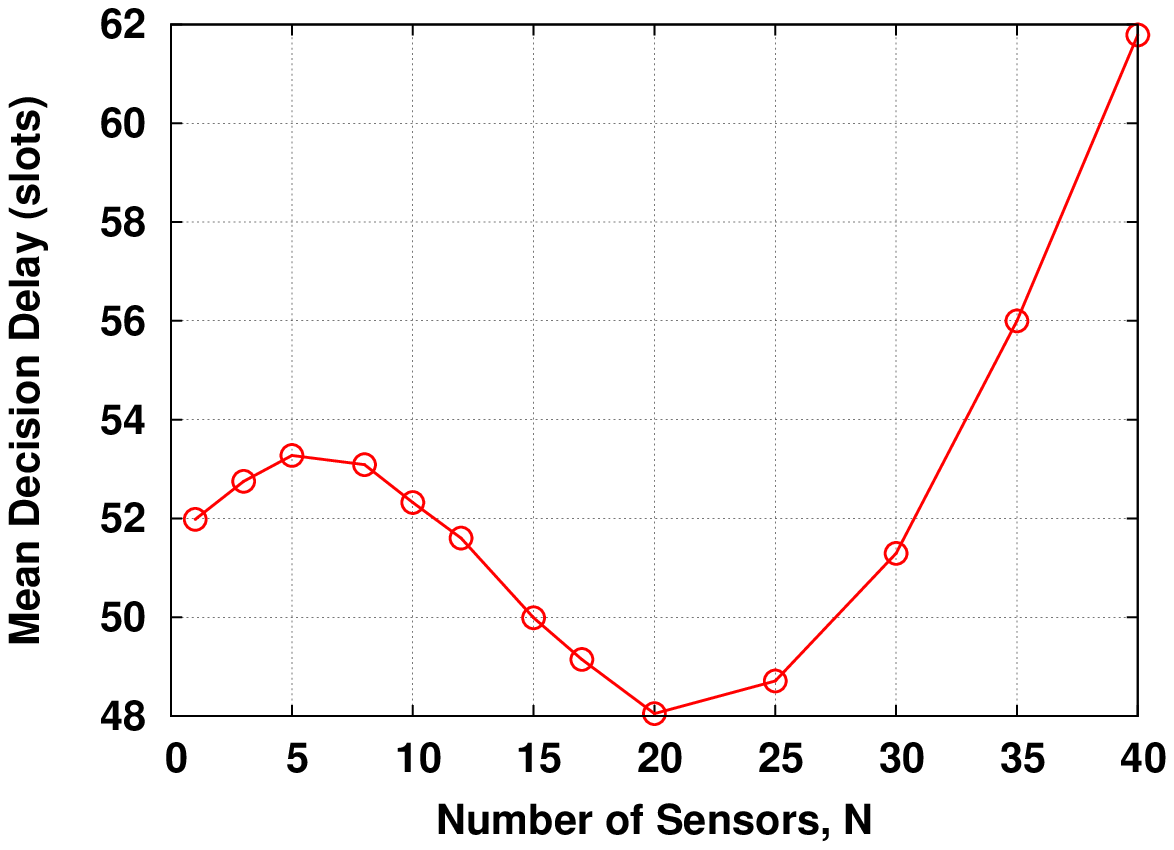,height=4cm,width=5.5cm}
   \end{center}
   \end{minipage}
\hspace{25mm}
   \begin{minipage}{3.8cm}
   \begin{center}
    \psfig{figure=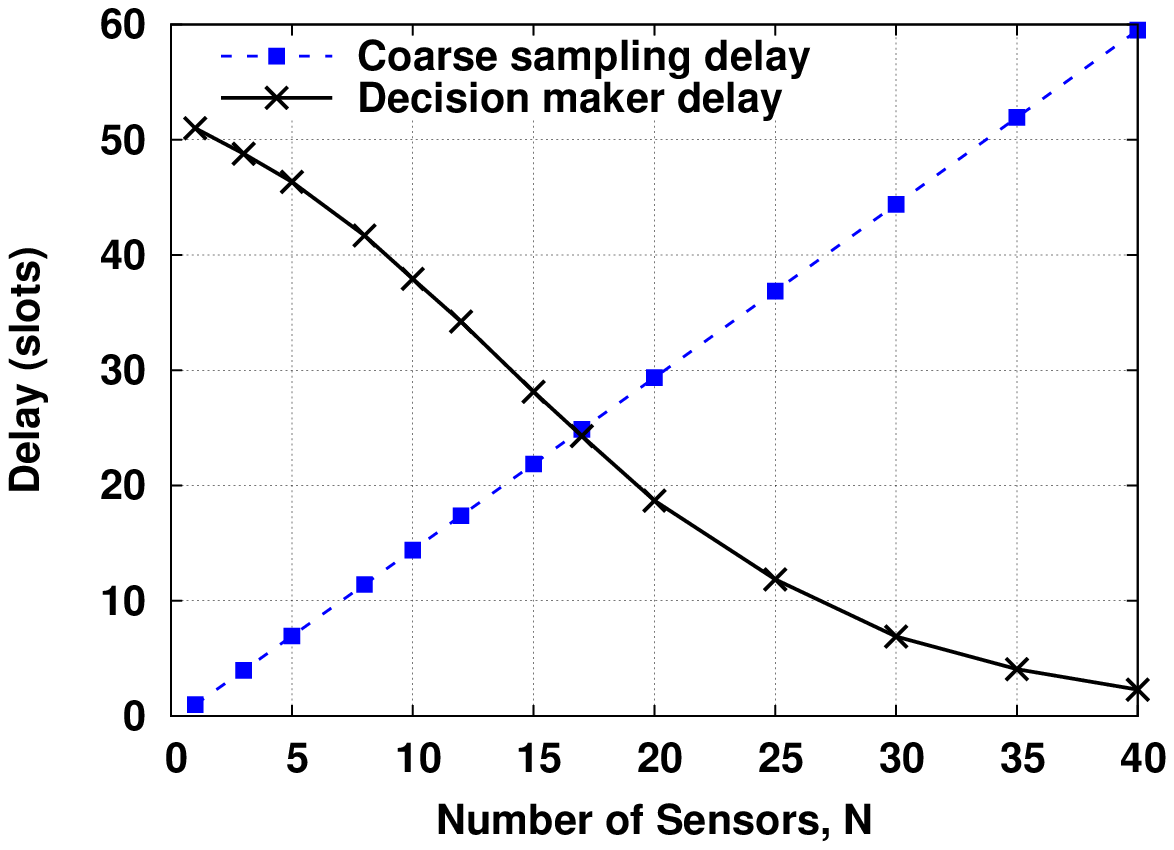,height=4cm,width=5.5cm}
   \end{center}
   \end{minipage}
  \caption{Mean \emph{decision} delay of {\sf NODM} procedure for $N\times r=1/3$ is plotted
    against the the number of nodes $N$. The plot is obtained with 
    $\rho=0,~p=0.0005,~\alpha=0.01$ and with the sensor observations
    being ${\cal{N}}(0,1)$ and ${\cal{N}}(1,1)$, before and after the
    change respectively. The components of the mean decision delay, i.e., the coarse sampling delay 
    $(1-\alpha)l(r) - \rho l(r)$, and the
    decision maker delay, $\frac{1}{r}\min_{\Pi_\alpha} \EXP{\widetilde{K}-K}^+$ are shown on the right.}
  \label{fig:decision-delay}
  \end{center}
 \end{figure}

Figure~\ref{fig:decision-delay} shows the plot of mean decision delay, $l(r)(1-\alpha-\rho) 
 + \frac{1}{r}\min_{\Pi_\alpha}\EXP{\widetilde{K}-K}^+$ versus the 
number of sensors when $Nr = 1/3$. As $N$ increases, the sampling rate $r = 1/(3N)$ 
decreases and hence the coarse sampling delay $l(r) (1-\alpha)$ increases; this 
can be seem to be approximately linear by analysis of the expression for $l(r)$ given
in Appendix -- I. Also, 
as $N$ increases, the decision maker gets more samples at the decision instants and 
hence the delay due to the decision maker 
$\frac{1}{r}\min_{\Pi_\alpha}\EXP{\widetilde{K}-K}^+$ decreases (this is evident 
from the right side of Figure~\ref{fig:decision-delay}). Figure~\ref{fig:decision-delay} 
shows that in the region where $N$ is large (i.e., $N \geqslant 20$) 
or $N$ is very small (i.e., $N < 5$), as $N$ increases, the mean 
decision delay increases. This is because in this region as $N$ increases, the decrease 
in the delay due to decision maker is smaller compared to the increase in the delay due 
to coarse sampling. However, in the region where $N$ is moderate (i.e., for $5 \leqslant N < 20$), as $N$ 
increases, the decrease in the delay due to decision maker is large compared to the 
increase in the delay due to coarse sampling. Hence in this region, the mean decision 
delay decreases with $N$. Therefore, we infer that when $N\times r =\frac{1}{3}$,
deploying $20$ nodes sampling at $1/60$ samples per slot is optimal, when there is no 
network delay.

\begin{figure}[t]
   \centering \ 
   \psfig{figure=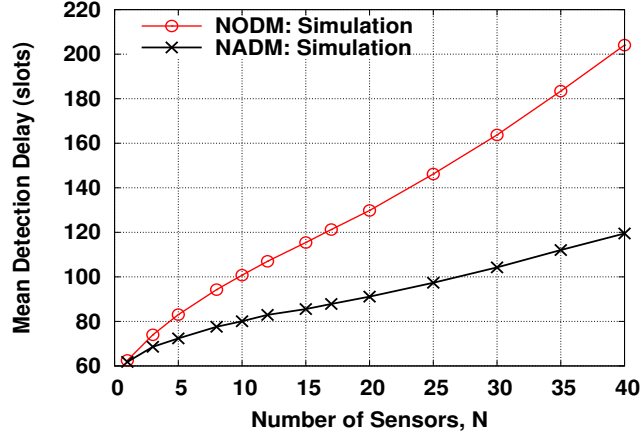,height=6cm,width=9cm}
   \caption{Mean \emph{detection} delay for $N\times r=1/3$ is plotted
     against the number of nodes $N$. This was obtained with
     $\rho=0,~p=0.0005,~\alpha=0.01~\sigma=0.3636$ and with the sensor
     observations being ${\cal{N}}(0,1)$ and ${\cal{N}}(1,1)$, before
     and after the change respectively.}
   \label{fig:detection-delay}
 \end{figure}

Figure~\ref{fig:detection-delay} shows the mean detection delay (i.e., the network 
delay plus the decision delay shown in Figure~\ref{fig:decision-delay}) versus the 
number of nodes $N$ for a fixed $N \times r = 1/3$. As the the number of nodes $N$ 
increases, the sampling rate $r = 1/(3N)$ decreases. For large $N$ (and equivalently 
small $r$), in the case of {\sf NODM} with the Shiryaev procedure, the network delay, 
$d(r) \approx \frac{N}{\sigma}$ as it requires $N$ (independent) successes, each with 
probability $\sigma$, in the random access network to transport a batch of $N$ samples 
(also, since the sampling rate $r$ is small, one would expect that a batch is delivered 
before a new batch is generated) and the decision maker requires just one batch of $N$ 
samples to stop (after the change occurs). Hence, for large $N$, the detection delay is 
$\approx l(r)(1-\alpha) + d(r)(1-\alpha) \approx l(r) (1-\alpha)+ \frac{N}{\sigma}(1-\alpha)$.
It is to be noted that for large $N$, to achieve a false 
alarm probability of $\alpha$, the decision maker requires $N_{\alpha} < N$ samples
(the mean of the log--likelihood ratio, LLR of received samples, after change, is the 
KL divergence between pdfs $f_1$ and $f_0$, given by $I(f_1,f_0) > 0$.
Hence, the posterior probability, which is a function of LLR, increases with the the number of received samples. Thus, 
to cross a threshold of $\gamma(\alpha)$, we need $N_\alpha$ samples). Thus, for large 
$N$, in the {\sf NADM} procedure, the detection delay is approximately 
$l(r)(1-\alpha) + \frac{N_\alpha}{\sigma}(1-\alpha)$, where $N_\alpha/\sigma$ is the mean 
network--delay to transport $N_\alpha$ samples. Thus, for large $N$, the 
difference in the mean detection 
delay between {\sf NODM} and {\sf NADM} procedures is approximately 
$\frac{1 - \alpha}{\sigma}(N-N_\alpha)$. Note that $N_\alpha$ depends only on $\alpha$ 
and hence the quantity $\frac{1 - \alpha}{\sigma}(N-N_\alpha)$ increases with $N$. This 
behaviour is in agreement with Figure~\ref{fig:detection-delay}. Also, as 
$N \times r = 1/3$, we expect the network delay to be very large (as 1/3 is close to 
$\sigma = 0.3636$) and hence having a single node is optimal which is also evident from 
Figure~\ref{fig:detection-delay}. 

 \begin{figure}
   \centering \ 
   \psfig{figure=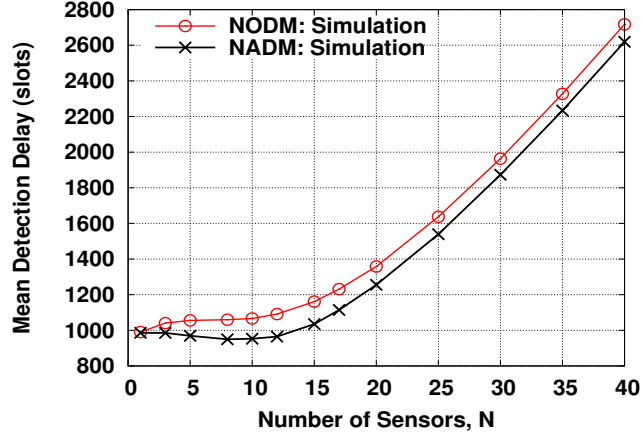,height=6cm,width=9cm}
   \caption{Mean \emph{detection} delay for $N\times r=0.01$ is plotted
     against the the number of nodes $N$. This was obtained with 
     $\rho=0,~p=0.0005,~\alpha=0.01$ and with the sensor observations
     being ${\cal{N}}(0,1)$ and ${\cal{N}}(1,1)$, before and after the
     change respectively.}
   \label{fig:decision-delay-01}
 \end{figure}
It is also possible to find an example where the optimal number of
nodes is greater than $1$. For example this occurs in the above
setting for $N \times r=0.01$ (see Figure~\ref{fig:decision-delay-01}).
Note that having $N=10$ sensors is optimal for the {\sf NADM} procedure. 
The {\sf NODM} procedure makes the decision only when it receives a batch
of $N$ samples corresponding to a sampling instant, whereas {\sf NADM}
procedure makes the decision at every time slot irrespective of 
whether it receives a sample in that time slot or not. Thus, the 
Bayesian update that {\sf NADM} does at every time slot makes it stop 
earlier than {\sf NODM}.

\section{Conclusions}
In this work we have considered the problem of minimizing the mean
detection delay in an event detection on a small extent ad hoc wireless
sensor network. We provide two ways of processing samples in the fusion
center: i) {\em Network Oblivious} ({\sf NODM}) processing, and ii) {\em
Network Aware} ({\sf NADM}) processing. We show that in the  {\sf NODM}
processing, under periodic sampling, the detection delay decouples into
decision and network delays. An important implication of this is that an
optimal sequential change detection algorithm can be used in the
decision device independently of the random access network. We also
formulate and solve the change detection problem in the {\sf NADM}
setting in which case the optimal decision maker needs to use the
network state in its optimal stopping rule. Also, we study the network
delay involved in this problem and show that it is important to operate
at a particular sampling rate to achieve the minimum detection delay.  

\section*{Appendix -- I}
\emph{Proof:} (Theorem~\ref{thm:decoupling})
\begin{eqnarray}
\label{eq:min_det_delay}
 \MPA \EXP{\left(\widetilde{U}-T\right)I_{\{\widetilde{T}\geqslant T\}}} 
 & = & \MPA \EXP{(\widetilde{U}-\widetilde{T} + \widetilde{T} - \frac{K}{r} + \frac{K}{r} - T ) I_{\{\widetilde{T}\geqslant T\}} }\nonumber\\
\label{eq:min_del_delay_one_two_three}
 & = & \MPA \left\{ 
                \EXP{(\widetilde{U}-\widetilde{T}) I_{\{\widetilde{T}\geqslant T\}}}
             +  \EXP{\left(\frac{K}{r} - T \right) I_{\{\widetilde{T}\geqslant T\}}}\right.\nonumber\\
& &             + \left. \frac{1}{r}\EXP{\left(\widetilde{K} - K\right) I_{\{\widetilde{T}\geqslant T\}}} \right\}
\end{eqnarray}
Note that in Eqn. \ref{eq:min_del_delay_one_two_three}, the first term is the 
queueing delay, the second term is the coarse sampling delay and the third
term is the decision delay (all delays being in slots). Consider the first term,
\begin{eqnarray*}
 \EXP{(\widetilde{U}-\widetilde{T})I_{\{\widetilde{T}\geqslant T\}}}
& = & \EXP{(U_{\widetilde{K}}-t_{\widetilde{K}})I_{\{\widetilde{T}\geqslant T\}}}\\
& = &\sum_{j\ge0,b\ge0,x\ge0}\prob{T=j, \widetilde{K}=b, D_b=x}x\cdot I_{\{\frac{b}{r}\geqslant j\}}\\
& = &\sum_{j\ge0,b\ge0,x\ge0}\prob{T=j, \widetilde{K}=b}\prob{D_b=x}x\cdot I_{\{\frac{b}{r}\geqslant j\}}
\end{eqnarray*}
where we have used the facts that (i) the decision process is based on only
what the packets carry and not on their arrival time etc, and (ii) the condition 
that sampling is done periodically at a known rate $r$. Assuming the queueing 
system to be stationary, the above can be written as
\begin{eqnarray*}
 \EXP{(\widetilde{U}-\widetilde{T})I_{\{\widetilde{T}\geqslant T\}}}
& = & \left(\sum_{x\ge0} \prob{D=x}x \right) \sum_{j,b}\prob{T=j, \widetilde{K}=b}I_{\{\frac{b}{r}\geqslant j\}}\\
& = & \EXP{D}\prob{\widetilde{T}\geqslant T}.
\end{eqnarray*}
Note that $\EXP{D}$ is a function of the sampling rate $r$, and does not depend on
the detection policy. 

Consider the second term of Eqn. \ref{eq:min_del_delay_one_two_three},
\begin{eqnarray*}
\EXP{\left(\frac{K}{r}-T\right) I_{\{\widetilde{T}\ge T\}}}
& = & \EXP{\left(\frac{K}{r}-T\right) I_{\{\widetilde{K}\ge K\}}}\\
& = & \EXP{\left(\frac{K}{r}-T\right) I_{\{\widetilde{K}\ge K, S_0 = 1\}}}
    + \EXP{\left(\frac{K}{r}-T\right) I_{\{\widetilde{K}\ge K, S_0 = 0\}}}
\end{eqnarray*}
For $S_0 = 1$, we have $T = 0$ and $K = 0$. Hence, 
\begin{eqnarray*}
\EXP{\left(\frac{K}{r}-T\right) I_{\{\widetilde{T}\ge T\}}}
& = & 0 + {\mathsf E}_0\left[\left(\frac{K}{r}-T\right) I_{\{\widetilde{K}\ge K\}}\right]
\end{eqnarray*}
where ${\mathsf E}_0\left[\cdot\right]$ 
denote the expectation and ${\mathsf P}_0\left(\cdot\right)$ 
the probability law, when the 
initial state is $S_0 = 0$.
Now,
\begin{eqnarray}
\label{eqn:EZERO}
{\mathsf E}_0\left[\left(\frac{K}{r}-T\right) I_{\{\widetilde{K}\ge
K\}}\right]
& = & \sum_{b=1}^\infty\sum_{\widetilde{b}=b}^\infty \sum_{t = (b-1)/r+1}^{b/r} {\mathsf P}_0\left(
T=t, K=b, \widetilde{K}=\widetilde{b}\right)\cdot \left(\frac{b}{r}-t\right)\nonumber\\
& = & \sum_{b=1}^\infty\sum_{\widetilde{b}=b}^\infty 
{\mathsf P}_0\left(K=b, \widetilde{K} = \widetilde{b} \right)\nonumber\\
&&\hspace{-3mm}\cdot \left[\sum_{t = (b-1)/r+1}^{b/r} 
{\mathsf P}_0\left(T=t \mid K=b, \widetilde{K}=\widetilde{b}\right)
 \cdot \left(\frac{b}{r}-t\right)\right]
\end{eqnarray}

We note that $\widetilde{K}$ is independent of $T$ given $K$. Hence,
\begin{eqnarray*}
 {\mathsf E}_0\left[\left(\frac{K}{r}-T\right) I_{\{\widetilde{K}\ge K\}}\right]
& = & \sum_{b=1}^\infty\sum_{\widetilde{b}=b}^\infty 
{\mathsf P}_0\left(K=b, \widetilde{K} = \widetilde{b} \right)\\
& & \cdot\Big[\sum_{y = 0}^{1/r-1} 
y \cdot {\mathsf P}_0\left(T= \frac{b}{r} - y \mid K=b \right)\Big]
\end{eqnarray*}
We have
\begin{eqnarray*}
{\mathsf P}_0\left(T= t \mid K=b \right)  
& = & \left\{
      \begin{array}{ll}
	  \frac{(1-\rho)(1-p)^{t-1}p}{(1-\rho)(1-p_r)^{b-1}p_r}, & \mbox{ for $t$ s.t. } b =
	  \lceil t\cdot r \rceil \\
	  0, & \mbox{ otherwise.}
      \end{array}
      \right. 
\end{eqnarray*}

Hence, for $0 \le y \le 1/r -1$,
\begin{eqnarray*}
{\mathsf P}_0\left(T= \frac{b}{r} - y \mid K=b \right)  
&=&	  \frac{(1-p)^{b/r-y-1}p}{(1-p_r)^{b-1}p_r}
\end{eqnarray*}
But, $(1-p_r) = (1-p)^{1/r}$. Hence,
\begin{eqnarray*}
{\mathsf P}_0\left(T= \frac{b}{r} - y \mid K=b \right) 
&=&	  \frac{(1-p)^{b/r-y-1}p}{(1-p_r)^{b-1}p_r}\\
&=&	  \frac{(1-p)^{1/r-y-1}p}{1-(1-p)^{1/r}}
\end{eqnarray*}

It can be shown that 

\begin{eqnarray*}
\sum_{y=0}^{1/r-1}y\cdot \frac{(1-p)^{1/r-y-1}p}{1-(1-p)^{1/r}}
&=& \frac{1}{r} - \left(\frac{1}{p} - \frac{1}{rp_r}(1-p_r)\right)\\
&=:&l(r) 
\end{eqnarray*}

Therefore, Eqn. \ref{eqn:EZERO} can be written as
\begin{eqnarray*}
{\mathsf E}_0\left[\left(\frac{K}{r}-T\right) I_{\{\widetilde{K}\ge K\}}\right]
\ = \ l(r) \cdot {\mathsf P}_0\left(\widetilde{K}\ge K\right) 
& = & l(r) \cdot \left({\mathsf P}\left(\widetilde{K}\ge K\right) - \rho \right)\\
& = & l(r) \cdot \left(1 - {\mathsf P}\left(\widetilde{K} < K\right) - \rho \right)
\end{eqnarray*}

Finally, we have 
\begin{eqnarray*}
& & \min_{\Pi_\alpha} \EXP{(\widetilde{U}-T) I_{\{\widetilde{T}\ge
T\}}}\\ 
& = & \min_{\Pi_\alpha} \left\{d(r)\left(1 - {\mathsf P}\left(\widetilde{T} < T\right)\right)
+ l(r){\mathsf P}_0\left(\widetilde{T}\ge T\right)
 + \frac{1}{r}\EXP{(\widetilde{K}-K)^+} \right\}\\
& = & \min_{\Pi_\alpha} \left\{\left(d(r)+l(r)\right)\left(1 - {\mathsf P}\left(\widetilde{T} < T\right)\right) -\rho \cdot l(r) + \frac{1}{r}\EXP{(\widetilde{K}-K)^+}  \right\}
\end{eqnarray*}

Note that, in the above equation, the first term
$\left(d(r)+l(r)\right)\left(1 - {\mathsf P}\left(\widetilde{T} <
    T\right)\right)$ is minimum when ${\mathsf P}\left(\widetilde{T} <
  T\right) = \alpha$.
It follows that
\begin{eqnarray*}
\lefteqn{\min_{\Pi_\alpha} \EXP{(\widetilde{U}-T) I_{\{\widetilde{T}\ge T\}}}} \\
& \geqslant & \left(d(r)+l(r)\right)\left(1 - \alpha \right) - \rho\cdot l(r) 
+ \frac{1}{r}\min_{\Pi_\alpha}\EXP{(\widetilde{K}-K)^+}  \\
\end{eqnarray*}
Also, since the optimal policy for the problem
$\min_{\Pi_\alpha}\EXP{(\widetilde{K}-K)^+} $ achieves  $\left(1 -
  {\mathsf P}\left(\widetilde{T} < T\right)\right) = \alpha$, we also have 
\begin{eqnarray*}
  \left(d(r)+l(r)\right)\left(1 - \alpha \right) - \rho\cdot l(r) 
  + \frac{1}{r}\min_{\Pi_\alpha}\EXP{(\widetilde{K}-K)^+} &\geqslant& 
  \min_{\Pi_\alpha} \EXP{(\widetilde{U}-T) I_{\{\widetilde{T}\ge T\}}} 
\end{eqnarray*}
It follows that
\begin{eqnarray*}
\min_{\Pi_\alpha} \EXP{(\widetilde{U}-T) I_{\{\widetilde{T}\ge T\}}}
& = & \left(d(r)+l(r)\right)\left(1 - \alpha \right) - \rho\cdot l(r) 
+ \frac{1}{r}\min_{\Pi_\alpha}\EXP{(\widetilde{K}-K)^+}  \\
\end{eqnarray*}

We need $1 - \alpha > \rho$ or $\alpha < 1 - \rho$.
If $\alpha > 1 - \rho$, the optimal stopping is at $t = 0$. 
This will yield the desired probability of false alarm and 
$\EXP{(\widetilde{U}-T) I_{\{\widetilde{T}\ge T\}}} = 0$.
\qed

\section*{Appendix -- II}
\emph{Proof:} (Theorem~\ref{thm:fjq-gps_stationary_delay})
   The necessity of $Nr<\sigma$ is clear. The sufficiency
   proof goes as follows. Consider the FJQ-GPS system with every queue
   always containing a single dummy packet that is served at low
   priority. Let us call this the saturated
   FJQ-GPS system.  When a queue becomes empty, the low priority dummy
   packet contends for service. If it receives service, then it
   immediately reappears and continues to contend for service.  If,
   while a dummy packet is in service, a regular packet arrives, then
   the service of the dummy packet is preempted and the regular packet
   starts contending. It follows that the service rate applied to
   every queue (i.e., those with regular packets or those with dummy
   packets) is always $\sigma/N$. Now, consider a virtual service
   process of rate $\sigma$. In each slot, a service occurs with
   probability $\sigma$ and the service is applied to any one of the
   queues with equal probability.  Equivalently each queue is served
   by an independent Bernoulli process of rate $\sigma/N$.
   Considering only the services to the regular packets at each queue,
   we have a $GI/M/1$ queue (here $GI$ refers to a 
   {\em General distribution with Independent arrivals}, $M$ refers to 
   a {\em Markovian service} process and {\em 1 refers to one server}). 
   Hence, the system has proper stationary
   delay, iff $r < \sigma/N$. Also, it can be seen that the
   delays in the above described system (with dummy packets when a queue is empty) 
   upper bound those in the original
   FJQ-GPS system.  Hence, the result follows.
\qed

\section*{Appendix -- III}
\emph{Distribution of state noise ${\bf N}$}

Let ${\bf q} = [\lambda, b, \delta, {\bf w},{\bf r}]$.
Note that 
$\prob{M_k = m \big\arrowvert 
{\bf Q}_k ={\bf q},
{\bf \Theta}_k = {\mbox{\boldmath$\theta$}} 
} 
= \prob{M_k = m \big\arrowvert {\bf Q}_k = {\bf q}}$ and is given by 
\begin{align*}
\prob{M_k = 0 \big\arrowvert {\bf Q}_k = {\bf q}} = & \left\{
\begin{array}{ll}
1          & \hspace{4mm}\text{if} \ \phi_N({\bf q}) = 0\\ 
1-\sigma   & \hspace{4mm}\text{if} \ \phi_N({\bf q}) > 0 
\end{array}
\right. \\
\prob{M_k = m \big\arrowvert {\bf Q}_k = {\bf q}} = & \left\{
\begin{array}{ll}
0 & \ \ \ \text{if} \ \phi_N({\bf q}) = 0\\ 
\frac{\sigma}{\phi_N({\bf q})} & \ \ \  \text{if} \ \phi_{L^{(m)}}({\bf q}) > 0, \ \ m=1,2,3,\cdots,N. 
\end{array}
\right.
\end{align*}
where $\phi_N({\bf q})$ and $\phi_{L^{(m)}}({\bf q})$ are obtained from
Eqns.~\ref{eqn:phi-n} and \ref{eqn:phi-L-m}.

The distribution function,
$\prob{ O_{k} = o \big\arrowvert {\bf Q}_k = {\bf q},
{\bf \Theta}_k = {\mbox{\boldmath$\theta$}}}  =  
\prob{ O_{k} = o \big\arrowvert {\bf Q}_k = {\bf q},
{\Theta}_k = \theta }$ 
is given by
\begin{align*}
\prob{ O_{k} = o \big\arrowvert {\bf Q}_k = {\bf q},
{\Theta}_k = 0}  
= & \left\{
\begin{array}{ll}
1-p   & \ \ \ \text{if} \ o  = 0 \\
p     & \ \ \ \text{if} \ o  = 1,\\ 
0     & \ \ \ \text{otherwise}.
\end{array}
\right. \\ 
\prob{ O_{k} = o \big\arrowvert {\bf Q}_k = {\bf q},
{\Theta}_k = 1}  
= & \left\{
\begin{array}{ll}
1 & \hspace{7mm} \ \ \ \text{if} \ o  = 0\\ 
0 & \hspace{7mm} \ \ \ \text{otherwise}.
\end{array}
\right. 
\end{align*}

\section*{Appendix -- IV}
\emph{Proof of Lemma--1}

\vspace{2mm}

Let ${\bf q} = [\lambda, b, \delta, {\bf w},{\bf r}]$.
From Eqn.~\ref{eqn:pi},
{\footnotesize
\begin{eqnarray}
\Pi_k 
& := & \prob{ T \leqslant k \big\arrowvert {\bf I}_{k-1}, {\bf Q}_k =
{\bf q}, {\bf Y}_k  }\nonumber\\ 
&  = & \prob{ T \leqslant k-\delta   \big\arrowvert {\bf I}_{k-1}, {\bf
Q}_k = {\bf q},{\bf Y}_k  }  + \  
       \prob{ k-\delta < T \leqslant k \big\arrowvert {\bf I}_{k-1}, {\bf
	   Q}_k = {\bf q}, {\bf Y}_k } \nonumber\\ 
&  = & \prob{ T \leqslant k-\delta   \big\arrowvert {\bf I}_{k-1}, {\bf
Q}_k = {\bf q}, {\bf Y}_k }  + \  
       \prob{ T > k-\delta \big\arrowvert {\bf I}_{k-1}, {\bf Q}_k =
	   {\bf q}, {\bf Y}_k } \cdot 
       \prob{ T \leqslant k \big\arrowvert T > k-\delta,{\bf I}_{k-1},
	   {\bf Q}_k = {\bf q}, {\bf Y}_k },\nonumber\\ 
& = &  \Psi_k + (1-\Psi_k)\cdot\prob{ T \leqslant k \big\arrowvert T >
k-\delta,{\bf I}_{k-1}, {\bf Q}_k = {\bf q}, {\bf Y}_k },\nonumber\\ 
& = &\Psi_k  + (1-\Psi_k)\cdot \frac{\prob{ k - \delta < T \leqslant k}
\prob{ {\bf I}_{k-1}, {\bf Q}_k = {\bf q}, {\bf Y}_k  \big\arrowvert
k-\delta < T \leqslant k}} {\prob{ T > k - \delta } \prob{ {\bf
I}_{k-1}, {\bf Q}_k = {\bf q}, {\bf Y}_k  \big\arrowvert T > k-\delta }}\nonumber\\ 
\label{eqn:conditional-prob}
& = &{\Psi_k}  + (1-\Psi_k)\cdot \frac {\prob{ k - \delta < T \leqslant k} } {\prob{ T > k - \delta } }\\ 
\label{eqn:conditional-prob1}
& = & \hspace{-0mm} \Psi_k + (1-\Psi_k)\left(1 - (1-p)^{\delta}\right)
\end{eqnarray}
}
Eqn.~\ref{eqn:conditional-prob} is justified as follows.
Note that 
\begin{eqnarray*}
& & \prob{ {\bf I}_{k-1}, {\bf Q}_k = {\bf q}, {\bf Y}_k  \big\arrowvert k-\delta < T \leqslant k} \\ 
& = & \prob{ {\bf Q}_{[0:k-1]},  {\bf Q}_k = {\bf q},  {\bf X}_{[1:B_{k}-1]}, \{{X}^{(i)}_{B_{k}} : R_{k}^{(i)} =1\}, u_{[0:k-1]}\big\arrowvert k-\delta < T \leqslant k}\\
& = & \prob{ {\bf Q}_{[0:k-1]}, {\bf Q}_k = {\bf q}  \big\arrowvert
k-\delta < T \leqslant k}\nn
&  & \cdot
      \prob{ {\bf X}_{[1:B_{k}-1]}, \{{X}^{(i)}_{B_{k}} : R_{k}^{(i)}
	  =1\} \big\arrowvert k-\delta < T \leqslant k, {\bf Q}_{[0:k-1]}, {\bf Q}_k = {\bf q} }\nn
& &  \cdot 
      \prob{ u_{[0:k-1]} \big\arrowvert k-\delta < T \leqslant k,  {\bf
	  Q}_{[0:k-1]}, {\bf Q}_k = {\bf q}, {\bf X}_{[1:B_{k}-1]}, \{{X}^{(i)}_{B_{k}} : R_{k}^{(i)} =1\}}\\
& = & \prob{ {\bf Q}_{[0:k-1]}, {\bf Q}_k = {\bf q} }\cdot
      \prob{ {\bf X}_{[1:B_{k}-1]}, \{{X}^{(i)}_{B_{k}} : R_{k}^{(i)}
	  =1\} \big\arrowvert k-\delta < T, {\bf Q}_{[0:k-1]}, {\bf Q}_k = {\bf q} }\nn
& &  \cdot 
      \prob{ u_{[0:k-1]} \big\arrowvert {\bf Q}_{[0:k-1]}, {\bf Q}_k = {\bf q},  {\bf X}_{[1:B_{k}-1]}, \{{X}^{(i)}_{B_{k}} : R_{k}^{(i)} =1\}}\\
& = & \prob{ {\bf Q}_{[0:k-1]}, {\bf Q}_k = {\bf q}\big\arrowvert T > k-\delta  }\cdot
      \prob{ {\bf X}_{[1:B_{k}-1]}, \{{X}^{(i)}_{B_{k}} : R_{k}^{(i)}
	  =1\} \big\arrowvert T > k-\delta, {\bf Q}_{[0:k-1]},{\bf Q}_k = {\bf q} }\nn
& & \ \ \cdot 
      \prob{ u_{[0:k-1]} \big\arrowvert \big\arrowvert T > k-\delta ,
	  {\bf Q}_{[0:k-1]}, {\bf Q}_k = {\bf q},  {\bf X}_{[1:B_{k}-1]}, \{{X}^{(i)}_{B_{k}} : R_{k}^{(i)} =1\}}\\
& =  & \prob{ {\bf I}_{k-1}, {\bf Q}_k = {\bf q}, {\bf Y}_k  \big\arrowvert T > k-\delta }. 
\end{eqnarray*}
We use the following facts in the above justification: i) the 
evolution of the queueing system ${\bf Q}_k$ is independent of 
the change point $T$, ii) whenever $T > k-\delta$, the distribution 
of any sample $X^{(i)}_{h}$, $h \leqslant B_k$ is $f_0$, and 
iii) the control $u_k = 
\tilde{\mu}({\bf I}_k)$. 
\qed

\section*{Appendix -- V}
\emph{Recursive computation of $\Pi_k$}

At time $k$, based on the index of the node that successfully transmits
a packet $M_k$, the set of all sample paths $\Omega$ can be partitioned
based on the following events, 
\begin{eqnarray*}
\mathcal{E}_{1,k} & := & \left\{\omega: M_k(\omega)=0 \ \text{or} \
M_k(\omega) = j >0, R_k^{(j)}(\omega) = 1 \right\}\nn 
\mathcal{E}_{2,k} & := & \left\{\omega: 
M_k(\omega) = j >0, R_k^{(j)}(\omega) = 0, \sum_{i=1}^N
R_k^{(i)}(\omega) < N-1 \right\}\nn 
\mathcal{E}_{3,k} & := & \left\{\omega: 
M_k(\omega) = j >0, R_k^{(j)}(\omega) = 0, \sum_{i=1}^N
R_k^{(i)}(\omega) = N-1 \right\},
\end{eqnarray*}
i.e., $\Omega = \mathcal{E}_{1,k} \cup \mathcal{E}_{2,k} \cup \mathcal{E}_{3,k}$.
We note that the above events can also be described by using ${\bf Q}_k$
and ${\bf Q}_{k+1}$ in the following manner
\begin{eqnarray*}
\mathcal{E}_{1,k} & = & 
\left\{\omega: {\bf W}_{k+1}(\omega) = {\bf W}_k(\omega) , {\bf R}_{k+1}(\omega) = {\bf
R}_k(\omega) \right\} \\
& & \bigcup \left\{\omega:  {\bf W}_{k+1}(\omega) = {\bf W}_k(\omega) + {\bf e}_j, {\bf R}_{k+1}(\omega) = {\bf R}_k(\omega)\right\}\nn 
\mathcal{E}_{2,k} & = & \left\{\omega: 
{\bf W}_{k+1}(\omega) = {\bf W}_k(\omega), {\bf R}_{k+1}(\omega) = {\bf
R}_k(\omega) + {\bf e}_j \right\}\nn 
\mathcal{E}_{3,k} & = & \left\{\omega: 
\sum_{i=1}^N R_k^{(i)}(\omega) = N-1,
\forall i, W^{(i)}_{k+1}(\omega) = (W^{(i)}_k(\omega)-1)^+, 
R_{k+1}^{(i)}(\omega) = {\bf 1}_{\{W_k^{(i)}>0\}} 
\right\}.
\end{eqnarray*}

Here, the events $\mathcal{E}_{1,k}$ and $\mathcal{E}_{2,k}$ represent the
case $B_{k+1} = B_k$, and the event $\mathcal{E}_{3,k}$ represents the
case $B_{k+1} = B_k+1$ (i.e., only if the event $\mathcal{E}_{3,k}$
occurs then the batch index is incremented). We are interested in
obtaining $\Pi_{k+1}$ from $[{\bf Q}_k, {\Pi_k}]$ and ${\bf Z}_{k+1}$. 
We show that at time $k+1$, the statistic $\Psi_{k+1}$ (after having observed ${\bf
Z}_{k+1}$) can be computed in a recursive manner using $\Psi_k$ and
${\bf Q}_k$. Using Lemma~\ref{lemma} (using Eqn.~\ref{eqn:psi-theta}) we
compute $\Pi_{k+1}$ from $\Psi_{k+1}$.
\begin{eqnarray*}
\Psi_{k+1}
& = & \prob{\widetilde\Theta_{k+1}=1\mid {\bf I}_{k+1}} \\ 
& = & \sum_{c=1}^3 \prob{\widetilde\Theta_{k+1}=1, \mathcal{E}_{c,k}\mid
{\bf I}_{k+1}}\\  
& = & \sum_{c=1}^3 \prob{\widetilde\Theta_{k+1}=1 \mid \mathcal{E}_{c,k}, 
{\bf I}_{k+1}} {\bf 1}_{\mathcal{E}_{c,k}} \hspace{8mm} (\because \mathcal{E}_{c,k} \ \text{is} \
{\bf I}_{k+1} \ \text{measurable})
\end{eqnarray*}

\begin{itemize}
\item[$\bullet$] {\bf Case $M_k=0$ or $M_k = j>0$, $R_k^{(j)} = 1$}:
\begin{eqnarray*}
& & \Pi_{k+1}\nn 
& = & \prob{\Theta_{k+1}=1 \mid \mathcal{E}_{1,k}, {\bf I}_{k+1}}\nn 
& = & \prob{\Theta_{k+1}=1 \mid \mathcal{E}_{1,k}, {\bf I}_{k}, {\bf
Q}_{k+1} = {\bf q}'}\nn 
& = & \frac{\prob{\Theta_{k+1}=1\mid \mathcal{E}_{1,k}, {\bf I}_{k}} \cdot 
            f_{{\bf Q}_{k+1} \mid \Theta_{k+1}, \mathcal{E}_{1,k}, {\bf
			I}_{k}}({\bf q}'| 1, \mathcal{E}_{1,k}, {\bf I}_k)}
			{f_{{\bf Q}_{k+1}\big\arrowvert \mathcal{E}_{1,k}, {\bf
			I}_k}({\bf q}'|\mathcal{E}_{1,k}, {\bf I}_k)} 
\hspace{10mm}  (\text{by Bayes rule} ) \nn
& = & \prob{\Theta_{k+1}=1\mid \mathcal{E}_{1,k}, {\bf I}_{k}}
\hspace{35mm}  ({\bf
Q}_{k+1} \ \text{is independent of} \ \Theta_{k+1}) \nn
& = & \prob{\Theta_k =0, \Theta_{k+1}=1\mid {\bf I}_{k}} + \prob{\Theta_k =1, \Theta_{k+1}=1\mid {\bf I}_{k}}\\ 
& = & (1-\Pi_k)p + \Pi_k 
\end{eqnarray*}

\item[$\bullet$] {\bf Case $M_k = j>0$, $R_k^{(j)} = 0$,
$\sum_{i=1}^NR_k^{(i)} < N-1$}:
In this case, the sample $X_{B_k}^{(j)}$ is successfully transmitted and
is passed on to the decision maker. The decision maker receives just
this sample, and computes $\Pi_{k+1}$. We compute $\Psi_{k+1}$ from $\Psi_k$ 
and then we use Lemma~\ref{lemma} (using Eqn.~\ref{eqn:psi-theta}) to compute 
$\Pi_{k+1}$ from $\Psi_{k+1}$.
\begin{eqnarray*}
& & \Psi_{k+1}\nn
& = & \prob{\widetilde\Theta_{k+1}=1\mid {\mathcal E}_{2,k}, {\bf I}_{k+1}} \\ 
& = & \prob{\widetilde\Theta_{k+1}=1\mid {\mathcal E}_{2,k}, {\bf
I}_{k}, [{\mathbf Q}_{k+1}, {\bf Y}_{k+1}]=[{\bf q}',y]} \\ 
& = & \prob{\widetilde\Theta_k = 0, \widetilde\Theta_{k+1}=1\mid {\mathcal E}_{2,k}, {\bf
I}_{k}, [{\mathbf Q}_{k+1}, {\bf Y}_{k+1}]=[{\bf q}',y]}\\
& & 
+ \prob{\widetilde\Theta_k = 1, \widetilde\Theta_{k+1}=1\mid {\mathcal E}_{2,k}, {\bf
I}_{k}, [{\mathbf Q}_{k+1}, {\bf Y}_{k+1}]=[{\bf q}',y]} 
\end{eqnarray*}
Since, we consider the case when the fusion center received a sample at time $k+1$ and
$B_{k+1}=B_k$, $\Delta_{k+1} = \Delta_k+1$ and hence, the state $\widetilde\Theta_{k+1} =
\Theta_{k+1-\Delta_{k+1}} = \Theta_{k-\Delta_k} = \widetilde\Theta_k$.
Thus, in this case, $\Psi_{k+1}$ can be written as

{\footnotesize
\begin{eqnarray*}
& & \Psi_{k+1}\nn
& = & \prob{\widetilde\Theta_k = 1, \widetilde\Theta_{k+1}=1\mid {\mathcal E}_{2,k}, {\bf
I}_{k}, [{\mathbf Q}_{k+1}, {\bf Y}_{k+1}]=[{\bf q}',y]}\\ 
& 
\stackrel{(a)}{=} & \frac{ \prob{\widetilde\Theta_k = 1,
\widetilde\Theta_{k+1}=1\mid {\mathcal E}_{2,k},  {\bf I}_{k}} \cdot
\prob{{\bf Q}_{k+1} = {\bf q}' \mid \widetilde\Theta_k = 1,
\widetilde\Theta_{k+1}=1, {\mathcal E}_{2,k},  {\bf I}_{k}}}
{
{\mathsf P}({\bf Q}_{k+1}={\bf q}'|{\mathcal E}_{2,k}, {\bf I}_k)
\cdot f_{{\bf Y}_{k+1}\mid {\mathcal E}_{2,k}, {\bf I}_k, {\bf Q}_{k+1}}
(y|{\mathcal E}_{2,k}, {\bf I}_k,{\bf q}')
 } \nn
& & 
\cdot f_{{\bf Y}_{k+1}\mid \widetilde\Theta_k, \widetilde\Theta_{k+1},
{\mathcal E}_{2,k}, {\bf I}_k, {\bf Q}_{k+1} }
(y\mid 1, 1, {\mathcal E}_{2,k}, {\bf q}', {\bf I}_k )\nn 
& \stackrel{(b)}{=} & \frac{ \prob{\widetilde\Theta_k = 1, \widetilde\Theta_{k+1}=1\mid {\mathcal E}_{2,k}, {\bf I}_{k}} \cdot
{\mathsf P}({\bf Q}_{k+1}={\bf q}'|{\mathcal E}_{2,k}, {\bf I}_k)
\cdot f_{{\bf Y}_{k+1}\mid \widetilde\Theta_k}
(y\mid 1 ) }{
{\mathsf P}({\bf Q}_{k+1}={\bf q}'|{\mathcal E}_{2,k}, {\bf I}_k)
\cdot   f_{{\bf Y}_{k+1}\mid {\mathcal E}_{2,k}, {\bf I}_k, {\bf Q}_{k+1}}
(y|{\mathcal E}_{2,k}, {\bf I}_k,{\bf q}')
 } \nn
& \stackrel{(c)}{=} & \frac{ \prob{\widetilde\Theta_k = 1, \widetilde\Theta_{k+1}=1\mid {\mathcal E}_{2,k}, {\bf I}_{k}} 
\cdot f_1(y)}{
\prob{\widetilde\Theta_k=0 \mid {\mathcal E}_{2,k}, {\bf I}_k, {\bf
Q}_{k+1}}\cdot f_{{\bf Y}_{k+1}\mid \widetilde\Theta_k
}(y|0)
+
\prob{\widetilde\Theta_k=1 \mid {\mathcal E}_{2,k}, {\bf I}_k, {\bf
Q}_{k+1}}\cdot f_{{\bf Y}_{k+1}\mid \widetilde\Theta_k
}(y|1)
} \nn
& \stackrel{(d)}{=} & \frac{ \Psi_k f_1(y) } { (1-\Psi_k)  f_0(y) +\Psi_k f_1(y) } 
\end{eqnarray*}
}
We explain the steps $(a), (b), (c), (d)$ below.
\begin{itemize}
\item[(a)] By Bayes rule, for events $A,B,C,D,E,F$, we have 
 \[\prob{A B \mid C D E F} = 
           \frac{\prob{A B \mid C D} \prob{E \mid A B C D} \prob{F \mid A
		   B C D E }}{\prob{E\mid C D}\prob{F \mid C D E}}\] 
\item[(b)] ${\bf Q}_{k+1}$ is independent of $\widetilde\Theta_k$,
$\widetilde\Theta_{k+1}$. Also, given $\widetilde\Theta_k$, 
${\bf Y}_{k+1}$ is independent of \ \ \ $\widetilde\Theta_{k+1},
{\mathcal E}_{2,k}, {\bf I}_k, {\bf Q}_{k+1}$

\item[(c)] For any events $A,B$, and a continuous random variable $Y$,
the conditional density
           function $f_{Y|A}(y|A) = \prob{B \mid A}  f_{Y|AB}(y|AB) + \prob{B^c \mid A}
		   f_{Y|A B^c}(y|A B^c)$. Also,  given $\widetilde\Theta_k$, 
${\bf Y}_{k+1}$ is independent of 
${\mathcal E}_{2,k}, {\bf I}_k, {\bf Q}_{k+1}$

\item[(d)]  ${\mathcal E}_{2,k}$ is $[{\bf I}_{k}, {\bf Q}_{k+1}]$
measurable, and hence, given  $[{\bf I}_{k}, {\bf Q}_{k+1}]$, 
$\widetilde\Theta_k$ is independent of ${\mathcal E}_{2,k}$.
\end{itemize}

\item[$\bullet$] {\bf Case $M_k = j>0$, $R_k^{(j)} = 0$,
$\sum_{i=1}^NR_k^{(i)} = N-1$}:
In this case, at time $k+1$, the decision maker receives the last sample
of batch $B_k$, $X_{B_k}^{(j)}$ (that is successfully transmitted during
slot $k$) and the samples of batch $B_k+1$, if any, that are queued in
the sequencer buffer. We compute $\Psi_{k+1}$ from $\Psi_k$ and then we
use Lemma~\ref{lemma} (using Eqn.~\ref{eqn:psi-theta}) to compute
$\Pi_{k+1}$ from $\Psi_{k+1}$. In this case, the decision maker 
receives $n := \sum_{i=1}^N {\bf 1}_{\{W_k^{(i)}>0\}}$ samples of batch
$B_k+1$. Also, note that $n$ is ${\bf I}_k$ measurable.

\begin{eqnarray*}
 \Psi_{k+1}
& = & \prob{\widetilde\Theta_{k+1}=1\mid {\mathcal E}_{3,k}, {\bf I}_{k+1}} \\ 
& = & \prob{\widetilde\Theta_{k+1}=1\mid {\mathcal E}_{3,k}, {\bf
I}_{k}, [{\mathbf Q}_{k+1}, {\bf Y}_{k+1}]=[{\bf q}',{\bf y}]} \\ 
& = & \prob{\widetilde\Theta_k = 0, \widetilde\Theta_{k+1}=1\mid {\mathcal E}_{3,k}, {\bf
I}_{k}, [{\mathbf Q}_{k+1}, {\bf Y}_{k+1}]=[{\bf q}',{\bf y}]}\\
& & 
+ \prob{\widetilde\Theta_k = 1, \widetilde\Theta_{k+1}=1\mid {\mathcal E}_{3,k}, {\bf
I}_{k}, [{\mathbf Q}_{k+1}, {\bf Y}_{k+1}]=[{\bf q}',{\bf y}]}
\end{eqnarray*}
Since, we consider the case $B_{k+1}=B_k+1$, $\Delta_{k+1} =
\Delta_k+1-1/r$ and hence, the state $\widetilde\Theta_{k+1} =
\Theta_{k+1-\Delta_{k+1}} = \Theta_{k-\Delta_k+1/r}$.

Let ${\bf y} = [y_0,y_1,\cdots,y_n]$. Consider
{\footnotesize
\begin{eqnarray*}
&  & \prob{\widetilde\Theta_k = \widetilde\theta, \widetilde\Theta_{k+1}=1\mid {\mathcal E}_{3,k}, {\bf
I}_{k}, [{\mathbf Q}_{k+1}, {\bf Y}_{k+1}]=[{\bf q}',{\bf y}]}\\ 
& 
\stackrel{(a)}{=} & \frac{ \prob{\widetilde\Theta_k = \widetilde\theta, \widetilde\Theta_{k+1}=1\mid {\mathcal E}_{3,k},  {\bf I}_{k}} \cdot \prob{{\bf Q}_{k+1} = {\bf q}' \mid \widetilde\Theta_k = \widetilde\theta, \widetilde\Theta_{k+1}=1, {\mathcal E}_{3,k},  {\bf I}_{k}}
}{{\mathsf P}({\bf Q}_{k+1}={\bf q}'|{\mathcal E}_{3,k}, {\bf I}_k)
\cdot f_{{\bf Y}_{k+1}\mid {\mathcal E}_{3,k}, {\bf I}_k, {\bf Q}_{k+1}}
({\bf y}|{\mathcal E}_{3,k}, {\bf I}_k,{\bf q}')
 } \nn
& & \cdot
f_{{\bf Y}_{k+1}\mid \widetilde\Theta_k, \widetilde\Theta_{k+1},
{\mathcal E}_{3,k}, {\bf I}_k, {\bf Q}_{k+1} }({\bf y}\mid
\widetilde\theta, 1, {\mathcal E}_{3,k}, {\bf q}', {\bf I}_k )\nn 
& \stackrel{(b)}{=} & \frac{ \prob{\widetilde\Theta_k = \widetilde\theta, \widetilde\Theta_{k+1}=1\mid {\mathcal E}_{3,k}, {\bf I}_{k}} \cdot
{\mathsf P}({\bf Q}_{k+1}={\bf q}'|{\mathcal E}_{3,k}, {\bf I}_k)
\cdot f_{\widetilde\theta}(y_0) \prod_{i=1}^n f_1(y_i)
}{
{\mathsf P}({\bf Q}_{k+1}={\bf q}'|{\mathcal E}_{3,k}, {\bf I}_k)
\cdot   f_{{\bf Y}_{k+1}\mid {\mathcal E}_{3,k}, {\bf I}_k, {\bf Q}_{k+1}}
({\bf y}|{\mathcal E}_{3,k}, {\bf I}_k,{\bf q}')
 } \nn
& \stackrel{(c)}{=} & \frac{ \prob{\widetilde\Theta_k = \widetilde\theta
\mid {\mathcal E}_{3,k}, {\bf I}_{k}} \cdot \prob{\widetilde\Theta_{k+1} = 1
\mid \widetilde\Theta_k = \widetilde\theta, {\mathcal E}_{3,k}, {\bf I}_{k}}  
\cdot f_{\widetilde\theta}(y_0) \prod_{i=1}^n f_1(y_i)}{
\sum_{\widetilde\theta'=0}^1 
\sum_{\widetilde\theta''=0}^1 
\prob{\widetilde\Theta_k= \widetilde\theta', 
\widetilde\Theta_{k+1}= \widetilde\theta'', 
\mid {\mathcal E}_{3,k}, {\bf I}_k, {\bf
Q}_{k+1}}\cdot 
f_{{\bf Y}_{k+1}\mid \widetilde\Theta_k,\widetilde\Theta_{k+1} {\mathcal
E}_{3,k}, {\bf I}_k, {\bf
Q}_{k+1}}(y|\widetilde\theta',\widetilde\theta'',{\mathcal E}_{3,k},
{\bf I}_k,{\bf q}')
}. 
\end{eqnarray*}
}
We explain the steps $(a), (b), (c)$ below.
\begin{itemize}
\item[(a)] By Bayes rule, for events $A,B,C,D,E,F$, we have 
 \[\prob{A B \mid C D E F} = 
           \frac{\prob{A B \mid C D} \prob{E \mid A B C D} \prob{F \mid A
		   B C D E }}{\prob{E\mid C D}\prob{F \mid C D E}}\] 
\item[(b)] ${\bf Q}_{k+1}$ is independent of $\widetilde\Theta_k$, 
$\widetilde\Theta_{k+1}$. Also, given $\widetilde\Theta_k$, 
${\bf Y}_{k+1,0}$ is independent of  $\widetilde\Theta_{k+1},
{\mathcal E}_{3,k}, {\bf I}_k, {\bf Q}_{k+1}$, and 
given $\widetilde\Theta_{k+1}$, 
${\bf Y}_{k+1,i}$ is independent of  $\widetilde\Theta_{k},
{\mathcal E}_{3,k}, {\bf I}_k, {\bf Q}_{k+1}$. It is to be noted that 
given the state of nature, the sensor measurements $Y_{k+1,0},Y_{k+1,1},\cdots,Y_{k+1,n}$
are conditionally independent.

\item[(c)] For any events $A,B$, and a continuous random variable $Y$,
the conditional density
           function $f_{Y|A}(y|A) = \prob{B \mid A}  f_{Y|AB}(y|AB) + \prob{B^c \mid A}
		   f_{Y|A B^c}(y|A B^c)$. Also,  given $\widetilde\Theta_k$, 
${\bf Y}_{k+1}$ is independent of 
${\mathcal E}_{3,k}, {\bf I}_k, {\bf Q}_{k+1}$
\end{itemize}
It is to be noted that the event ${\mathcal E}_{3,k}$ is $[{\bf I}_{k}, {\bf Q}_{k+1}]$
measurable, and hence, given  $[{\bf I}_{k}, {\bf Q}_{k+1}]$, 
$\widetilde\Theta_k$ is independent of ${\mathcal E}_{3,k}$.
Thus, in this case,
{\footnotesize
\begin{eqnarray*}
\Psi_{k+1} 
& = &
\frac{(1-\Psi_k)p_r f_0(y_0) \prod_{i=1}^n f_1(y_i)
+ \Psi_k f_1(y_0)\prod_{i=1}^n
f_1(y_i)}{
(1-\Psi_k)(1-p_r) f_0(y_0) \prod_{i=1}^n f_0(y_i)+ 
(1-\Psi_k) p_r    f_0(y_0) \prod_{i=1}^n f_1(y_i) + 
    \Psi_k         f_1(y_0)\prod_{i=1}^n f_1(y_i)}.
\end{eqnarray*}
}

Thus, 
using Lemma~\ref{lemma} (using Eqn.~\ref{eqn:psi-theta}), we have 
\begin{eqnarray*}
\Pi_{k+1} & = & \Psi_{k+1} + (1-\Psi_{k+1})(1 - (1-p)^{\Delta_{k+1}})
\nn
 & =: & \phi_{\Psi}(\Psi_{k},{\bf Z}_{k+1}) +
 \left(1-\phi_{\Psi}(\Psi_{k},{\bf Z}_{k+1})\right) (1 -
 (1-p)^{\Delta_{k+1}})\nn
 & = & \phi_{\Psi}\left(\frac{\Pi_{k}-
 (1-(1-p)^{\Delta_{k}})}{(1-p)^{\Delta_{k}}},{\bf Z}_{k+1}\right)\nn
 & &+
 \left(1-
  \phi_{\Psi}\left(\frac{\Pi_{k}-
 (1-(1-p)^{\Delta_{k}})}{(1-p)^{\Delta_{k}}},{\bf Z}_{k+1}\right) 
 \right) (1 -
 (1-p)^{\Delta_{k+1}})\nn
 & =: & \phi_{\Pi}\left([{\bf Q}_k,\Pi_k],{\bf Z}_{k+1}\right).
\end{eqnarray*}

\end{itemize}

\section*{Appendix -- VI}
\emph{Structure of $\tau^*$}
We use the following Lemma to show that $J^*({\bf q},\pi)$ is concave in
$\pi$. 
\begin{lemma}\label{Lemma01}
If $f:[0,1] \to \mathbb{R}$ is concave, then the function 
$h:[0,1] \to \mathbb{R}$ defined by 
\begin{align*} 
h(y) & = {\mathsf E}_{\phi({\bf x})}\left[
f\left(
\frac{y \cdot \phi_2({\bf x}) + (1-y)p_r \cdot \phi_1({\bf x})}
{y \cdot\phi_2({\bf x}) + (1-y)p_r\cdot\phi_1({\bf x}) + (1-y)(1-p_r)\cdot\phi_0({\bf x})}\right) \right]   
\end{align*}
is concave for each ${\bf x}$, where $\phi({\bf x}) = y \cdot\phi_2({\bf x}) + (1-y)p_r\cdot\phi_1({\bf x}) 
+ (1-y)(1-p_r)\cdot\phi_0({\bf x})$, $0 < p_r < 1$,
and  
 $\phi_0({\bf x})$, $\phi_1({\bf x})$, and $\phi_2({\bf x})$ are pdfs on ${\bf X}$.
\end{lemma}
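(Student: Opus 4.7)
My plan is to recognize the integrand of $h$ as a \emph{perspective transformation} of $f$ composed with an affine function of $y$, and then use the classical facts that (i) the perspective of a concave function is concave on the open half-space where its denominator is positive, (ii) concavity is preserved by affine pre-composition, and (iii) concavity is preserved by integration of a family of concave functions against a nonnegative reference measure.

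First I would introduce the affine (in $y$) quantities
\[
u(y, {\bf x}) \ := \ y\,\phi_2({\bf x}) + (1-y)\,p_r\,\phi_1({\bf x}), \qquad
v(y, {\bf x}) \ := \ u(y,{\bf x}) + (1-y)(1-p_r)\,\phi_0({\bf x}),
\]
so that $v(y,{\bf x})$ coincides with the mixture density $\phi({\bf x})$ against which the expectation is taken, and the argument of $f$ is exactly $u/v$. Then
\[
h(y) \ = \ \int v(y,{\bf x})\,f\!\left(\frac{u(y,{\bf x})}{v(y,{\bf x})}\right) d{\bf x},
\]
with $u(\cdot,{\bf x})$ and $v(\cdot,{\bf x})$ jointly affine in $y$ for each ${\bf x}$. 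For $y \in (0,1)$ we have $v(y,{\bf x}) > 0$ wherever at least one of $\phi_0, \phi_1, \phi_2$ is positive, and elsewhere we use the standard convention $v\,f(u/v)|_{v=0} := 0$, so the integrand is well-defined a.e.

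Next I would establish the perspective lemma: if $f:[0,1]\to\mathbb{R}$ is concave, then $\tilde f(u,v) := v\,f(u/v)$ is concave on $\{v > 0\}$. The cleanest proof is direct. Fix $(u_1,v_1), (u_2,v_2)$ with $v_i > 0$ and $t \in [0,1]$, and set $\alpha := t v_1/(t v_1 + (1-t) v_2) \in [0,1]$. A small algebraic identity gives
\[
\frac{t u_1 + (1-t) u_2}{t v_1 + (1-t) v_2} \ = \ \alpha\,\frac{u_1}{v_1} + (1-\alpha)\,\frac{u_2}{v_2},
\]
so by concavity of $f$ and nonnegativity of $t v_1 + (1-t) v_2$,
\[
\tilde f\bigl(t(u_1,v_1) + (1-t)(u_2,v_2)\bigr) \ \geq \ t\,\tilde f(u_1,v_1) + (1-t)\,\tilde f(u_2,v_2).
\]

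With the perspective lemma available, for each fixed ${\bf x}$ the map $y \mapsto \tilde f(u(y,{\bf x}), v(y,{\bf x}))$ is the composition of the concave $\tilde f$ with the affine map $y \mapsto (u(y,{\bf x}), v(y,{\bf x}))$, and is therefore concave in $y$ on $[0,1]$. Finally, integrating a pointwise-concave family against the nonnegative reference measure $d{\bf x}$ preserves concavity: for any $y_1, y_2 \in [0,1]$ and $t \in [0,1]$, applying the pointwise inequality inside the integral and using linearity of the integral gives $h(t y_1 + (1-t) y_2) \geq t\, h(y_1) + (1-t)\, h(y_2)$. The only nontrivial step is the perspective-preserves-concavity claim; the rest is bookkeeping, and the potential worry about the boundary $v = 0$ is handled by the convention above together with the fact that the set $\{{\bf x} : v(y,{\bf x}) = 0\}$ is $d{\bf x}$-null for $y \in (0,1)$.
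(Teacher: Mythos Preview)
Your argument is correct. The key observation --- that the expectation $\mathsf{E}_{\phi({\bf x})}[\,\cdot\,]$ is itself an integral weighted by the $y$-dependent mixture $\phi({\bf x})=v(y,{\bf x})$, so that $h(y)=\int v(y,{\bf x})\,f\bigl(u(y,{\bf x})/v(y,{\bf x})\bigr)\,d{\bf x}$ with $u,v$ affine in $y$ --- is exactly what makes the perspective trick work here. Your proof of the perspective lemma via the reweighting $\alpha = t v_1/(t v_1+(1-t)v_2)$ is standard and clean, and the remaining steps (affine pre-composition, integration against a nonnegative measure) are routine. One small check you left implicit but which holds: since $u\ge 0$ and $v=u+(1-y)(1-p_r)\phi_0\ge u$ for $y\in[0,1]$, the ratio $u/v$ indeed stays in $[0,1]$, so $f$ is being evaluated in its domain.

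As for comparison with the paper: the paper does not actually prove Lemma~\ref{Lemma01} here, but defers to Appendix~I of an earlier paper by the authors. So a direct line-by-line comparison is not possible from this document alone. That said, the perspective-of-a-concave-function argument you give is the canonical route for results of this type (it is, for instance, how one proves concavity of the Bellman operator in the classical Shiryaev problem), and it would be surprising if the referenced proof were substantively different.
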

\begin{proof}
See Appendix -- I of \cite{premkumar-kumar08sleep-wake-scheduling}.
\end{proof}
Note that in the finite $H$--horizon (truncated version of Eqn.~\ref{eqn:need-to-solve}), 
we note from {\em value iteration} that the cost--to--go function, 
for a given ${\bf q}$, $J_H^H([{\bf q},\pi]) = 1 - \pi$ is concave in $\pi$. 
Hence, by  Lemma~\ref{Lemma01}, we see that for any given  ${\bf q}$,
the cost--to--go functions  $J_{H-1}^H([{\bf q},\pi]),$ $J_{H-2}^H([{\bf q},\pi]),$ 
$\cdots, J_0^H([{\bf q},\pi])$ are concave in $\pi$.
Hence for $0 \le \lambda \le 1$, 
\begin{align*} 
J^*([{\bf q},\pi]) & = \lim_{H \rightarrow \infty} J^H_0([{\bf q},\pi])\\
J^*([{\bf q},\lambda \pi_1 + (1-\lambda)\pi_2]) & = \lim_{H \rightarrow \infty} J^H_0\Big([{\bf q},\lambda\pi_1+(1-\lambda)\pi_2]\Big)\\
& \ge \lim_{H \rightarrow \infty} \lambda J^H_0([{\bf q},\pi_1])+ \lim_{H \rightarrow \infty} (1-\lambda)J^H_0([{\bf q},\pi_2])\\
& = \lambda J^*([{\bf q},\pi_1])+(1-\lambda)J^*([{\bf q},\pi_2])
\end{align*}
It follows that for any given ${\bf q}$, $J^*([{\bf q},\pi])$ is concave in $\pi$.
$\hfill\qed$

Define the map $\xi:\mathcal{Q}\times[0,1] \to \mathbb{R}_+$ 
as $\xi([{\bf q},\pi]) := 1 - \pi$ and the map 
$\kappa:\mathcal{Q}\times[0,1] \to \mathbb{R}_+$, as 
$\kappa([{\bf q},\pi]) := c\cdot\pi + A_{J^*}([{\bf q},\pi]) 
  =   c\cdot\pi + {\sf E}\left[J^*\left([{\bf Q}_{k+1},
\phi_{\Pi}(\nu_k,{\bf Z}_{k+1})]\right)\bigg\arrowvert \nu_k = [{\bf
q},\pi]\right]$.
Note that $\xi([{\bf q},1]) =  0$, $\kappa([{\bf q}, 1]) =  c$, 
$\xi([{\bf q},0]) =   1$ and 
\begin{eqnarray*} 
\kappa([{\bf q},0]) 
 & = &  {\sf E}\left[J^*\left([{\bf Q}_{k+1}, \phi_{\Pi}(\nu_k,{\bf Z}_{k+1})]\right)\bigg\arrowvert \nu_k = [{\bf q},0]\right] \\
 & \stackrel{(2)}{=} &  {\sf E}\left[J^*\left([\phi_{\bf Q}({\bf Q}_{k},M_k), \phi_{\Pi}(\nu_k,{\bf Z}_{k+1})]\right)\bigg\arrowvert \nu_k = [{\bf q},0]\right] \\
 & = &  \sum_{m=0}^N {\sf E}\left[J^*\left([\phi_{\bf Q}({\bf q},m), \phi_{\Pi}(\nu_k,{\bf Z}_{k+1})]\right)\bigg\arrowvert M_k = m, \nu_k = [{\bf q},0]\right] \prob{M_k = m \bigg\arrowvert \nu_k = [{\bf q},0]}\\
 & \stackrel{(4)}{\leqslant} &  \sum_{m=0}^N J^*\left(\left[\phi_{\bf Q}({\bf q},m), 
{\sf E}\left[\phi_{\Pi}(\nu_k,{\bf Z}_{k+1})
\bigg\arrowvert M_k = m, \nu_k = [{\bf q},0]\right]
\right]\right)
 \prob{M_k = m \bigg\arrowvert \nu_k = [{\bf q},0]}\\
 & = &  \sum_{m=0}^N J^*\left([\phi_{\bf Q}({\bf q},m), p\right) \prob{M_k = m \bigg\arrowvert \nu_k = [{\bf q},0]}\\
 & \stackrel{(6)}{\leqslant} &  \sum_{m=0}^N \left(1-p\right) \cdot \prob{M_k = m \bigg\arrowvert \nu_k = [{\bf q},0]}\\
& = & 1 - p  \ < \ 1
\end{eqnarray*}
where in the above derivation, 
we use the evolution of ${\bf Q}_k$ in step 2, the Jensen's inequality (as
for any given ${\bf q}$, $J^*({\bf q},\pi)$ is concave in $\pi$) in step 4, and 
the inequality $J^*({\bf q},\pi) \leqslant 1-\pi$ in step 6.

Note that  $\kappa([{\bf q}, 1]) - \xi([{\bf q},1]) > 0$ and 
$\kappa([{\bf q}, 0]) - \xi([{\bf q},0]) <  0$. Also, for a fixed ${\bf q}$, 
the function $\kappa([{\bf q},\pi]) - \xi([{\bf q}, \pi])$ is concave in $\pi$.
Hence, by the {\em intermediate value theorem}, for a fixed ${\bf q}$, 
there exists $\gamma({\bf q}) \in [0,1]$ such that $\kappa([{\bf q},\gamma]) = 
\xi([{\bf q},\gamma])$.
This $\gamma$ is unique as $\kappa([{\bf q},\pi]) = \xi([{\bf q},\pi])$ for at most two values of $\pi$. 
If in the interval $[0,1]$, there are two distinct values of $\pi$ for 
which $\kappa([{\bf q},\pi]) = \xi([{\bf q}, \pi])$, then the signs of 
$\kappa([{\bf q},0]) - \xi([{\bf q}, 0])$ and
$\kappa([{\bf q},1]) - \xi([{\bf q}, 1])$
should be the same.
Hence, 
\begin{align*} 
\tau^* & = \inf\left\{k: \Pi_k \geqslant \gamma({\bf Q}_k)\right\}
\end{align*}
where the threshold $\gamma({\bf q})$ is given by
$c\cdot\gamma({\bf q}) + A_{J^*}([{\bf q},\gamma({\bf q})]) = 1 -
\gamma({\bf q})$. \qed

\bibliographystyle{acmtrans} 
\bibliography{premkumar-etal10optimising-seq-det}

\begin{thebibliography}{}

\bibitem[\protect\citeauthoryear{Aldosari and Moura}{Aldosari and
  Moura}{2004}]{stat-sig-proc.aldosari-moura04decentralized-detection}
{\sc Aldosari, S.~A.} {\sc and} {\sc Moura, J. M.~F.} 2004.
\newblock Detection in decentralized sensor networks.
\newblock In {\em Proceedings of ICASSP}. II:277--280.

\bibitem[\protect\citeauthoryear{Baccelli and Makowski}{Baccelli and
  Makowski}{1990}]{baccelli-makowski}
{\sc Baccelli, F.} {\sc and} {\sc Makowski, A.} 1990.
\newblock Synchronization in queueing systems.
\newblock In {\em Stochastic Analysis of Computer and Communication Systems},
  {H.~Takagi}, Ed. North--Holland, 57--131.

\bibitem[\protect\citeauthoryear{Bertsekas}{Bertsekas}{2000a}]{books.bertsekas%
00a}
{\sc Bertsekas, D.~P.} 2000a.
\newblock {\em Dynamic Programming and Optimal Control\/}, Second ed. Vol.~I.
\newblock Athena Scientific.

\bibitem[\protect\citeauthoryear{Bertsekas}{Bertsekas}{2000b}]{books.bertsekas%
00b}
{\sc Bertsekas, D.~P.} 2000b.
\newblock {\em Dynamic Programming and Optimal Control\/}, Second ed. Vol.~II.
\newblock Athena Scientific.

\bibitem[\protect\citeauthoryear{{Honeywell Inc \hspace{-3mm}}}{{Honeywell Inc
  \hspace{-3mm}}}{}]{honeywell-site}
{\sc {Honeywell Inc \hspace{-3mm}}}.
\newblock
  \url{http://hpsweb.honeywell.com/Cultures/en-US/Products/Wireless/SecondGene%
rationWireless/default.htm}.

\bibitem[\protect\citeauthoryear{{ISA \hspace{-2mm}}}{{ISA
  \hspace{-2mm}}}{}]{isa-site}
{\sc {ISA \hspace{-2mm}}}.
\newblock
  \url{http://www.isa.org/Content/NavigationMenu/Technical_Information/ASCI/IS%
A100_Wireless_Compliance_Institute/ISA100_Wireless_Compliance_Institute.htm}.

\bibitem[\protect\citeauthoryear{Kumar, Manjunath, and Kuri}{Kumar
  et~al\mbox{.}}{2004}]{books.kmk04analytical}
{\sc Kumar, A.}, {\sc Manjunath, D.}, {\sc and} {\sc Kuri, J.} 2004.
\newblock {\em Communication Networking: An Analytical Approach}.
\newblock Morgan-Kaufmann (an imprint of Elsevier), San Francisco.

\bibitem[\protect\citeauthoryear{Kumar, Manjunath, and Kuri}{Kumar
  et~al\mbox{.}}{2008}]{books.kmk08wireless}
{\sc Kumar, A.}, {\sc Manjunath, D.}, {\sc and} {\sc Kuri, J.} 2008.
\newblock {\em Wireless Networking}.
\newblock Morgan-Kaufmann (an imprint of Elsevier), San Francisco.

\bibitem[\protect\citeauthoryear{Niu and Varshney}{Niu and
  Varshney}{2005}]{stat-sig-proc.niu-varshney05large-wsn}
{\sc Niu, R.} {\sc and} {\sc Varshney, P.~K.} 2005.
\newblock Distributed detection and fusion in a large wireless sensor network
  of random size.
\newblock {\em EURASIP Journal on Wireless Communications and
  Networking\/}~{\em 4,\/}~7, 462--472.

\bibitem[\protect\citeauthoryear{Premkumar and Kumar}{Premkumar and
  Kumar}{2008}]{premkumar-kumar08sleep-wake-scheduling}
{\sc Premkumar, K.} {\sc and} {\sc Kumar, A.} 2008.
\newblock Optimal sleep-wake scheduling for quickest intrusion detection using
  wireless sensor networks.
\newblock In {\em Proc.\ IEEE Infocom}. AZ, USA.

\bibitem[\protect\citeauthoryear{Premkumar, Kumar, and Kuri}{Premkumar
  et~al\mbox{.}}{2009}]{premkumar-etal09distributed-det}
{\sc Premkumar, K.}, {\sc Kumar, A.}, {\sc and} {\sc Kuri, J.} 2009.
\newblock Distributed detection and localization of events in large ad hoc
  wireless sensor networks.
\newblock In {\em Proc.\ $47^{th}$ Annual Allerton Conference on Communication,
  Control, and Computing}. IL, USA.

\bibitem[\protect\citeauthoryear{Rajasegarar, Leckie, and
  Palaniswami}{Rajasegarar
  et~al\mbox{.}}{2008}]{rajasegarar-etal08anamaly-detection}
{\sc Rajasegarar, S.}, {\sc Leckie, C.}, {\sc and} {\sc Palaniswami, M.} 2008.
\newblock Anomaly detection in wireless sensor networks.
\newblock {\em IEEE Wireless Communications\/}~{\em 15,\/}~4 (August), 34--40.

\bibitem[\protect\citeauthoryear{Shiryaev}{Shiryaev}{1978}]{shiryayev}
{\sc Shiryaev, A.~N.} 1978.
\newblock {\em Optimal Stopping Rules}.
\newblock Springer, New York.

\bibitem[\protect\citeauthoryear{Singh, Kumar, and Ameer}{Singh
  et~al\mbox{.}}{2008}]{chandramani-thesis}
{\sc Singh, C.~K.}, {\sc Kumar, A.}, {\sc and} {\sc Ameer, P.~M.} 2008.
\newblock {Performance evaluation of an IEEE 802.15.4 sensor network with star
  topology}.
\newblock {\em Wireless Networks\/}~{\em 14,\/}~4 (August), 543--568.

\bibitem[\protect\citeauthoryear{Tartakovsky and Veeravalli}{Tartakovsky and
  Veeravalli}{2005}]{tartakovsky-veeravalli05general-asymptotic-quickest-chang%
e}
{\sc Tartakovsky, A.~G.} {\sc and} {\sc Veeravalli, V.~V.} 2005.
\newblock General asymptotic bayesian theory of quickest change detection.
\newblock {\em SIAM Theory of Probability and its Applications\/}~{\em
  49,\/}~3, 458--497.

\bibitem[\protect\citeauthoryear{Tenny and Sandell}{Tenny and
  Sandell}{1981}]{tennysandell81detection-distributed}
{\sc Tenny, R.~R.} {\sc and} {\sc Sandell, N.~R.} 1981.
\newblock Detection with distributed sensors.
\newblock {\em IEEE Transactions on Aerospace and Electronic Systems\/}~{\em
  17}, 501--510.

\bibitem[\protect\citeauthoryear{Veeravalli}{Veeravalli}{2001}]{veeravalli01de%
centralized-quickest}
{\sc Veeravalli, V.~V.} 2001.
\newblock Decentralized quickest change detection.
\newblock {\em IEEE Transactions on Information theory\/}~{\em 47,\/}~4 (May),
  1657--1665.

\end{thebibliography}

\begin{received}
Received March 2009;
revised December 2009 and June 2010; accepted Month Year
\end{received}

\end{document}